\def\@copyrightspace{\relax}
\newcommand{\FFF}{\vspace*{\smallskipamount}}
\begin{document}

\title{\textsc{Panda}: Partitioned Data Security on Outsourced Sensitive and Non-sensitive Data}\titlenote{\textbf{A preliminary version of this paper was accepted and presented in IEEE ICDE 2019~\cite{DBLP:conf/icde/Mehrotra0UM19}.} \newline
\textbf{This version has been accepted in ACM Transactions on Management Information Systems.} The final published version of this paper may differ
from this accepted version. {\newline This material is based on research sponsored by DARPA under agreement number FA8750-16-2-0021. The U.S. Government is authorized to reproduce and distribute reprints for Governmental purposes notwithstanding any copyright notation thereon. The views and conclusions contained herein are those of the authors and should not be interpreted as necessarily representing the official policies or endorsements, either expressed or implied, of DARPA or the U.S. Government. This work is partially supported by NSF grants 1527536 and 1545071.}}

\author{Sharad Mehrotra$^1$, Shantanu Sharma$^1$, Jeffrey D. Ullman$^2$, \\ Dhrubajyoti Ghosh$^1$, Peeyush Gupta$^1$}

\affiliation{\\
$^1$University of California, Irvine, USA. $^2$Stanford University, USA.}

\begin{abstract}
Despite extensive research on cryptography, secure and efficient query processing over outsourced data remains an open challenge. This paper continues along with the emerging trend in secure data processing that recognizes that the entire dataset may not be sensitive, and hence, non-sensitivity of data can be exploited to overcome limitations of existing encryption-based approaches. We, first, provide a new security definition, entitled \emph{partitioned data security} for guaranteeing  that the joint processing of non-sensitive data (in cleartext) and sensitive data (in encrypted form) does not lead to any leakage. Then, this paper proposes a new secure approach, entitled \emph{query binning} (QB) that allows secure execution of queries over non-sensitive and sensitive parts of the data. QB maps a query to a set of queries over the sensitive and non-sensitive data in a way that no leakage will occur due to the joint processing over sensitive and non-sensitive data. {In particular, we propose secure algorithms for selection, range, and join queries to be executed over encrypted sensitive and cleartext non-sensitive datasets. }Interestingly, in addition to improving performance, we show that QB actually strengthens the security of the underlying cryptographic technique by preventing size, frequency-count, and workload-skew attacks.
 \end{abstract}

%

\maketitle

{\color{black}
\section{Introduction}
\label{sec:introduction}

{
The past two decades have witnessed the emergence of several public clouds (\textit{e}.\textit{g}., Amazon Web Services, Google APP Engine, and Microsoft Azure) as the dominant computation, storage, and data management platform. Many small to medium size organizations, including some large organizations such as Netflix, have adopted the cloud model shifting their data management task to the cloud. The cloud offers numerous advantages including the economy of scale, low barriers to entry, limitless scalability, and a pay-as-you-go model.
While benefits abound, a key challenge from data owners' perspective -- that of ``losing'' control over one's data -- still plagues the cloud model. In addition, the threat of ``insider attacks'' is also realistic, loss of control can lead to significant security, privacy, and confidentiality concerns. Such concerns are not a new revelation --- indeed, they were identified as a key impediment for organizations adopting the {\em database-as-as-service} model in early work on data outsourcing \cite{DBLP:conf/sigmod/HacigumusILM02}. Since then, the security/confidentiality challenge has been extensively studied in both the cryptography and database literature. Existing work on data security can be broadly categorized into the following three categories:
\begin{enumerate}[noitemsep,nolistsep,leftmargin=.01cm]
  \item \textbf{Encryption based techniques.} These techniques include order-preserving encryption (OPE)~\cite{DBLP:conf/sigmod/AgrawalKSX04}, deterministic encryption~\cite{DBLP:conf/crypto/BellareBO07}, non-deterministic encryption~\cite{DBLP:journals/jcss/GoldwasserM84}, homomorphic encryption~\cite{gentry2009fully}, bucketization~\cite{DBLP:conf/sigmod/HacigumusILM02}, searchable encryption~\cite{DBLP:conf/sp/SongWP00,DBLP:journals/jcs/CurtmolaGKO11},  and distributed searchable symmetric encryption (DSSE)~\cite{DBLP:conf/ctrsa/IshaiKLO16}). In addition, encryption-based techniques resulted in the following secure database systems:
CryptDB~\cite{DBLP:journals/cacm/PopaRZB12}, Monomi~\cite{popa-monomi}, TrustedDB~\cite{DBLP:journals/tkde/BajajS14}, CorrectDB~\cite{DBLP:journals/pvldb/BajajS13}, SDB~\cite{DBLP:conf/sigmod/WongKCLY14}, ZeroDB~\cite{DBLP:journals/corr/EgorovW16}, L-EncDB~\cite{DBLP:journals/kbs/0002LCXTW15}, MrCrypt~\cite{DBLP:conf/oopsla/TetaliLMM13}, Crypsis~\cite{crypsis}, Arx~\cite{arx-popa-2017}. Likewise,  Cypherbase~\cite{DBLP:conf/cidr/ArasuBEKKRV13}, Microsoft Always Encrypted, Oracle 12c, Amazon Aurora~\cite{aurora}, and MariaDB~\cite{mariadb} are industrial secure encrypted databases.

  \item \textbf{Secret-sharing (SS)~\cite{DBLP:journals/cacm/Shamir79} based techniques.} Examples of which include distributed point function~\cite{DBLP:conf/eurocrypt/GilboaI14}, function secret sharing~\cite{DBLP:conf/eurocrypt/BoyleGI15},  accumulating-automata~\cite{DBLP:conf/ccs/DolevGL15,DBLP:conf/dbsec/DolevL016}, secure secret-shared MapReduce~\cite{DBLP:conf/dbsec/DolevL016}, \textsc{Obscure}~\cite{DBLP:journals/pvldb/GuptaLMP0A19}, and  others~\cite{DBLP:journals/isci/EmekciMAA14,DBLP:conf/fc/LueksG15,DBLP:journals/iacr/LiMD14}.
In this category, two emerging industrial systems are: {\em Pulsar}\footnote{\url{https://www.stealthsoftwareinc.com/}} based on function secret-sharing and {\em Jana}~\cite{jana} based on non-deterministic, order-preserving encryption and secret-sharing.

  \item \textbf{Trusted hardware-based techniques.} They are either based on a secure coprocessor or Intel Software Guard Extensions (SGX)~\cite{sgx} that allow decrypting data in a secure area at the cloud and perform the computation on decrypted data. However, the secure coprocessor reveals access-patterns. Cipherbase~\cite{DBLP:conf/fpl/ArasuEKKRV13,DBLP:conf/cidr/ArasuBEKKRV13}
      CorrectDB~\cite{DBLP:journals/pvldb/BajajS13},
      VC3~\cite{DBLP:conf/sp/SchusterCFGPMR15}, Opaque~\cite{DBLP:journals/iacr/WangDDB06}, HardIDX~\cite{DBLP:conf/dbsec/FuhryBB0KS17},
      EnclaveDB~\cite{DBLP:conf/sp/PriebeVC18},
      Oblix~\cite{DBLP:conf/sp/MishraPCCP18},
      Hermetic~\cite{xuhermetic}, and
      EncDBDB~\cite{DBLP:journals/corr/abs-2002-05097} are secure hardware-based systems.
\end{enumerate}

\medskip
Despite significant progress, a cryptographic approach that is both \emph{secure} (\textit{i}.\textit{e}., no leakage of sensitive data to the adversary) and \emph{efficient} (in terms of time) simultaneously has proved to be very challenging. Existing solutions suffer from the following limitations:

\begin{itemize}[noitemsep,nolistsep,leftmargin=.01cm]
  \item \textbf{Non-scalability.} Cryptographic approaches that prevent leakage, \textit{e}.\textit{g}., fully homomorphic encryption coupled with oblivious-RAM (ORAM)~\cite{DBLP:conf/stoc/Goldreich87,DBLP:conf/stoc/Ostrovsky90} or secret-sharing, simply do not scale to large data sets and complex queries. Most of the above-mentioned techniques do not work well, when deployed on a large-scale dataset, due to the high overheads of the techniques. For example,
      executing a selection query on 1M TPC-H LineItem table took (\textit{i}) 22 seconds using secret-sharing based techniques~\cite{DBLP:conf/dbsec/DolevL016},
      (\textit{ii}) 797 seconds using multiparty computation-based industrial database system, namely Jana~\cite{jana}, and
      (\textit{iii}) 13 seconds using Intel SGX-based Opaque~\cite{opaque}, while executing the same query on cleartext data took only 0.0002 seconds.

  \item \textbf{Unclear security properties.} Systems such as CryptDB have tried to take a more practical approach by allowing users to explore the tradeoffs between the system functionality and the security it offers.
      Unfortunately, precisely characterizing the security offered by such systems given the underlying cryptographic approaches has turned out to be extremely difficult. For instance,~\cite{DBLP:conf/ccs/NaveedKW15,DBLP:conf/ccs/KellarisKNO16} show that when order-preserving and deterministic encryption techniques are used together, on a dataset (in which the entropy of the values is not high enough), an attacker might be able to construct the entire plaintext by doing a frequency analysis of the encrypted data.

\item \textbf{Vulnerability to other security attacks.} Many of the above-mentioned cryptographic techniques/systems are also
     susceptible to the following attacks:

\begin{enumerate}[noitemsep,nolistsep,leftmargin=0.2in]
\item\emph{Output-size attack}: An adversary having some background knowledge can deduce the full/partial outputs by simply observing the output sizes~\cite{opaque}. All the above techniques/systems, except bucketization~\cite{DBLP:conf/icde/HacigumusMI02}, are prone to output-size attacks. 

\item\emph{Frequency attack}: {An adversary can deduce how many tuples have an identical value~\cite{DBLP:conf/ccs/NaveedKW15} based on the output of the query. Order-preserving encryption~\cite{DBLP:conf/sigmod/AgrawalKSX04}, searchable encryption~\cite{DBLP:conf/sp/SongWP00}, and secure hardware-based techniques~\cite{DBLP:conf/fpl/ArasuEKKRV13,DBLP:conf/cidr/ArasuBEKKRV13,DBLP:journals/pvldb/BajajS13,DBLP:conf/sp/SchusterCFGPMR15,DBLP:conf/uss/DinhSCOZ15,To-2016,opaque,DBLP:conf/sp/PriebeVC18} are prone to frequency-count attacks, during query execution. In addition, deterministic encryption~\cite{DBLP:conf/crypto/BellareBO07} reveals the frequency of a word, even without executing a query.}

\item\emph{Access-pattern attack}: An adversary can know the addresses of encrypted tuples that satisfy the query~\cite{DBLP:journals/jacm/ChorKGS98}. Private information retrieval (PIR)~\cite{DBLP:journals/jacm/ChorKGS98},
oblivious-RAM (ORAM)~\cite{DBLP:conf/stoc/Goldreich87}, oblivious transfers~\cite{DBLP:journals/iacr/Rabin05,DBLP:conf/istcs/IshaiK97},  and secret-sharing-based techniques~\cite{DBLP:journals/cacm/Shamir79,DBLP:conf/eurocrypt/GilboaI14,DBLP:conf/eurocrypt/BoyleGI15,DBLP:conf/tcc/KomargodskiZ16,DBLP:conf/ccs/DolevGL15,DBLP:conf/nsdi/WangYGVZ17,DBLP:journals/isci/EmekciMAA14,DBLP:conf/fc/LueksG15} are not prone to access-pattern attacks.

\item\emph{Workload-skew attack}: An adversary, having the knowledge of frequent selection queries by observing many queries, can estimate which encrypted tuples potentially satisfy the frequent selection queries. Except access-pattern-hiding techniques, all the cryptographic techniques are prone to workload-skew attack (we will discuss workload-skew attack in detail in \S\ref{subsec:Handling Workload-skew Attack}).
\end{enumerate}

To the best of our knowledge, there is no cryptographic technique that prevents all the four attacks on a \emph{skewed dataset}. 

\end{itemize}


\smallskip
\noindent\textbf{Our contribution.} While the race to develop cryptographic solutions that (\textit{i}) are efficient, (\textit{ii}) support complex SQL queries, (\textit{iii}) offer provable security from the application's perspective is ongoing, this paper departs from the above well-trodden path by exploring a \emph{radically different (but complementary) approach to secure data processing in the cloud}. Our approach is intended for situations when only part of the data is sensitive, while the remainder (that may consist of the majority) is non-sensitive.

In particular, we propose a {\bf partitioned computation model} that exploits such a classification of data into sensitive/non-sensitive subsets to develop efficient data processing solutions with {\bf provable security guarantees}. In partitioned computing, sensitive data is outsourced in an appropriate encrypted form, while non-sensitive data can be outsourced in cleartext form. Partitioned computing, potentially, provides significant benefits by (\textit{i}) avoiding (expensive) cryptographic operations on non-sensitive data, and (\textit{ii}) allowing query processing on non-sensitive data to exploit indices. Such indices (that cannot be easily supported alongside encryption-based mechanisms in a non-interactive setting) are a key mechanism for efficient query processing in traditional database systems.\footnote{The sensitive and non-sensitive data classification, which is common in industries for secure computing~\cite{url2,url3} and done via appropriately using existing techniques surveyed in~\cite{DBLP:journals/sigkdd/FarkasJ02}; for example, (\textit{i}) inference detection using graph-based semantic data modeling~\cite{DBLP:conf/sp/Hinke88}, (\textit{ii}) user-defined relationships between sensitive and non-sensitive data~\cite{smith1990modeling}, (\textit{iii}) constraints-based mechanisms, (\textit{iv}) sensitive patterns hiding using sanitization matrix~\cite{lee2004hiding}, and (\textit{v}) common knowledge-based association rules~\cite{DBLP:conf/dasfaa/LiSY07}. However, it is important to mention here that non-sensitive data can, over time, become sensitive and/or lead to inferences about sensitive data. This is an inevitable risk of the approaches that exploit sensitive data classification. Note that all the above-mentioned work based on sensitive/non-sensitive classification make a similar assumption.}

While partitioned computing offers new opportunities for efficient and secure data processing on the cloud, it raises a new security challenge (\S\ref{sec:Preliminaries_and_Problem_Statements}) -- of leakage, due to the joint processing of the encrypted (sensitive) dataset and of the plaintext (non-sensitive) datasets. We refer to this security challenge as {\bf partitioned data security} challenge. Our work will formalize the security definition that will drive the development of the proposed prototype, entitled \textsc{Panda} (to refer to \emph{PA}rtitio\emph{N D}at\emph{A}). In \textsc{Panda}, we develop a query processing technique, entitled \emph{query binning} (QB) to prevent leakage of sensitive data due to the simultaneous execution of sensitive and non-sensitive data. In addition, \textsc{Panda} extends QB to answer   selection, range, and join queries. We will, also, show two interesting effects of using \textsc{Panda}:
\begin{enumerate}[nolistsep,noitemsep,leftmargin=0.01in]
  \item By avoiding cryptographic processing on non-sensitive data, \emph{the joint cost of communication and computation of \textsc{Panda}'s QB is significantly less than the computation cost of a strongly secure cryptographic technique}\footnote{QB trades off increased communication costs for executing queries, while reducing very significantly cryptographic operations. This tradeoff significantly improves performance, especially, when using cryptographic mechanisms, \textit{e}.\textit{g}., fully homomorphic encryption that takes several seconds to compute a single operation~\cite{DBLP:journals/csur/MartinsSM17}, secret-sharing-based techniques that take a few seconds~\cite{DBLP:journals/isci/EmekciMAA14},
or techniques such as bilinear maps that take over 1.5 hours to perform joins on a dataset of size less than 10MB~\cite{DBLP:journals/tods/PangD14}. When considering such cryptography, increased communication overheads are fully compensated by the savings. A similar observation, albeit in a very different context was also observed in~\cite{DBLP:conf/sigmod/OktayMKK15} in the context of MapReduce, where overshuffling to prevent the adversary to infer sensitive keys in the context of hybrid cloud was shown to be significantly better compared to private side operations.} (\textit{e}.\textit{g}., homomorphic encryptions, or secret-sharing-based technique~\cite{DBLP:conf/dbsec/DolevL016,DBLP:journals/pvldb/GuptaLMP0A19} that hides access-patterns --- the identity of the tuple satisfying the query) on the entire encrypted data; and hence, QB improves the performance of strong cryptographic techniques over a large-scale dataset (\S\ref{sec:Experimental Validation}).

\item \textsc{Panda}'s QB  provides \emph{enhanced security by preventing several attacks such as output size, frequency-count, and workload-skew attacks}, even when the underlying cryptographic technique is susceptible to such attacks (\S\ref{subsec:Enhancing security-levels of indexable techniques}).

\end{enumerate}



\smallskip
\noindent\textbf{Outline.} The primary contributions of this paper and its online are as follows:
\begin{enumerate}[nolistsep,noitemsep,leftmargin=0.01in]
\item The partition computation model and inference attack due to the joint processing over sensitive and non-sensitive data (\S\ref{sec:Preliminaries_and_Problem_Statements}).
  \item A formal definition of \emph{partitioned data security} when jointly processing sensitive and non-sensitive data (\S\ref{subsec:Security Definition and Correctness}).
  \item  An efficient QB approach (\S\ref{sec:Query Bucketization}) that guarantees partitioned data security, supporting cloud-side-indexes, and that can be built on top of any cryptographic technique.

 \item Methods to deal with join queries, range queries, insert operation, and workload-skew attacks (\S\ref{sec:Other Operations}).
  \item  A weak cryptographic technique (\textit{e}.\textit{g}., cloud-side indexable techniques~\cite{DBLP:conf/dbsec/ShmueliWEG05,arx-popa-2017,DBLP:conf/sdmw/EloviciWSG04}) becomes secure and efficient when mixed with QB (\S\ref{subsec:Enhancing security-levels of indexable techniques}).
\item Experimental evaluation of \textsc{Panda} under different settings and queries (\S\ref{sec:Experimental Validation}).
\end{enumerate}

\smallskip
\noindent\textbf{Conference version.} A preliminary version of this paper was accepted and presented in IEEE ICDE~\cite{DBLP:conf/icde/Mehrotra0UM19}. The conference version includes the following additional concept, which is not provided in this version, due to space restriction: an analytical model to show when QB works better compared to a pure cryptographic technique (\S~V.A of~\cite{DBLP:conf/icde/Mehrotra0UM19}).}


}

\bgroup
\def\arraystretch{1.1}
\begin{table}[h]
\centering
\centering
\begin{tabular}{|l|p{12cm}|}
\hline \textbf{Notations} & \textbf{Meaning} \\ \hline\hline
$|S|$ & Number of sensitive data values \\\hline
$|\mathit{NS}|$ & Number of non-sensitive data values\\\hline
$R_s$ & Sensitive parts of a relation $R$\\\hline
$R_{\mathit{ns}}$ & Non-sensitive parts of a relation $R$\\\hline
$s_i$ and $\mathit{ns}_j$  & $i^{\mathit{th}}$ sensitive and $j^{\mathit{th}}$ non-sensitive values \\\hline
$\mathit{SB}$ & The number of sensitive bins \\\hline
$\mathit{SB}_i$ & $i^{\mathit{th}}$ sensitive bin \\\hline
$|\mathit{SB}|=y$ & Sensitive values in a sensitive bin or the size of a sensitive bin\\\hline
$\mathit{NSB}$ & The number of non-sensitive bins \\\hline
$\mathit{NSB}_i$ & $i^{\mathit{th}}$ non-sensitive bin \\\hline
$|\mathit{NSB}|=x$ & Non-sensitive values in a non-sensitive bin or the size of a non-sensitive bin\\\hline
$q(w)$ & A query, $q$, for a predicate $w$\\\hline
$q(W_{\mathit{ns}})(R_{\mathit{ns}})$ & A query, $q$, for a set, $W_{\mathit{ns}}$, of predicates in cleartext over $R_{\mathit{ns}}$\\\hline
$q(W_s)(R_s)$ & A query, $q$, for a set, $W_s$, of predicates in encrypted form over $R_s$\\\hline
$q(W)(R_s,R_{\mathit{ns}})[A]$ & A query, $q$, for a set, $W$, of values, searching on the attribute, $A$, of the relations $R_s$ and $R_{\mathit{ns}}$, where $W=W_s \cup W_{\mathit{ns}} $ \\\hline
$E(t_i)$ & $i^{\mathit{th}}$ encrypted tuple\\\hline
\end{tabular}
\caption{Notations used in the paper.}
\label{tab:notations}
\end{table}
\egroup

\section{Partitioned Computation}
\label{sec:Preliminaries_and_Problem_Statements}
In this section, we first define more precisely what we mean by partitioned computing, illustrate how such a computation can leak information due to the joint processing of sensitive and non-sensitive data, discuss the corresponding security definition, and finally, discuss the system and adversarial models under which we will develop our solutions. Table~\ref{tab:notations} enlists notations used in this paper.

\subsection{The Partitioned Computation Model}
We assume the following two entities in our model:
\begin{enumerate}[noitemsep,leftmargin=0.01in]
  \item \emph{\textbf{A trusted database (DB) owner}} who divides a relation $R$ having attributes, say $A_1, A_2, \ldots, A_n$, into the following two relations based on row-level data sensitivity: $R_s$ and $R_{\mathit{ns}}$ containing all sensitive and non-sensitive tuples, respectively. 
      The DB owner outsources the relation $R_{\mathit{ns}}$ to a public cloud. The tuples of the relation $R_s$ are encrypted using any existing non-deterministic encryption~\cite{DBLP:journals/jcss/GoldwasserM84} mechanism before outsourcing to the same public cloud.

       {In our setting, the DB owner has to store metadata such as searchable values and their frequency counts, which will be used for appropriate query formulation using the proposed query binning (QB) algorithm, (on receiving a query from a user). The size of metadata is smaller than the size of the original data.
      The DB owner is assumed to have sufficient storage for such metadata, and also computational capabilities to execute QB algorithm, encryption (of queries keywords) and decryption (of the results).}

      { {\textbf{Note.} The tasks at the DB owner for metadata data storage and QB algorithm execution (which requires to execute Algorithm~\ref{alg:bin_creation} and Algorithm~\ref{alg:bin_retrieval}, will be explained in \S\ref{sec:Query Bucketization}) could, potentially, be executed at the cloud, if the cloud supports a trusted hardware, \textit{e}.\textit{g}., SGX. However, using SGX for QB is nontrivial, since, now, the entire dataset needs to be encrypted and send to SGX that will decrypt and execute QB. In contrast, the task of rewriting the queries (sent by the DB owner) based on bin information, \textit{i}.\textit{e}., Algorithm~\ref{alg:bin_retrieval},  is relatively simple and can be done at SGX hosted by the cloud.}}

      { {In addition, proxy reencryption in our setting is complex, since the results/answers to the query may have additional outputs, which will need to be filtered out at the trusted side. Thus, solutions based on proxy reencryption will not work in our settings.
      }}


  \item \emph{\textbf{The untrusted public cloud}} that stores the databases, executes queries, and provides answers to the DB owner.
\end{enumerate}

\smallskip
\noindent
\textbf{Query execution.} Let us consider a query $q$ over the relation $R$, denoted by $q(R)$. A partitioned computation strategy splits the execution of $q$ into two independent subqueries: $q(R_s)$: a query to be executed on the encrypted sensitive relation $R_s$, and $q(R_{\mathit{ns}})$: a query to be executed on the non-sensitive relation $R_{\mathit{ns}}$. The final result is computed (using a query $q_{\mathit{merge}}$) by appropriately merging the results of the two subqueries at the DB owner side. In particular, the query $q$ on a relation $R$ is partitioned, as follows:
$$q(R) = q_{\mathit{merge}}\Big(q(R_s), q(R_{\mathit{ns}}) \Big)$$
\noindent Let us illustrate partitioned computations through an example.

\begin{figure}[h]
\centering
\begin{tabular}{|l||l|l|l|l|l|l|l|}
    \hline
      &\textbf{EId} & \textbf{FirstName} & \textbf{LastName} & \textbf{SSN}  & \textbf{Office\#} & \textbf{Department} \\ \hline\hline
    $t_1$& E101 & Adam & Smith       & 111&1 & Defense  \\ \hline
    $t_2$& E259 & John & Williams   & 222&2 & Design   \\ \hline
    $t_3$& E199 & Eve  & Smith      & 333&2  & Design   \\ \hline
    $t_4$& E259 & John & Williams   & 222&6 & Defense   \\ \hline
    $t_5$& E152 & Clark & Cook   & 444&1 & Defense   \\ \hline
    $t_6$& E254 & David & Watts   & 555&4 & Design   \\ \hline
    $t_7$& E159 & Lisa & Ross   & 666&2 & Defense   \\ \hline
    $t_8$& E152 & Clark & Cook   & 444&3 & Design   \\ \hline

  \end{tabular}
  \caption{A relation: \textit{Employee}.}
  \label{fig:employee relation}
\end{figure}

\noindent\textbf{Example 1.} Consider an \emph{Employee} relation, see Figure~\ref{fig:employee relation}. Note that the notation $t_i$ ($1\leq i\leq 8$) is not an attribute of the relation; we used this to indicate the $i^{\mathit{th}}$ tuple. In this relation, the attribute {\em SSN} is sensitive, and furthermore, all tuples of employees for the {\em Department} $=$ ``\texttt{Defense}'' are sensitive. In such a case, the \emph{Employee} relation may be stored as the following three relations: (\textit{i}) \emph{Employee1} with attributes {\em EId} and {\em SSN} (see Figure~\ref{fig:employee1 relation}); (\textit{ii}) \emph{Employee2} with attributes {\em EId}, {\em FirstName}, {\em LastName}, {\em Office\#}, and {\em Department}, where {\em Department} $=$ ``\texttt{Defense}'' (see Figure~\ref{fig:employee2 relation}); and (\textit{iii}) \emph{Employee3} with attributes {\em EId}, {\em FirstName}, {\em LastName}, {\em Office\#}, and {\em Department}, where {\em Department} $<>$ ``\texttt{Defense}'' (see Figure~\ref{fig:employee3 relation}). Since the relations \emph{Employee1} and \emph{Employee2} (Figures~\ref{fig:employee1 relation} and~\ref{fig:employee2 relation}) contain only sensitive data, these two relations are encrypted before outsourcing, while \emph{Employee3} (Figure~\ref{fig:employee3 relation}), which contains only non-sensitive data, is outsourced in cleartext. We assume that the sensitive data is strongly encrypted such that the property of \emph{ciphertext indistinguishability} (\textit{i}.\textit{e}., an adversary cannot distinguish pairs of ciphertexts) is achieved. Thus, the two occurrences of \texttt{E152} have two different ciphertexts.

\begin{figure}[h]
\begin{center}
  \begin{minipage}[t]{.35\linewidth}
  \centering
\begin{tabular}{|l|l|}\hline
          \textbf{EId}  & \textbf{SSN}   \\\hline\hline
    E101 & 111   \\\hline
    E259 & 222   \\\hline
    E199 & 333   \\\hline
     E152 & 444   \\\hline
     E254 & 555   \\\hline
     E159 & 666   \\\hline
  \end{tabular}
  \subcaption{A sensitive relation: \textit{Employee1}.}
    \label{fig:employee1 relation}
  \end{minipage}
    \begin{minipage}[t]{.47\linewidth}
  \centering
\centering
\begin{tabular}{|l||l|l|l|l|l|l|}\hline
         & \textbf{EId}  & \textbf{FirstName} & \textbf{LastName} &  \textbf{Office\#} & \textbf{Department} \\ \hline\hline
    $t_1$& E101 & Adam & Smith       & 1 & Defense  \\ \hline
    $t_4$& E259 & John & Williams   & 6 & Defense   \\ \hline
    $t_5$& E152 & Clark & Cook   &1 & Defense   \\ \hline
    $t_7$& E159 & Lisa & Ross   &2 & Defense   \\ \hline
  \end{tabular}
  \subcaption{A sensitive relation: \textit{Employee2}.}
\label{fig:employee2 relation}
  \end{minipage}

  \begin{minipage}[t]{.47\linewidth}
  \centering
\centering
\begin{tabular}{|l||l|l|l|l|l|l|}
    \hline
    & \textbf{EId}  & \textbf{FirstName} & \textbf{LastName} &  \textbf{Office\#} & \textbf{Department} \\ \hline\hline
    $t_2$& E259 & John & Williams   &2 & Design   \\ \hline
    $t_3$& E199 & Eve  & Smith      &2  & Design   \\ \hline
    $t_6$& E254 & David & Watts   & 4 & Design   \\ \hline
    $t_8$& E152 & Clark & Cook   & 3 & Design   \\ \hline
  \end{tabular}
  \subcaption{A non-sensitive relation: \textit{Employee3}.}
\label{fig:employee3 relation}
  \end{minipage}
\end{center}
\caption{Three relations obtained from \textit{Employee} relation.}
\label{fig:three relations}
\end{figure}

Consider a query \texttt{q: SELECT FirstName, LastName, Office\#, Department from Employee where FirstName = John}. In the partitioned computation, the query \texttt{q} is partitioned into two subqueries: $q_s$ that executes on \texttt{Employee2}, and $q_{\mathit{ns}}$ that executes on \texttt{Employee3}. $q_s$ will retrieve the tuple $t_4$ while $q_{\mathit{ns}}$ will retrieve the tuple $t_2$. $q_{merge}$ in this example is simply a union operator. Note that the execution of the query \texttt{q} will also retrieve the same tuples.

However, such a partitioned computation, if performed naively, leads to inferences about sensitive data from non-sensitive data. Before discussing the inference attacks, we first present the adversarial model.

\subsection{Adversarial Model}
\label{subsec:Adversarial Model}

We assume an honest-but-curious adversary that is \textit{not trustworthy}. The honest-but-curious adversary is considered widely in the standard database-as-a-service query processing model, keyword searches, and join processing~\cite{DBLP:conf/stoc/CanettiFGN96,DBLP:conf/infocom/YuWRL10,DBLP:conf/icdcs/WangCLRL10,DBLP:conf/ccs/YuWRL10,DBLP:journals/tc/WangCWRL13}. An honest-but-curious adversarial public cloud stores an outsourced dataset without tampering, correctly computes assigned tasks, and returns answers; however, it may exploit side knowledge (\textit{e}.\textit{g}., query execution, background knowledge, and the output size) to gain as much information as possible about the sensitive data.\footnote{The honest-but-curious adversary cannot launch any attack against the DB owner. We do not consider cyber-attacks that can exfiltrate data from the DB owner directly, since defending against generic cyber-attacks is outside the scope of this paper.} Furthermore, the honest-but-curious adversary can eavesdrop on the communication channels between the cloud and the DB owner, and that may help in gaining knowledge about sensitive data, queries, or results; hence, a secure channel is assumed. In our setting, the adversary has full access to the following:

\begin{enumerate}[noitemsep,leftmargin=0.01in]
  \item All the non-sensitive data. For example, for the Employee relation in Example 1, an adversary knows the complete \emph{Employee3} relation (refer to Figure~\ref{fig:employee3 relation}).
  \item \emph{Auxiliary/background} information of the sensitive data. The auxiliary information may contain metadata, schema of the relation, and the number of tuples in the relation (note that having an adversary with the auxiliary information is also considered in literature~\cite{DBLP:conf/ccs/NaveedKW15,DBLP:conf/ccs/KellarisKNO16}). In Example 1, the adversary knows that there are two sensitive relations, one of them containing six tuples and the other one containing four tuples, in the \emph{Employee1} and the \emph{Employee2} relations; Figures~\ref{fig:employee1 relation} and~\ref{fig:employee2 relation}. In contrast, the adversary is not aware of the following information before the query execution: how many people work in a specific sensitive department, is a specific person working only in a sensitive department, only in a non-sensitive department, or both.

  \item Adversarial view. When executing a query, an adversary knows which encrypted sensitive tuples and cleartext non-sensitive tuples are sent in response to a query. We refer this as the adversarial view, denoted by $\mathit{AV}$: $\mathit{AV} = \mathit{In}_c \cup \mathit{Op}_c$, where $\mathit{In}_c$ refers to the query arrives at the cloud and $\mathit{Op}_c$ refers to the encrypted and non-encrypted tuples, transmitted in response to $\mathit{In}_c$. For example, the first row of Table~\ref{tab:answer table} shows an adversarial view that shows that $\mathit{Op}_c= t_2$ tuples from the non-sensitive relation and encrypted $\mathit{Op}_c=t_4$ tuples from the sensitive relation are returned to answer the query for $\mathit{In}_c=$ \texttt{E259}.

  \item Some frequent query values. The adversary observes query predicates on the non-sensitive data, and hence, can deduce the most frequent query predicates by observing many queries.
\end{enumerate}

\subsection{Inference Attacks in Partitioned Computations}
To see the inference attack on the sensitive data while jointly processing sensitive and non-sensitive data, consider the following three queries on the \emph{Employee2} and \emph{Employee3} relations; refer to Figures~\ref{fig:employee2 relation} and~\ref{fig:employee3 relation}.

\smallskip
\noindent\textbf{Example 2.} (\textit{i}) retrieve tuples corresponding to employee \texttt{E259}, (\textit{ii}) retrieve tuples corresponding to employee \texttt{E101}, and (\textit{iii}) retrieve tuples corresponding to employee \texttt{E199}.\footnote{We used random \emph{Eids}, which is also common in a real employee relation. In contrast, in sequential ids, the absence of an id from the non-sensitive relation directly informs the adversary that the given id exists in the sensitive relation.} When answering a query, the adversary knows the tuple ids of retrieved encrypted tuples and the full information of the returned non-sensitive tuples. We refer to this information gain by the adversary as the \emph{adversarial view}, shown in Table~\ref{tab:answer table}, where $\mathit{E(t_i)}$ denotes an encrypted tuple $t_i$.

\begin{table}[h]
  \centering
    \begin{tabular}{|l|l|l|}
    \hline
    \textbf{Query value} & \multicolumn{2}{|c|}{\textbf{Returned tuples/Adversarial view}}           \\ \hline
    ~                          & \textbf{Employee2} & \textbf{Employee3} \\ \hline\hline
    E259                      & $\mathit{E(t_4)}$        & $t_2$        \\ \hline
    E101                              & $\mathit{E(t_1)}$        & null      \\ \hline
    E199                            & null        & $t_3$      \\ \hline
    \end{tabular}
    \caption{Queries and returned tuples/adversarial view.}
    \label{tab:answer table}
\end{table}
Outputs of the above three queries will reveal enough information to learn something about sensitive data. In the first query, the adversary learns that \texttt{E259} works in both sensitive and non-sensitive departments, because the answers obtained from the two relations contribute to the final answer. Moreover, the adversary may learn which sensitive tuple has an \emph{Eid} equals to \texttt{E259}. In the second query, the adversary learns that \texttt{E101} works only in a sensitive department, because the query will not return any answer from the Employee3 relation. In the third query, the adversary learns that \texttt{E199} works only in a non-sensitive department.

\subsection{The Query Binning (QB) Approach: An Overview}
In order to prevent the inference attack in the partitioned computation, we need a new security definition. Before we discuss the formal definition of partitioned data security (\S\ref{subsec:Security Definition and Correctness}), we first provide a possible solution to prevent inference attacks and then intuition for the security definition.

The query binning (QB) strategy stores a non-sensitive relation, say $R_{\mathit{ns}}$, in cleartext while it stores a sensitive relation, say $R_s$, using a cryptographically secure approach. QB prevents leakage such as in Example 2 by appropriately mapping a query for a predicate, say $q(w)$, to corresponding queries both over the non-sensitive relation, say $q(W_{\mathit{ns}})(R_{\mathit{ns}})$, and encrypted relation, say $q(W_s)(R_s)$. The queries $q(W_{\mathit{ns}})(R_{\mathit{ns}})$ and $q(W_s)(R_s)$, each represents a set of predicates (or selection queries) that are executed over the relation $R_{\mathit{ns}}$ in plaintext and, respectively, over the sensitive relation $R_s$, using the underlying cryptographic method. The set of predicates in $q(W_{\mathit{ns}})(R_\mathit{ns})$ (likewise in $q(W_S)(R_s)$) correspond to the non-sensitive (sensitive) {\em bins} including the predicate $w$, denoted by $\mathit{NSB}$ ($\mathit{SB}$). The predicates in $q(W_s)(R_s)$ are encrypted before transmitting to the cloud.

The bins are selected such that: (\textit{i}) $w \in q(W_{\mathit{ns}})(R_{\mathit{ns}}) \cap q(W_s)(R_s)$ to ensure that all the tuples containing the predicate $w$ are retrieved, and, (\textit{ii}) joint execution of the queries $q(W_{\mathit{ns}})(R_\mathit{ns})$ and $q(W_s)(R_s)$ (hereafter, denoted by $q(W)(R_s,R_\mathit{ns})$, where $W=W_s \cup W_{\mathit{ns}}$) does not leak the predicate $w$. Results from the execution of the queries $q(W_{\mathit{ns}})(R_{\mathit{ns}})$ and $q(W_s)(R_s)$ are decrypted, possibly filtered, and merged to generate the final answer. Note that \emph{bins are created only once for all the values of a searching attribute before any query is executed}. The details of the bin formation will be discussed in \S\ref{sec:Query Bucketization}.


For answering the above-mentioned three queries, QB creates two bins on sensitive parts: $\{$\texttt{E101}, \texttt{E259}$\}$, $\{$\texttt{E152}, \texttt{E159}$\}$, and two sets on non-sensitive parts: $\{$\texttt{E259}, \texttt{E254}$\}$, $\{$\texttt{E199}, \texttt{E152}$\}$. Table~\ref{tab:answer table qb} illustrates the generated adversarial view when QB is used to answer queries as shown in Example 2. In this example, row 1 of Table~\ref{tab:answer table qb} shows that this instance of QB maps the query for \texttt{E259} to  $\langle$\texttt{E259}, \texttt{E254}$\rangle$ over cleartext and to encrypted version of values for $\langle$\texttt{E259}, \texttt{E101}$\rangle$ over sensitive data. Note that simply from the generated adversarial views, the adversary  cannot determine the query value $w$ (\texttt{E259} in the example) or find a value that is shared between the two sets. Thus, while answering a query, the adversary cannot learn which employee works only in defense, design, or in both.

The reason is that the desired query value, $w$, is encrypted with other encrypted values of $W_s$, and, furthermore, the query value, $w$, cannot be distinguished from many requested non-sensitive values of $W_{\mathit{ns}}$, which are in cleartext. Consequently, \emph{the adversary is unable to find an intersection of the two sets, which is the exact value}.\footnote{For hiding an exact selection predicate over an encrypted relation regardless of data sensitivity, an approach to create a set of selection predicates including the exact predicate is presented in~\cite{DBLP:journals/isci/LiuZWT14}, which, however, cannot be used to search over sensitive and non-sensitive relations or multiple relations, due to not dealing with inference attacks.}

\begin{table}[h]
  \centering
    \begin{tabular}{|l|l|l|}
    \hline
    \textbf{Query value} & \multicolumn{2}{|c|}{\textbf{Returned tuples/Adversarial view}}           \\ \hline
    ~                         & \textbf{Employee2} & \textbf{Employee3} \\ \hline\hline
    E259                      & $\mathit{E(t_4)}$, $\mathit{E(t_1)}$       & $t_2$, $t_6$  \\ \hline
    E101                      & $\mathit{E(t_4)}$, $\mathit{E(t_1)}$       & $t_3$, $t_8$  \\ \hline
    E199                      & $\mathit{E(t_4)}$, $\mathit{E(t_1)}$       & $t_3$, $t_8$  \\ \hline
    \end{tabular}
    \caption{Queries and returned tuples/adversarial view, following QB.}
    \label{tab:answer table qb}
\end{table}
Thus, in a joint processing of sensitive and non-sensitive data, \emph{the goal of the adversary is to find as much sensitive information as possible (using the adversarial view or background knowledge)}, and \emph{the goal of a secure technique is to prevent information leakage through the joint processing of non-sensitive and sensitive data}.


\section{Partitioned Data Security}
\label{subsec:Security Definition and Correctness}

In this section, we formalize the notion of {\em partitioned data security} that establishes when a partitioned computation over sensitive and non-sensitive data does not leak any sensitive information. Note that an adversary may seek to infer sensitive information using the adversarial view created during query processing, knowledge of output size, frequency counts, and workload characteristics. We begin by first formalizing the concepts of: \emph{associated values}, \emph{associated tuples}, and \emph{relationship between counts of sensitive values}.\footnote{To develop the notation, defining security, and developing QB (\S\ref{sec:Query Bucketization}), we assume that search is performed on a specific attribute, $A$, over a relation, $R$. The approach trivially generalizes when several attributes are searchable -- we need to maintain metadata required for QB not just for $A$, but for all searchable attributes in $R$.}

\parskip 0pt
\setlength{\parindent}{15pt}

\medskip
\noindent\emph{Notations used in the definitions}. Let $t_1, t_2, \ldots, t_m$ be tuples of a sensitive relation, say $R_s$. Thus, the relation $R_s$ stores the encrypted tuples $E(t_1), E(t_2), \ldots, E(t_m)$. Let $s_1,s_2,\ldots, s_{m^{\prime}}$ be values of an attribute, say $A$, that appears in one of the sensitive tuples of $R_s$. Note that $m^{\prime} \leq m$, since several tuples may have an identical value. Furthermore, $s_i \in \mathit{Domain}(A)$, $i=1,2, \dots, m^{\prime}$, where $\mathit{Domain}(A)$ represents the domain of values the attribute $A$ can take. By $\#_s(s_i)$, we refer to the number of sensitive tuples that have $s_i$ as the value for attribute $A$. We further define $\#_s(v) = 0, \forall v \in \mathit{Domain}(A)$, $v \notin s_1,s_2,\ldots, s_{m^{\prime}}$. Let $t_1, t_2, \ldots, t_n$ be tuples of a non-sensitive relation, say $R_{\mathit{ns}}$. Let $\mathit{ns}_1,\mathit{ns}_2,\ldots, \mathit{ns}_{n^{\prime}}$ be values of the attribute $A$ that appears in one of the non-sensitive tuples of $R_{\mathit{ns}}$. In analogy with the case where the relation is sensitive, $n^{\prime} \leq n$, and $\mathit{ns}_i \in \mathit{Domain}(A)$, $i=1,2, \dots, n^{\prime}$.

\medskip
\noindent\emph{Associated values}. Let $e_i = E(t_i)[A]$ be the encrypted representation of an attribute value of $A$ in a sensitive tuple of the relation $R_s$, and $\mathit{ns}_j$ be a value of the attribute $A$ for some tuple of the relation $R_{\mathit{ns}}$. We say that $e_i$ is \emph{associated} with $\mathit{ns}_j$, (denoted by $\overset{\mathrm{a}}{=}$), if the plaintext value of $e_i$ is identical to the value $\mathit{ns}_j$. In Example 1, the value of the attribute \texttt{Eid} in tuple $t_4$ (of \emph{Employee2}, see Figure~\ref{fig:employee2 relation}) is associated with the value of the attribute \texttt{Eid} in tuple $t_2$ (of \emph{Employee3}, see Figure~\ref{fig:employee3 relation}), since both values correspond to \texttt{E259}.

\medskip
\noindent\emph{Associated tuples}. Let $t_i$ be a sensitive tuple of the relation $R_s$ (\textit{i}.\textit{e}., $R_s$ stores encrypted representation of $t_i$) and $t_j$ be a non-sensitive tuple of the relation $R_{\mathit{ns}}$. We state that $t_i$ is associated with $t_j$ (for an attribute, say $A$) iff the value of the attribute $A$ in $t_i$ is associated with the value of the attribute $A$ in $t_j$ (\textit{i}.\textit{e}., $t_i[A] \overset{\mathrm{a}}{=} t_j[A]$). Note that this is the same as stating that the two values of attribute $A$ are equal for both tuples.

\medskip
\noindent\emph{Relationship between counts of sensitive values}. Let $v_i$ and $v_j$ be two distinct values in $\mathit{Domain}(A)$. We denote the relationship between the counts of sensitive tuples with these $A$ values (\textit{i}.\textit{e}., $\#_s(v_i)$ (or $\#_s(v_j)$)) by $v_i \overset{\mathrm{r}}{\sim} v_j$. Note that $\overset{\mathrm{r}}{\sim}$ can be one of $<, =$, or $>$ relationships. For instance, in Example 1, the \texttt{E101} $\overset{\mathrm{r}}{\sim}$ \texttt{E259} corresponds to $=$, since both values have exactly one sensitive tuple (see Figure~\ref{fig:employee2 relation}), while \texttt{E101} $\overset{\mathrm{r}}{\sim}$ \texttt{E199} is $>$, since there is one sensitive tuple with value \texttt{E101} while there is no sensitive tuple with \texttt{E199}.

\medskip
Given the above definitions, we can now formally state the security requirement that ensures that simultaneous execution of queries over sensitive (encrypted) and non-sensitive (plaintext) data does not leak any information. Before that, we wish to mention the need of a new security definition in our context. The inference attack in the partitioned computing can be considered to be related to the known-plaintext attack (KPA) wherein the adversary knows some plaintext data which is hidden in a set of ciphertext. In KPA, the adversary's goal is to determine which ciphertext data is related to a given plaintext, \textit{i}.\textit{e}., determining a mapping between ciphertext and the corresponding plaintext data representing the same value. In our setup, non-sensitive values are visible to the adversary in plaintext. However, the attacks are different since, unlike the case of KPA, in our setup, the ciphertext data might not contain any data value that is the same as some non-sensitive data visible to the adversary in plaintext.\footnote{The HBC adversary cannot launch the chosen-plaintext attack (CPA) and the chosen-ciphertext attack (CCA). Since the sensitive data is non-deterministically encrypted (by our assumption), it is not prone to the ciphertext only attack (COA).}

\medskip
\noindent\textbf{Definition: Partitioned Data Security.} Let $R$ be a relation containing sensitive and non-sensitive tuples. Let $R_s$ and $R_{\mathit{ns}}$ be the sensitive and non-sensitive relations, respectively. Let $\mathit{AV}$ be an adversarial view generated for a query $q(w)(R_s,R_{\mathit{ns}})[A]$, where the query, $q$, for a value $w$ in the attribute $A$ of the $R_s$ and $R_{\mathit{ns}}$ relations. Let $X$ be the auxiliary information about the sensitive data, and $\mathit{Pr_{Adv}}$ be the probability of the adversary knowing any information. A query execution mechanism ensures the partitioned data security if the following two properties hold:
\begin{enumerate}[leftmargin=0.01in]


\item \noindent $\mathit{Pr}_{\mathit{Adv}}[e_i \overset{\mathrm{a}}{=} \mathit{ns}_j|X] = \mathit{Pr}_{\mathit{Adv}}[e_i \overset{\mathrm{a}}{=} \mathit{ns}_j|X, \mathit{AV}]$, where $e_i= E(t_i)[A]$ is the encrypted representation for the attribute value $A$ for any tuple $t_i$ of the relation $R_s$ and $\mathit{ns}_j$ is a value for the attribute $A$ for any tuple of the relation $R_{\mathit{ns}}$.



\item $\mathit{Pr}_{\mathit{Adv}}[v_i \overset{\mathrm{r}}{\sim} v_j|X] = \mathit{Pr}_{\mathit{Adv}}[v_i \overset{\mathrm{r}}{\sim} v_j | X, \mathit{AV}]$, for all $v_i, v_i \in \mathit{Domain}(A)$.

\end{enumerate}


\medskip
 The first equation (1) captures the fact that an initial probability of \emph{associating} a sensitive tuple with a non-sensitive tuple will be identical after executing a query on the relations. Thus, an adversary cannot learn anything from an adversarial view generated after the query execution. Satisfying this condition also prevents us in achieving success against KPA. The second equation (2) states that the probability of an adversary gaining information about the relative frequency of sensitive values does not increase after the query execution. In Example 2, an execution of any three queries (for values \texttt{E101}, \texttt{E199}, or \texttt{E259}) without using QB does not satisfy the above first equation. For example, the query for \texttt{E199} retrieves the only tuple from non-sensitive relation, and that changes the probability of estimating whether \texttt{E199} is sensitive or non-sensitive to 0 as compared to an initial probability of the same estimation, which was 1/4. Hence, execution of the three queries violates partitioned data security. However, the query execution for \texttt{E259} and \texttt{E101} satisfies the second equation, since the count of returned tuples from \emph{Employee2} is equal. Hence, the adversary cannot distinguish between the count of the values (\texttt{E259} and \texttt{E101}) in the domain of \texttt{Eid} of \emph{Employee2} relation.

%
%

\section{Query Binning Technique}
\label{sec:Query Bucketization}
We develop our strategy initially under the assumption that queries are only on a single attribute, say $A$. QB approach takes as inputs: (\textit{i}) the set of data values (of the attribute $A$) that are sensitive, along with their counts, and (\textit{ii}) the set of data values (of the attribute $A$) that are non-sensitive, along with their counts. QB returns a partition of attribute values that form the query bins for both the sensitive as well as for the non-sensitive parts of the query. We begin in \S\ref{subsec:The Base Case} by developing the approach for the case when a sensitive tuple is associated with at most one non-sensitive tuple (Algorithm~\ref{alg:bin_creation}). 

\parskip 0pt
\setlength{\parindent}{15pt}

Informally, QB distributes attribute values in a matrix, where rows are sensitive bins, and columns are non-sensitive bins. For example, suppose there are 16 values, say $0, 1,\ldots,15$, and assume all the values have sensitive and associated non-sensitive tuples. Now, the DB owner arranges 16 values in a $4\times4$ matrix, as follows:

\begin{center}
\begin{tabular}{|l||l|l|l|l|}
  \hline
  & $\mathit{NSB}_0$ & $\mathit{NSB}_1$ & $\mathit{NSB}_2$ & $\mathit{NSB}_3$  \\  \hline  \hline
    $\mathit{SB}_0$ & 11  & 2 & 5 & 14 \\  \hline
    $\mathit{SB}_1$ &  10&  3&  8&7  \\  \hline
    $\mathit{SB}_2$ &  0&  15& 6 & 4 \\  \hline
    $\mathit{SB}_3$ &  13&  1& 12 &9  \\  \hline
    \end{tabular}
\end{center}

In this example, we have four sensitive bins: $\mathit{SB}_0$ \{11,2,5,14\}, $\mathit{SB}_1$ \{10,3,8,7\}, $\mathit{SB}_2$ \{0,15,6,4\}, $\mathit{SB}_3$ \{13,1,12,9\}, and four non-sensitive bins: $\mathit{NSB}_0$ \{11,10,0,13\}, $\mathit{NSB}_1$ \{2,3,15,1\}, $\mathit{NSB}_2$ \{5,8,6,12\}, $\mathit{NSB}_3$ \{14,7,4,9\}. When a query arrives for a value, say 1, the DB owner searches for the tuples containing values 2,3,15,1 (viz. $\mathit{NSB}_1$) on the non-sensitive data and values in $\mathit{SB}_3$ (viz., 13,1,12,9) on the sensitive data using the cryptographic mechanism integrated into QB. We will show that in the proposed approach, while the adversary learns that the query corresponds to one of the four values in $\mathit{NSB}_1$, since query values in $\mathit{SB}_3$ are encrypted, the adversary does not learn the actual sensitive value or the actual non-sensitive value that is identical to a cleartext sensitive value.

\subsection{The Base Case}
\label{subsec:The Base Case}
QB consists of two steps. First, query bins are created (information about which will reside at the DB owner) using which queries will be rewritten. The second step consists of rewriting the query based on the binning.

Here, QB is explained for the base case, where a sensitive tuple, say $t_s$, is associated with at most a single non-sensitive tuple, say $t_{\mathit{ns}}$, and vice versa (\textit{i}.\textit{e}., $\overset{\mathrm{a}}{=}$ is a 1:1 relationship). Thus, if the value has two tuples, then one of them must be sensitive and the other one must be non-sensitive, but both the tuples cannot be sensitive or non-sensitive. A value can also have only one tuple, either sensitive or non-sensitive. Note that if $t_1, t_2, \ldots, t_l$ are sensitive tuples, with values of an attribute $A$ being $s_1, s_2, \ldots s_n$, $s_i \neq s_j$ if $i\neq j$.

Thus, in the remainder of the section, we will refer to association between encrypted value $E(t_i)[A]$ and a non-sensitive value $\mathit{ns}_j$ simply as an association between values $s_i$ and $\mathit{ns}_j$, where $s_i$ is the cleartext representation of $E(t_i)[A]$ and $\mathit{ns}_j$ is a value in the attribute $A$ of a non-sensitive relation. That is, $s_i \overset{\mathrm{a}}{=} \mathit{ns}_j$ represents $E(t_i)[A] \overset{\mathrm{a}}{=} \mathit{ns}_j$.

The scenario depicted in Example 1 satisfies the base case. The \emph{EId} attribute values corresponding to sensitive tuples include $\langle$\texttt{E101}, \texttt{E259}$, \texttt{E152}, \texttt{E159}\rangle$ and corresponding to non-sensitive tuples are $\langle$\texttt{E199}, \texttt{E259}, \texttt{E254}, \texttt{E152}$\rangle$ for which $\overset{\mathrm{a}}{=}$ is 1:1. We discuss QB under the above assumption, but these assumptions are relaxed in the conference version of this paper (please see \S IV.A and \S IV.B in~\cite{DBLP:conf/icde/Mehrotra0UM19}). Before describing QB, we first define the concept of {\em approximately square factors of a number}.

\medskip
\noindent\textbf{Approximately square factors.} \emph{We say two numbers, say $x$ and $y$, are \emph{approximately square factors of a number}, say $n>0$, if $x \times y =n$, and $x$ and $y$ are equal or close to each other such that the difference between $x$ and $y$ is less than the difference between any two factors, say $x^{\prime}$ and $y^{\prime}$, of $n$ such that  $x^{\prime} \times y^{\prime} = n$.}

\DontPrintSemicolon
\LinesNotNumbered
\begin{algorithm}[!t]
\textbf{Inputs:} $|\mathit{NS}|$: the number of values in the non-sensitive data, $|S|$: the number of values in the sensitive data.

\textbf{Outputs:} $\mathit{SB}$: sensitive bins; $\mathit{NSB}$: non-sensitive bins

\textbf{Variable:} $|\mathit{NSB}|$: non-sensitive values in a non-sensitive bin, $|\mathit{SB}|$: sensitive values in a sensitive bin.


\nl{\bf Function $\mathit{create\_bins(S,NS)}$} \nllabel{ln:function_create_bucket}
\Begin{
\nl Permute all sensitive values \nllabel{ln:permute}

\nl $x, y \leftarrow \mathit{approx\_sq\_factors(|NS|)}$: $x \geq y$ \nllabel{ln:largest_divisors}

\nl $|\mathit{NSB}| \leftarrow x$, $\mathit{NSB} \leftarrow \lceil |\mathit{NS}|/x\rceil$, $\mathit{SB} \leftarrow x$, $|\mathit{SB}| \leftarrow y$ \nllabel{ln:number_of_buckets}

\nl \lFor{$i \in (1,|S|)$}{$\mathit{SB}[i$ modulo $x][\ast]\leftarrow S[i]$\nllabel{ln:sesnitive_allocate}}

\nl \lFor{$(i,j)\in (0,\mathit{SB}-1),(0,|\mathit{SB}|-1)$}{$\mathit{NSB}[j][i]\leftarrow \mathit{allocateNS(\mathit{SB}[i][j])}$ \nllabel{ln:assign_value_to_NS_bucket}}

\nl \lFor{$i\in (0,\mathit{NSB}-1)$}{$\mathit{NSB}[i,\ast]\leftarrow$ fill the bin if empty with the size limit to $x$ \nllabel{ln:assign_remaining_NS}}

\nl \Return $\mathit{SB}$ and $\mathit{NSB}$
}


\nl{\bf Function $\mathit{allocateNS(\mathit{SB}[i][j])}$} \nllabel{ln:function_allocateNS}
\Begin{
find a non-sensitive value associated with the $j^{\mathit{th}}$ sensitive value of the $i^{\mathit{th}}$ sensitive bin
}

\caption{Bin-creation algorithm, the base case.}
\label{alg:bin_creation}
\end{algorithm}
\setlength{\textfloatsep}{0pt}

\medskip
\noindent\textbf{Step 1: Bin-creation}. QB, described in Algorithm~\ref{alg:bin_creation}, finds two approximately square factors of $|\mathit{NS}|$, say $x$ and $y$, where $x\geq y$. QB creates $\mathit{SB}=x$ sensitive bins, where each sensitive bin contains at most $y$ values. Thus, we assume $|S|\geq x$. QB, further, creates $\mathit{NSB}=\lceil|NS|/x\rceil$ non-sensitive bins, where each non-sensitive bin contains at most $|\mathit{NSB}|=x$ values. Note that we are assuming that $|S|\leq|\mathit{NS}|$.\footnote{QB can also handle the case of $|S|>|\mathit{NS}|$ by applying Algorithm~\ref{alg:bin_creation} in a reverse way, \textit{i}.\textit{e}., factorizing $|S|$.}

\smallskip
\noindent\textbf{\emph{Assignment of sensitive values}.} We number the sensitive bins from 0 to $x-1$ and the values therein from 0 to $y-1$. To assign a value to sensitive bins, QB first permutes the set of sensitive values. Such a permutation is kept secret from the adversary by the DB owner.\footnote{We emphasize to first permute sensitive values to prevent the adversary to create bins at her end; \textit{e}.\textit{g}., if the adversary is aware of a fact that employee ids are ordered, then she can also create bins by knowing the number of resultant tuples to a query. However, for simplicity, we do not show permuted sensitive values in any figure.} In order to assign sensitive values to sensitive bins, QB takes the $i^{\mathit{th}}$ sensitive value and assigns it to the $(i$ $\mathit{modulo}$ $x)^{\mathit{th}}$ sensitive bin (see Lines~\ref{ln:permute} and~\ref{ln:sesnitive_allocate} of Algorithm~\ref{alg:bin_creation}).

\smallskip
\noindent\textbf{\emph{Assignment of non-sensitive values}.} We number the non-sensitive bins from 0 to $\lceil|\mathit{NS}|\rceil/x-1$ and values therein from 0 to $x-1$. In order to assign non-sensitive values, QB takes a sensitive bin, say $j$, and its $i^{\mathit{th}}$ sensitive value. Assign the non-sensitive value associated with the $i^{\mathit{th}}$ sensitive value to the $j^{\mathit{th}}$ position of the $i^{\mathit{th}}$ non-sensitive bin. Here, if each value of a sensitive bin has an associated non-sensitive value and $|S| = |\mathit{NS}|$, then QB has assigned all the non-sensitive values to their bins (Line~\ref{ln:assign_value_to_NS_bucket} of Algorithm~\ref{alg:bin_creation}). Note that it may be the case that only a few sensitive values have their associated non-sensitive values and $|S| \leq |\mathit{NS}|$. In this case, we assign the sensitive and their associated non-sensitive values to bins like we did in the previous case. However, we need to assign the non-sensitive values that are not associated with a sensitive value, by filling all the non-sensitive bins to size $x$ (Line~\ref{ln:assign_remaining_NS} of Algorithm~\ref{alg:bin_creation}).


\smallskip
\noindent\emph{Aside}. Note that QB assigned at least as many values in a non-sensitive bin as it assigned to a sensitive bin. QB may form the non-sensitive and sensitive bins in such a way that the number of values in sensitive bins is higher than the non-sensitive bins. We chose sensitive bins to be smaller since the processing time on encrypted data is expected to be higher than cleartext data processing; hence, by searching and retrieving fewer sensitive tuples, we decrease the encrypted data-processing time.

\medskip
\noindent\textbf{Step 2: Bin-retrieval -- answering queries}. Algorithm~\ref{alg:bin_retrieval} presents the pseudocode for the bin-retrieval algorithm. The algorithm, first, checks the existence of a query value in sensitive bins and/or non-sensitive bins (see Lines~\ref{ln:check_each_sbucket} and~\ref{ln:check_nsbuckets} of Algorithm~\ref{alg:bin_retrieval}). If the value exists in a sensitive bin and a non-sensitive bin, the DB owner retrieves the corresponding two bins (see Line~\ref{ln:retrieve_bin}). Note that here the adversarial view is not enough to leak the query value or to find a value that is shared between the two bins. The reason is that the desired query value is encrypted with a set of other encrypted values and, furthermore, the query value is obscured in many requested non-sensitive values, which are in cleartext. Consequently, the adversary is unable to find an intersection of the two bins, which is the exact value.

\DontPrintSemicolon
\LinesNotNumbered \begin{algorithm}[!t]
\textbf{Inputs:} $w$: the query value. $\mathit{SB}$ and $\mathit{NSB}$: Sensitive and non-sensitive bins, created by Algorithm~\ref{alg:bin_creation}.

\textbf{Outputs:} $\mathit{SB}_a$ and $\mathit{NSB}_b$: one sensitive bin and one non-sensitive bin to be retrieved for answering $w$.

\textbf{Variables:} $\mathit{found}\leftarrow$ \textbf{false}

\nl{\bf Function $\mathit{retrieve\_bins(q(w))}$} \nllabel{ln:function_retrieve_bucket}
\Begin{

\nl \For{$(i,j) \in (0,\mathit{SB}-1),(0,|\mathit{SB}|-1)$\nllabel{ln:check_each_sbucket}}{\nl \If{$w=\mathit{SB}_i[j]$}{

\nl \Return $\mathit{SB}_i$ and $\mathit{NSB}_j$; $\mathit{found} \leftarrow$ \textbf{true}; \textbf{break} \nllabel{ln:retrieve_rule_svalue}
}}

\nl \If{$\mathit{found} \neq$ \textbf{\textnormal{\textbf{true}}}\nllabel{ln:check_nsbuckets}}{
\nl \For{$(i,j) \in (0,\mathit{NSB}-1),(0,|\mathit{NSB}|-1)$}{

\nl \If{$w=\mathit{NSB}_i[j]$}{\nl \Return $\mathit{NSB}_i$ and $\mathit{SB}_j$; \textbf{break}\nllabel{ln:retrieve_rule_nsvalue}}}}

\nl Retrieve the desired tuples from the cloud by sending encrypted values of the bin $\mathit{SB}_i$ (or $\mathit{SB}_j$) and cleartext values of the bin $\mathit{NSB}_j$ (or $\mathit{NSB}_i$) to the cloud

\nllabel{ln:retrieve_bin}

}
\caption{Bin-retrieval algorithm.}
\label{alg:bin_retrieval}
\end{algorithm}
\setlength{\textfloatsep}{0pt}

There are the following three other cases to consider:
\begin{enumerate}[noitemsep,leftmargin=0.01in]
\item Some sensitive values of a bin are not associated with any non-sensitive value. For example, in Figure~\ref{fig:qb}, the sensitive values $s_4$, $s_7$, $s_8$, $s_9$, and $s_{10}$ are not associated with any non-sensitive value.

\item A sensitive bin does not hold any value that is associated with any non-sensitive value. For example, the sensitive bin $\mathit{SB}_4$ in Figure~\ref{fig:qb} satisfies this clause.

\item A non-sensitive bin containing no value that is associated with any sensitive value.
\end{enumerate}

In all the three cases, if the DB owner retrieves only either a sensitive or non-sensitive bin containing the value, then it will lead to information leakage similar to Example 2. In order to prevent such leakage, Algorithm~\ref{alg:bin_retrieval} follows two rules stated below (see Lines~\ref{ln:retrieve_rule_svalue} and~\ref{ln:retrieve_rule_nsvalue} of Algorithm~\ref{alg:bin_retrieval}):

\smallskip
\noindent\textbf{Tuple retrieval rule R1.} If the query value $w$ is a sensitive value that is at the $j^{\mathit{th}}$ position of the $i^{\mathit{th}}$ sensitive bin (\textit{i}.\textit{e}., $w=\mathit{SB}_i[j]$), then the DB owner will fetch the $i^{\mathit{th}}$ sensitive and the $j^{\mathit{th}}$ non-sensitive bins (see Line~\ref{ln:retrieve_rule_svalue} of Algorithm~\ref{alg:bin_retrieval}). By Line~\ref{ln:check_each_sbucket} of Algorithm~\ref{alg:bin_retrieval}, the DB owner knows that the value $w$ is either sensitive or non-sensitive.

\smallskip
\noindent\textbf{Tuple retrieval rule R2.} If the query value $w$ is a non-sensitive value that is at the $j^{\mathit{th}}$ position of the $i^{\mathit{th}}$ non-sensitive bin, then the DB owner will fetch the $i^{\mathit{th}}$ non-sensitive and the $j^{\mathit{th}}$ sensitive bins (see Line~\ref{ln:retrieve_rule_nsvalue} of Algorithm~\ref{alg:bin_retrieval}).

\smallskip
Note that if query value $w$ is in both sensitive and non-sensitive bins, then both the rules are applicable, and they retrieve \emph{exactly the same} bins. In addition, if the value $w$ is neither in a sensitive or a non-sensitive bin, then there is no need to retrieve any bin.

\smallskip\noindent\emph{Aside}. After knowing the bins, the DB owner sends all the sensitive values in the encrypted form and the non-sensitive values in cleartext to the cloud. The tuple retrieval based on the encrypted values reveals only the tuple addresses that satisfy the requested values. We can also hide the access-patterns by using PIR, ORAM, or DSSE on each required sensitive value. As mentioned in \S\ref{sec:introduction}, access-pattern-hiding techniques are prone to size and workload-skew attacks. Nonetheless, the use of QB with access-pattern-hiding techniques makes them secure against these attacks.\footnote{QB is designed as a general mechanism that provides partitioned data security when coupled with any cryptographic technique. For special cryptographic techniques that hide access-patterns, it may be possible to design a different mechanism that may provide partitioned data security.}

\smallskip
\noindent
\textbf{Associated bins.} \emph{We say a sensitive bin is associated with a non-sensitive bin, if the two bins are retrieved for answering at least one query.}

Our aim when answering queries for all the sensitive and non-sensitive values using Algorithm~\ref{alg:bin_retrieval} is to associate each sensitive bin with each non-sensitive bin; resulting in the adversary being unable to predict which (if any) is the value shared between two bins.

\begin{figure}[!h]
\centering
\includegraphics[scale=0.6]{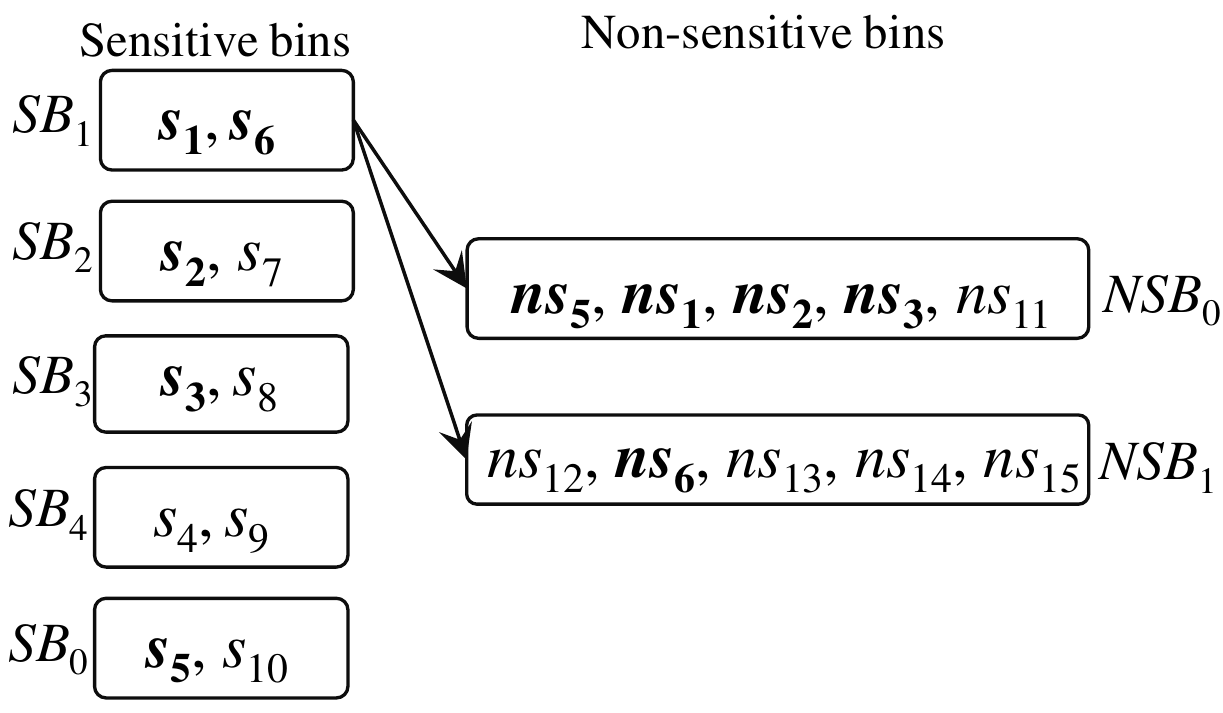}
\caption{QB for 10 sensitive and 10 non-sensitive values.}
\label{fig:qb}
\end{figure}

\smallskip
\noindent\textnormal{\textbf{Example 3: QB example Step 1: Bin Creation.}} We show the bin-creation algorithm for 10 sensitive values and 10 non-sensitive values. We assume that only five sensitive values, say $s_1, s_2, s_3, s_5, s_6$, have their associated non-sensitive values, say $\mathit{ns}_1, \mathit{ns}_2,\mathit{ns}_3,\mathit{ns}_5, \mathit{ns}_{6}$, and the remaining 5 sensitive (say, $s_4, s_7,s_8, \ldots s_{10}$) and 5 non-sensitive values (say, $\mathit{ns}_{11}, \mathit{ns}_{12},\ldots, \mathit{ns}_{15}$) are not associated. For simplicity, we use different indexes for non-associated values.

QB creates 2 non-sensitive bins and 5 sensitive bins, and divides 10 sensitive values over the following 5 sensitive bins: $\mathit{SB}_0$ $\{s_5, s_{10}\}$, $\mathit{SB}_1$ $\{s_1, s_6\}$, $\mathit{SB}_2$ $\{s_2, s_7\}$, $\mathit{SB}_3$ $\{s_3, s_8\}$, $\mathit{SB}_4$ $\{s_4, s_9\}$; see Figure~\ref{fig:qb}. Now, QB distributes non-sensitive values associated with the sensitive values over two non-sensitive bins, resulting in the bin $\mathit{NSB}_0$ $\{\mathit{ns}_5,\mathit{ns}_1,\mathit{ns}_2,\mathit{ns}_3,\ast \}$ and $\mathit{NSB}_1$ $\{\ast,\mathit{ns}_6,\ast,\ast,\ast\}$, where a $\ast$ shows an empty position in the bin. In the sequel, QB needs to fill the non-sensitive bins with the remaining 5 non-sensitive values; hence, $\mathit{ns}_{11}$ is assigned to the last position of the bin $\mathit{NSB}_0$, and the bin $\mathit{NSB}_1$ contains the remaining 4 non-sensitive values such as $\{\mathit{ns}_{12},\mathit{ns}_6,\mathit{ns}_{13},\mathit{ns}_{14},\mathit{ns}_{15}\}$.

\smallskip
\noindent\textbf{Example 3: QB example (continued) Step 2: Bin-retrieval.} Now, we show how to retrieve tuples. If a query is for a sensitive value, say $s_2$ (refer to Figure~\ref{fig:qb}), then the DB owner fetches two bins $\mathit{SB}_2$ and $\mathit{NSB}_0$. If a query is for a non-sensitive value, say $\mathit{ns}_{14}$, then the DB owner fetches two bins $\mathit{NSB}_1$ and $\mathit{SB}_3$. Thus, it is impossible for the adversary to find (by observing the adversarial view) which is an exact query value from the non-sensitive bin and which is the sensitive value associated with one of the non-sensitive values. This fact is also clear from Table~\ref{tab:answer table_qb}, which shows that the adversarial view is not enough to leak information from the joint processing of sensitive and non-sensitive data, unlike Example 2. In Table~\ref{tab:answer table_qb}, $E(s_i)$ shows the encrypted value of $s_i$, and we are showing the adversarial view only for queries for $s_2$, $s_7$, and $\mathit{ns}_{13}$. One may easily create the adversarial view for other queries. In this example, note that the bin $\mathit{SB}_2$ gets associated with both the non-sensitive bins $\mathit{NSB}_0$ and $\mathit{NSB}_1$, due to following Algorithm~\ref{alg:bin_retrieval}.

\begin{table}[h]
  \centering
    \begin{tabular}{|l|l|l|}
    \hline
    \textbf{Exact query value} & \multicolumn{2}{|c|}{\textbf{Returned tuples/Adversarial view}}          \\ \hline
    ~                  & \textbf{Sensitive bin and data}        & \textbf{Non-sensitive bin and data}  \\ \hline\hline

    $s_2$ or $\mathit{ns}_2$ & $\mathit{SB}_2$\textbf{:}$\mathit{E(s_2)}$,$\mathit{E(s_7)}$ & $\mathit{NSB}_0$\textbf{:}$\mathit{ns}_1$,$\mathit{ns}_2$,$\mathit{ns}_3$,$\mathit{ns}_5$,$\mathit{ns}_{11}$ \\\hline

    $s_7$ & $\mathit{SB}_2$\textbf{:}$\mathit{E(s_2)}$,$\mathit{E(s_7)}$ & $\mathit{NSB}_1$\textbf{:}$\mathit{ns}_6$,$\mathit{ns}_{12}$,$\mathit{ns}_{13}$,$\mathit{ns}_{14}$,$\mathit{ns}_{15}$ \\\hline

    $\mathit{ns}_{13}$ & $\mathit{SB}_2$\textbf{:}$\mathit{E(s_2)}$,$\mathit{E(s_7)}$ & $\mathit{NSB}_1$\textbf{:}$\mathit{ns}_6$,$\mathit{ns}_{12}$,$\mathit{ns}_{13}$,$\mathit{ns}_{14}$,$\mathit{ns}_{15}$ \\\hline

    \end{tabular}
    \caption{Queries and returned tuples/adversarial view after retrieving tuples according to Algorithm~\ref{alg:bin_retrieval}.}
    \label{tab:answer table_qb}
\end{table}

\subsection{Algorithm Correctness}
\label{subsec:Algorithm Correctness}
We will prove that QB does not lead to information leakage through the joint processing of sensitive and non-sensitive data. To prove correctness, we first define the concept of \emph{surviving matches}. Informally, we show that QB maintains surviving matches among all sensitive and non-sensitive values, resulting in all sensitive bins being associated with all non-sensitive bins. Thus, an initial condition: a sensitive value is assumed to have an identical value to one of the non-sensitive value is preserved.

\medskip
\noindent\textbf{Surviving matches.} We define surviving matches, which are classified as either \emph{surviving matches of values} or \emph{surviving matches of bins}, as follows:

\smallskip
\noindent\emph{Before query execution}. Observe that before retrieving any tuple, under the assumption that no one except the DB owner can decrypt an encrypted sensitive value, say $E(s_i)$, the adversary cannot learn which non-sensitive value is associated with the value $s_i$. Thus, the adversary will consider that the value $E(s_i)$ is associated with one of the non-sensitive values. Based on this fact, the adversary can create a complete bipartite graph having $|S|$ nodes on one side and $|\mathit{NS}|$ nodes on the other side. The edges in the graph are called \emph{surviving matches of the values}. For example, before executing any query, the adversary can create a bipartite graph for 10 sensitive and 10 non-sensitive values.

\smallskip
\noindent\emph{After query execution}. Recall that the query execution on the datasets creates an adversarial view that guides the adversary to create a (new) bipartite graph containing $\mathit{SB}$ nodes on one side and $\mathit{NSB}$ nodes on the other side. The edges in the new graph (obtained after the query execution) are called \emph{surviving matches of the bins}. For example, after executing queries according to Algorithm~\ref{alg:bin_retrieval}, the adversary can create a bipartite graph having 5 nodes on one side and 2 nodes on the other side, see Figure~\ref{fig:survival_matching2}. Note that since bins contain values, the surviving matches of the bins can lead to the surviving matches of the values. Hence, from Figure~\ref{fig:survival_matching2}, the adversary can also create a bipartite graph for 10 sensitive and 10 non-sensitive values.

We show that a technique for retrieving tuples that drops some surviving matches of the bins leading to drop of the surviving matches of the values is not secure, and hence, results in the information leakage through non-sensitive data.

\begin{figure}[t]
\begin{center}
  \begin{minipage}[t]{.45\linewidth}
  \centering
  \includegraphics[scale=0.7]{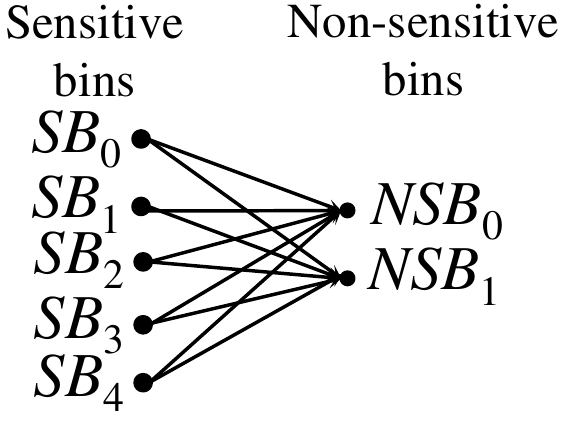}
  \subcaption{Surviving matches after the tuple retrieval following Algorithm~\ref{alg:bin_retrieval}.}
  \label{fig:survival_matching2}
  \end{minipage}
  \begin{minipage}[t]{.45\linewidth}
  \centering
  \includegraphics[scale=0.7]{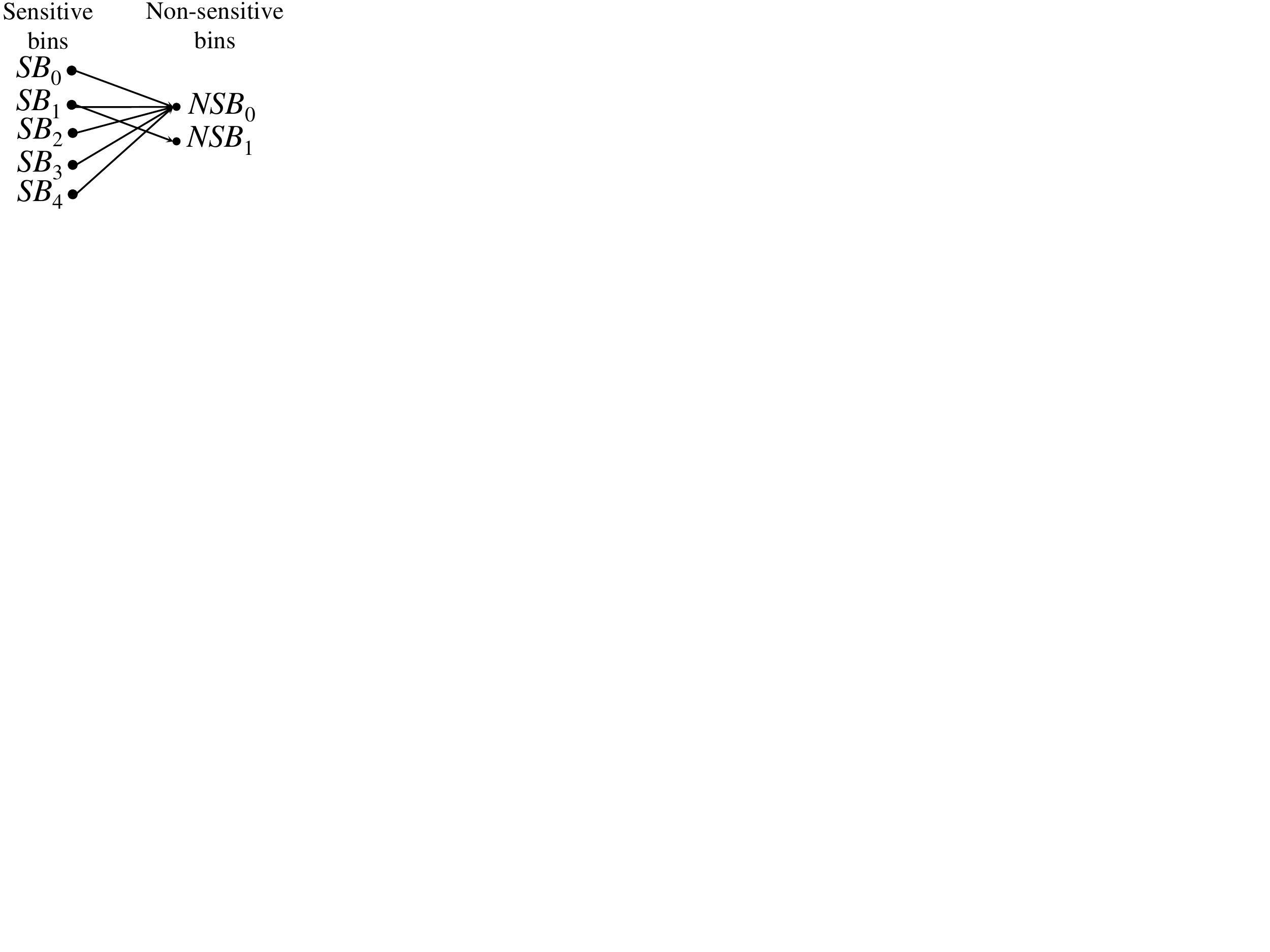}
  \subcaption{Surviving matches without following Algorithm~\ref{alg:bin_retrieval} for $\mathit{ns}_{12}$, $\mathit{ns}_{13}$, $\mathit{ns}_{14}$, $\mathit{ns}_{15}$; also see Table~\ref{tab:answer table_qb}.}
  \label{fig:survival_matching3}
  \end{minipage}
\end{center}
\caption{An example to show security of QB using surviving matches for 10 sensitive and 10 non-sensitive values.}
\label{fig:survival_mathcings}
\FFF\FFF
\end{figure}

\smallskip
\noindent\textbf{Example 4: Dropping surviving matches.} In Figure~\ref{fig:qb}, for answering queries for associated values $s_1$, $s_2$, $s_3$, $s_5$, $s_6$, $\mathit{ns}_1$, $\mathit{ns}_2$, $\mathit{ns}_3$, $\mathit{ns}_5$, or $\mathit{ns}_6$, the DB owner must follow Line~\ref{ln:retrieve_rule_svalue} or~\ref{ln:retrieve_rule_nsvalue} of Algorithm~\ref{alg:bin_retrieval} for retrieving the two bins holding corresponding sensitive and non-sensitive data; otherwise, the DB owner cannot retrieve two bins that share a common value. Now, retrieved tuples for these values create an adversarial view as shown in the first six lines except the fourth line  of Table~\ref{tab:answer table_qb}. However, for answering values $s_4$, $s_7$, $s_8$, $s_9$, $s_{10}$, $\mathit{ns}_6$, $\mathit{ns}_{12}$, $\mathit{ns}_{13}$, $\mathit{ns}_{14}$, or $\mathit{ns}_{15}$ (recall that these values are not associated), if the DB owner does not follow Algorithm~\ref{alg:bin_retrieval} and retrieves the bin containing the desired value with any randomly selected bin of the other side, then it could result in the following adversarial view; see Table~\ref{tab:wrong_view}. We show the case when $\mathit{NSB}_1$ is only associated with bin $\mathit{SB}_1$, and bins $\mathit{SB}_2$ is only associated with bin $\mathit{NSB}_0$, since
Algorithm~\ref{alg:bin_retrieval} is not followed.

\begin{table}[t]
  \centering
    \begin{tabular}{|l|l|l|}
    \hline
    \textbf{Exact query value} & \multicolumn{2}{|c|}{\textbf{Returned tuples/Adversarial view}}           \\ \hline
    ~                  & \textbf{Sensitive bin and data}        & \textbf{Non-sensitive bin and data}  \\ \hline\hline
%
    $s_2$ or $\mathit{ns}_2$ & $\mathit{SB}_2$\textbf{:}$\mathit{E(s_2)}$,$\mathit{E(s_7)}$ & $\mathit{NSB}_0$\textbf{:}$\mathit{ns}_1$,$\mathit{ns}_2$,$\mathit{ns}_3$,$\mathit{ns}_5$,$\mathit{ns}_{11}$ \\\hline
%







    $s_6$ or $\mathit{ns}_6$ & $\mathit{SB}_1$\textbf{:}$\mathit{E(s_1)}$,$\mathit{E(s_6)}$ & $\mathit{NSB}_1$\textbf{:}$\mathit{ns}_6$,$\mathit{ns}_{12}$,$\mathit{ns}_{13}$,$\mathit{ns}_{14}$,$\mathit{ns}_{15}$ \\\hline

    $s_7$ & $\mathit{SB}_2$\textbf{:}$E(s_2)$,$E(s_7)$ &
    $\mathit{NSB}_0$\textbf{:}$\mathit{ns}_1$,$\mathit{ns}_2$,$\mathit{ns}_3$,$\mathit{ns}_5$,$\mathit{ns}_{11}$ \\\hline

    $\mathit{ns}_{12}$ & $\mathit{SB}_1$\textbf{:}$E(s_1)$,$E(s_6)$ & $\mathit{NSB}_1$\textbf{:}$\mathit{ns}_6$,$\mathit{ns}_{12}$,$\mathit{ns}_{13}$,$\mathit{ns}_{14}$,$\mathit{ns}_{15}$ \\\hline

    $\mathit{ns}_{13}$ & $\mathit{SB}_1$\textbf{:}$E(s_1)$,$E(s_6)$ & $\mathit{NSB}_1$\textbf{:}$\mathit{ns}_6$,$\mathit{ns}_{12}$,$\mathit{ns}_{13}$,$\mathit{ns}_{14}$,$\mathit{ns}_{15}$ \\\hline

    $\mathit{ns}_{14}$ & $\mathit{SB}_1$\textbf{:}$E(s_1)$,$E(s_6)$ & $\mathit{NSB}_1$\textbf{:}$\mathit{ns}_6$,$\mathit{ns}_{12}$,$\mathit{ns}_{13}$,$\mathit{ns}_{14}$,$\mathit{ns}_{15}$ \\\hline

    $\mathit{ns}_{15}$ & $\mathit{SB}_1$\textbf{:}$E(s_1)$,$E(s_6)$ & $\mathit{NSB}_1$\textbf{:}$\mathit{ns}_6$,$\mathit{ns}_{12}$,$\mathit{ns}_{13}$,$\mathit{ns}_{14}$,$\mathit{ns}_{15}$ \\\hline

    \end{tabular}
    \caption{Queries and returned tuples/adversarial view without following Algorithm~\ref{alg:bin_retrieval}.}
    \label{tab:wrong_view}
    \FFF\FFF
\end{table}

Having such an adversarial view (Table~\ref{tab:wrong_view}), the adversary can learn two facts that
\begin{enumerate}[noitemsep,leftmargin=0.01in]
  \item Encrypted sensitive tuples of the bin $\mathit{SB}_2$ have associated non-sensitive tuples only in the bin $\mathit{NSB}_0$, not in $\mathit{NSB}_1$ (Figure~\ref{fig:survival_matching3}).
  \item Non-sensitive tuples of the bin $\mathit{NSB}_1$ have their associated sensitive tuples only in the bin $\mathit{SB}_1$ (see Figure~\ref{fig:survival_matching3}).
\end{enumerate}
Based on this adversarial view (Table~\ref{tab:wrong_view}), the bipartite graph drops some surviving matches of the bins (see Figure~\ref{fig:survival_matching3}). (That fact leads to the dropping of the surviving matches of the values, specifically, surviving matches between sensitive values $s_3$, $s_4$, $s_5$, $s_8$, $s_9$, $s_{10}$ and non-sensitive value $\mathit{ns}_6$, $\mathit{ns}_{12}$, $\mathit{ns}_{13}$, $\mathit{ns}_{14}$, $\mathit{ns}_{15}$.) Hence, a random retrieval of bins is not a secure technique to prevent information leakage through non-sensitive data accessing.

In contrast, if the DB owner uses Line~\ref{ln:retrieve_rule_svalue} or~\ref{ln:retrieve_rule_nsvalue} of Algorithm~\ref{alg:bin_retrieval} for retrieving values that are not associated, the above-mentioned facts (\textit{i}) and (\textit{ii}) no longer hold. Figure~\ref{fig:survival_matching2} shows the case when each sensitive bin is associated with each non-sensitive bin, if Algorithm~\ref{alg:bin_retrieval} is followed. Thus, we can see that all the surviving matches of the bins and values are preserved after answering queries. Therefore, for the example of 10 sensitive and 10 non-sensitive values, QB (Algorithms~\ref{alg:bin_creation} and~\ref{alg:bin_retrieval}) is secure, and under the given assumptions (\S\ref{subsec:Security Definition and Correctness}), the adversary cannot find an exact association between a sensitive and a non-sensitive value.

{
\smallskip
\noindent
\subsection*{Security Proof}
Now, we prove that QB is secure and satisfies the definition of partitioned data security (Theorem~\ref{th:Preserve the perfect data security}) by first proving that all the sensitive bins are associated with all the non-sensitive bins (Theorem~\ref{th:algorithm correctness}), which is intuitively clear by Example 4. Recall that the only way a surviving match could be removed is if there is no sensitive value in a sensitive bin, say $\mathit{SB}_j$ that does not have an associated non-sensitive value. In this case for answering a value belonging to $\mathit{SB}_j$, we retrieve either only the bin $\mathit{SB}_j$ or the bin $\mathit{SB}_j$ with any randomly selected non-sensitive bin. Note that the adversary cannot learn anything from the encrypted data, since the keys are only known to the DB owner.

\begin{theorem}
\label{th:algorithm correctness}
Let $|S|$ and $|\mathit{NS}|$ be the number of sensitive and non-sensitive values, respectively. By following Algorithm~\ref{alg:bin_creation}, $|S|$ and $|\mathit{NS}|$ values are distributed over $\mathit{SB}$ sensitive and $\mathit{NSB}$ non-sensitive bins, respectively. Answering a set of queries using QB (Algorithm~\ref{alg:bin_retrieval}) will not remove any surviving matches of the bins and that leads to preserve all the surviving matches of the values.
\end{theorem}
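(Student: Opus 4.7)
The central claim reduces to a combinatorial statement about the bipartite graph $G=(\mathit{SB}\cup\mathit{NSB}, E)$ whose edges record which sensitive bin is co-retrieved with which non-sensitive bin over the course of all queries. Since surviving matches of values are inherited from surviving matches of bins (each value lives in exactly one bin on its side), it suffices to prove that after answering queries for every value in the domain with Algorithm~\ref{alg:bin_retrieval}, the graph $G$ is the complete bipartite graph $K_{\mathit{SB},\mathit{NSB}}$. My plan is to argue this by directly enumerating, for each pair $(\mathit{SB}_a,\mathit{NSB}_b)$, a query whose answer triggers retrieval of exactly that pair.

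First, I would establish the matrix view implicit in Algorithm~\ref{alg:bin_creation}: view the sensitive bins as an $\mathit{SB}\times|\mathit{SB}|=x\times y$ array $M_S$ and the non-sensitive bins as an $\mathit{NSB}\times|\mathit{NSB}|$ array $M_{N}$, where Line~\ref{ln:assign_value_to_NS_bucket} enforces the transpose-coupling $M_N[j][i]\overset{\mathrm{a}}{=} M_S[i][j]$ whenever $M_S[i][j]$ has an associated non-sensitive value. Under this view, rule R1 says: a query value at $M_S[i][j]$ retrieves $(\mathit{SB}_i,\mathit{NSB}_j)$, and rule R2 says: a query value at $M_N[i][j]$ retrieves $(\mathit{SB}_j,\mathit{NSB}_i)$. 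Hence the pair $(\mathit{SB}_a,\mathit{NSB}_b)$ is added to $E$ whenever either $M_S[a][b]$ or $M_N[b][a]$ is a non-empty cell.

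Second, I would show that for every index pair $(a,b)$ with $a\in[0,\mathit{SB}-1]$ and $b\in[0,\mathit{NSB}-1]$, at least one of $M_S[a][b]$, $M_N[b][a]$ is non-empty. When $M_S[a][b]$ is non-empty we are done by R1. Otherwise, Line~\ref{ln:assign_value_to_NS_bucket} leaves $M_N[b][a]$ initially empty, but Line~\ref{ln:assign_remaining_NS} of Algorithm~\ref{alg:bin_creation} fills every empty non-sensitive slot using the $|\mathit{NS}|-|\{s_i : s_i \text{ is associated}\}|$ remaining non-associated non-sensitive values. A counting argument then pins down why these leftovers suffice to fill all holes up to capacity $x$: the number of empty cells in $M_N$ coincides with the number of non-associated non-sensitive values plus the padding slack in the last bin, so every pair $(a,b)$ with $a<\mathit{SB}$ and $b<\mathit{NSB}$ is covered by some non-empty cell on one of the two sides, and invoking R1 or R2 on the value sitting in that cell produces the edge $(\mathit{SB}_a,\mathit{NSB}_b)\in E$.

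Finally, once $G=K_{\mathit{SB},\mathit{NSB}}$ is established, the surviving matches of bins equal the initial complete bipartite graph, so none are dropped. Lifting back to values: every sensitive value $s$ lies in some $\mathit{SB}_a$ and every non-sensitive value $t$ lies in some $\mathit{NSB}_b$, and since $(\mathit{SB}_a,\mathit{NSB}_b)\in E$, the edge $(s,t)$ remains a surviving match of values, which is what the theorem asserts. The main obstacle I anticipate is the bookkeeping at the boundary: when $|S|<|\mathit{NS}|$ the last non-sensitive bin can be partially filled, so I must verify the covering argument above also works for the tail cells $(a,\mathit{NSB}-1)$ where $M_N[\mathit{NSB}-1][a]$ might be empty. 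Handling this cleanly requires showing that whenever a tail cell $M_N[\mathit{NSB}-1][a]$ is empty, the mirror cell $M_S[a][\mathit{NSB}-1]$ is necessarily non-empty because of how Line~\ref{ln:sesnitive_allocate} fills sensitive bins in round-robin order; this is the one step where a slightly careful case split on $|S|\bmod x$ and $|\mathit{NS}|\bmod x$ is unavoidable.
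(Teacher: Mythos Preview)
Your proposal is correct and follows essentially the same approach as the paper: both arguments reduce the claim to showing that every pair $(\mathit{SB}_a,\mathit{NSB}_b)$ is co-retrieved by at least one query, and both exploit the transpose coupling between positions in sensitive bins and indices of non-sensitive bins (rules R1/R2). The paper organizes this as ``fix $\mathit{SB}_j$, then cover all non-sensitive bins via three cases (associated value, non-associated sensitive value, extra non-sensitive bins)'', whereas you organize it as ``fix the pair $(a,b)$, then exhibit a non-empty cell in $M_S[a][b]$ or $M_N[b][a]$''; these are the same idea in different clothing.

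One simplification you are missing: your anticipated obstacle about the last non-sensitive bin being partially filled, and the case split on $|S|\bmod x$ and $|\mathit{NS}|\bmod x$, does not arise for Algorithm~\ref{alg:bin_creation}. By the paper's definition of approximately square factors, $x\cdot y = |\mathit{NS}|$ exactly, so $\mathit{NSB}=y$ and every non-sensitive bin is filled to capacity $x$ after Line~\ref{ln:assign_remaining_NS}. Hence $M_N[b][a]$ is non-empty for \emph{every} $(a,b)$ with $a<x$ and $b<y$, and rule R2 alone already yields the complete bipartite graph; you never actually need to fall back on $M_S[a][b]$. The paper's third case and your $M_N$-side argument are in fact sufficient by themselves; the other cases are redundant but included for exposition.
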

\begin{proof}
We show that QB will not remove any surviving matches of the bins by showing that a sensitive bin, say $\mathit{SB}_j$, must be associated with all the non-sensitive bins. A similar argument can be proved for any non-sensitive bin. Let $y$ be the number of sensitive values in the bin $\mathit{SB}_j$, and let $p\geq y$, ($p=\mathit{NSB}$) be the number non-sensitive bins. We will prove the following three arguments:
\begin{enumerate}[noitemsep,leftmargin=0.01in]
  \item If a sensitive value, say $s_i\in \mathit{SB}_j$, is associated with a non-sensitive value (\textit{i}.\textit{e}., $\exists \mathit{ns}_z\in R_{\mathit{ns}}: \mathit{ns}_z \overset{\mathrm{a}}{=} s_i$), then two bins, $\mathit{SB}_j$, and one non-sensitive bin, holding the value $\mathit{ns}_z$, are retrieved.

  \item If a sensitive value, say $s_i\in \mathit{SB}_j$, is not associated with any non-sensitive value (\textit{i}.\textit{e}., $\forall \mathit{ns}_j \in R_{\mathit{ns}}: s_i \overset{\mathrm{a}}{\neq} \mathit{ns}_j$), then the bin $\mathit{SB}_j$ and one of the non-sensitive bins are retrieved. Following that, if all the sensitive values of the bins $\mathit{SB}_j$ are not associated with any non-sensitive value (\textit{i}.\textit{e}., $\forall \mathit{ns}_j \in R_{\mathit{ns}}, \forall s_i\in \mathit{SB}_j : s_i \overset{\mathrm{a}}{\neq} \mathit{ns}_j$), then the bin $\mathit{SB}_j$ and $y$ different non-sensitive bins are retrieved.

      By proving the first and second arguments, we will show that if there are \emph{only} $y$ non-sensitive bins, then a sensitive bin must be associated with all the $y$ non-sensitive bins. The following third argument will consider more than $y$ non-sensitive bins.

  \item If there are more than $y$ non-sensitive bins (say, $\mathit{NSB}_y, \mathit{NSB}_{y+1}, \ldots, \mathit{NSB}_p$) having $x$ values that are not associated with any sensitive value (\textit{i}.\textit{e}., $\forall \mathit{ns}_j \in \mathit{NSB}_y \vee \mathit{NSB}_{y+1} \vee \ldots \vee \mathit{NSB}_p, \mathit{ns}_j \overset{\mathrm{a}}{\neq} s_i, i=1, 2, \ldots, |S|$), then each of these non-sensitive bins must be associated with the bin $\mathit{SB}_j$.
\end{enumerate}
By satisfying the above three arguments, we prove that, thus, the bin $\mathit{SB}_j$ is associated with all non-sensitive bins, and hence, all surviving matches of the bins and, eventually, values are preserved.

\noindent\emph{First case}. The value $s_i$ is allocated to $(i$ $\mathit{modulo}$ $x)^{\mathit{th}}$ sensitive bin at an index, say $z$, where $z = 0, 1, \ldots y-1$, and its associated non-sensitive value is allocated to the $(i$ $\mathit{modulo}$ $x)^{\mathit{th}}$ position of the $z^{\mathit{th}}$ non-sensitive bin. When answering a query for $s_i$ according to the rule R1, the bin $\mathit{SB}_j$ with the bin $\mathit{NSB}_z$ are retrieved. Consequently, the desired tuples containing $s_i$ and its associated non-sensitive value are retrieved, and that are correct answers to the query.

\noindent\emph{Second case}. When answering a query for the value $s_i=\mathit{SB}_j[u]$ ($u \in 0, 1, y-1$) that does not have any associated non-sensitive value, by following the rule R1, the bin $\mathit{SB}_j$ with one of the non-sensitive bin $\mathit{NSB}_u$ are retrieved. Moreover, answering queries for all the $y$ values (0, 1, $y-1$) of the bin $\mathit{SB}_j$, by following rule R1, requires us to retrieve the $\mathit{SB}_j$ with all the $y-1$ (0, 1, $y-1$) non-sensitive bins.

\noindent\emph{Third case}. Since the non-sensitive bin, say $\mathit{NSB}_z$, where $z=y, y+1, \ldots, p$, must hold a value at the $j^{\mathit{th}}$ position, by following the rule R2, the bin $\mathit{NSB}_z$ and the sensitive bin $\mathit{SB}_j$ are fetched for answering a query for $\mathit{ns}_j$.

Therefore, the bin $\mathit{SB}_j$ is associated with all the non-sensitive bins, and hence, all the surviving matches between the values of the bin $\mathit{SB}_j$ and all the non-sensitive bins are also maintained.
\end{proof}

Since we proved all sensitive bins are associated with all the non-sensitive bins, based on this fact, we will show that the first condition of partitioned data security holds to be true for any query. Here, we do not show the second equation of partitioned data security definition (\textit{i}.\textit{e}., $Pr_{\mathit{adv}}[s_i \overset{\mathrm{r}}{\sim} s_j|X] = Pr_{\mathit{adv}}[s_i \overset{\mathrm{r}}{\sim} s_j |X, q(w)(R_s,R_\mathit{ns})[A]]$); recall that here in the base case, we assumed that a value has only a single sensitive tuple; hence, the condition holds true.
\begin{theorem}\textnormal{\textbf{(Preserve partitioned data security)}}
\label{th:Preserve the perfect data security}
Let $R$ be a relation containing sensitive and non-sensitive tuples. Let $R_s$ and $R_{\mathit{ns}}$ be the sensitive and non-sensitive relations, respectively. Let $q(w)(R_s,R_{\mathit{ns}})[A]$ be a query, $q$, for a value $w$ in the attribute $A$ of the $R_s$ and $R_{\mathit{ns}}$ relations. Let $X$ be the auxiliary information about the sensitive data, and $\mathit{Pr_{Adv}}$ be the probability of the adversary knowing any information. Let $e_i$ be the $i^{\mathit{th}}$ sensitive tuple value in the attribute $A$ of the relation $R_s$ and $\mathit{ns}_j$ is the $j^{\mathit{th}}$ non-sensitive value in the attribute $A$ of the relation $R_{\mathit{ns}}$. The execution of a set of queries on the attribute $A$ on the relations using QB leads to the following equation to be true: $$Pr_{\mathit{adv}}[e_i \overset{\mathrm{a}}{=} \mathit{ns}_j|X] = Pr_{\mathit{adv}}[e_i \overset{\mathrm{a}}{=} \mathit{ns}_j|X, \mathit{AV}]$$ where $i\in 1, 2,\ldots, |S|$ and $j\in 1, 2,\ldots, |\mathit{NS}|$.
\end{theorem}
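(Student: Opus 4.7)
The plan is to reduce the theorem to Theorem~\ref{th:algorithm correctness}: the surviving-matches argument already shows that the bipartite graph of possible associations between sensitive and non-sensitive values is not thinned out by any adversarial view produced under QB. If no candidate association is ever eliminated, then the conditional probability of any particular association $e_i \overset{\mathrm{a}}{=} \mathit{ns}_j$ cannot change. The proof will therefore make this indistinguishability argument precise, rather than computing the probabilities directly.

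First, I would unfold what $\mathit{AV}$ actually contains under QB. For each query $q(w)$, the cloud receives a set $W_s$ of encrypted values (from some sensitive bin $\mathit{SB}_a$) and a set $W_{\mathit{ns}}$ of cleartext values (from some non-sensitive bin $\mathit{NSB}_b$), and returns the addresses of the matching encrypted tuples and the corresponding cleartext tuples. Because the underlying cipher is non-deterministic, no encrypted sensitive value can be compared across queries, nor matched to any cleartext value by inspection. Thus, up to the identities of the returned tuple addresses, the total information in $\mathit{AV}$ is captured by the multiset of bin pairs $(\mathit{SB}_a,\mathit{NSB}_b)$ that were retrieved over the whole workload.

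Next, define an \emph{association hypothesis} $\sigma$ to be any assignment of sensitive values to non-sensitive values (and $\bot$ when no associated non-sensitive value exists) that is compatible with the auxiliary information $X$ (schema, counts, etc.). The prior $\mathit{Pr_{Adv}}[e_i \overset{\mathrm{a}}{=} \mathit{ns}_j \mid X]$ is the sum of the prior weights of all $\sigma$ with $\sigma(e_i) = \mathit{ns}_j$. The crux is to show $\mathit{Pr_{Adv}}[\mathit{AV} \mid X,\sigma]$ is the \emph{same} for every compatible $\sigma$: by Theorem~\ref{th:algorithm correctness}, QB eventually pairs every sensitive bin with every non-sensitive bin, so for any alternative hypothesis $\sigma'$ one can re-interpret the observed bin retrievals as being triggered by the queries QB would have issued under $\sigma'$ instead of under the true association. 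Because the encrypted content of each $\mathit{SB}_a$ is indistinguishable across hypotheses, and because the rules R1 and R2 produce the identical bin-pair distributions for any $\sigma'$ consistent with $X$, the likelihood of $\mathit{AV}$ is $\sigma$-invariant.

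A direct application of Bayes' rule then gives $\mathit{Pr_{Adv}}[\sigma \mid X, \mathit{AV}] = \mathit{Pr_{Adv}}[\sigma \mid X]$ for every compatible $\sigma$, and summing over all $\sigma$ with $\sigma(e_i) = \mathit{ns}_j$ delivers the claimed equality. The main obstacle I expect is handling the edge cases that motivated the two retrieval rules R1 and R2 — queries for sensitive-only values, queries for non-sensitive-only values, and queries for associated values — and showing that in each case the pair of retrieved bins carries no side channel (e.g., via bin sizes or via which side the query originated from) that would break $\sigma$-invariance. This is exactly what the ``no dropped surviving matches'' conclusion of Theorem~\ref{th:algorithm correctness} is designed to rule out, so the residual work is to argue carefully that the likelihood $\mathit{Pr_{Adv}}[\mathit{AV}\mid X,\sigma]$ factors through only the bin-pair multiset, and not through any finer-grained observable.
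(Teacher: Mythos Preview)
Your Bayesian scaffolding is sound, but the central step --- that $\mathit{Pr}_{\mathit{Adv}}[\mathit{AV}\mid X,\sigma]$ is the \emph{same} for every hypothesis $\sigma$ compatible with $X$ --- is too strong and is in fact false under QB. Observing a single bin pair already eliminates some full assignments. Concretely, suppose the adversary sees the sensitive bin $\{E_1,E_3\}$ retrieved together with the non-sensitive bin $\{v_1,v_2\}$. By Line~\ref{ln:assign_value_to_NS_bucket} of Algorithm~\ref{alg:bin_creation}, the non-sensitive partners of the two members of one sensitive bin are placed in \emph{distinct} non-sensitive bins; hence any $\sigma$ that decrypts both $E_1$ and $E_3$ into $\{v_1,v_2\}$ (or both into the complementary non-sensitive bin) is inconsistent with the construction and has likelihood zero. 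So the posterior over $\sigma$ is \emph{not} the prior, and the sentence ``$\mathit{Pr}_{\mathit{Adv}}[\sigma\mid X,\mathit{AV}]=\mathit{Pr}_{\mathit{Adv}}[\sigma\mid X]$ for every compatible $\sigma$'' cannot hold. Theorem~\ref{th:algorithm correctness} does not rescue this: it guarantees that every bin \emph{pair} is realized, not that every full assignment $\sigma$ remains possible.

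The theorem only claims preservation of each \emph{marginal} $\mathit{Pr}[e_i\overset{\mathrm{a}}{=}\mathit{ns}_j]$, which is strictly weaker than preservation of the joint over $\sigma$. The paper obtains this by direct counting: on a four-value instance it enumerates the $4!$ assignments, removes the eight that the observed bin pair rules out, and verifies that among the sixteen survivors each single event $E_i=v_j$ still has relative frequency $1/4$; it then argues the general ratio $((n-1)!-x)\big/\bigl(n((n-1)!-x)\bigr)=1/n$. To salvage your route you must replace likelihood invariance by a symmetry statement: the set $\{\sigma:\mathit{Pr}[\mathit{AV}\mid X,\sigma]>0\}$ is closed under the permutations that exchange candidate plaintexts for any fixed ciphertext, so the surviving posterior mass is spread uniformly over each marginal event. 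That symmetry is exactly what the paper's enumeration establishes, and it is the missing ingredient in your sketch.
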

\noindent\emph{Proof sketch}. We provide an example of four values to show the correctness of the above theorem. Let $v_1$, $v_2$, $v_3$, and $v_4$ be values containing only one sensitive and one non-sensitive tuple. Let $E_1$, $E_2$, $E_3$, and $E_4$ be encrypted representations of these values in an arbitrary order, \textit{i}.\textit{e}., it is not mandatory that $E_1$ is the encrypted representation of $v_1$. In this example, the cloud stores an encrypted relation, say $R_s$, containing four encrypted tuples with encrypted representations $E_1$, $E_2$, $E_3$, $E_4$ and a cleartext relation, say $R_{\mathit{ns}}$, containing four cleartext tuples with values $v_1$, $v_2$, $v_3$, $v_4$. The objective of the adversary is to deduce a cleartext value corresponding to an encrypted value. Note that before executing a query, the probability of an encrypted value, say $E_i$, to have the cleartext value, say $v_i$, $1\leq i\leq 4$ is 1/4, which QB maintains at the end of a query.

Assume that the user wishes to retrieve the tuple containing $v_1$. By following QB, the user asks a query, say $q(E_1,E_3)(R_s)$, on the encrypted relation $R_s$ for $E_1$, $E_3$, and a query, say $q(v_1,v_2)(R_{\mathit{ns}})$, on the cleartext relation $R_{\mathit{ns}}$ for $v_1,v_2$. After executing the queries, the adversary holds an adversarial view given in Table~\ref{tab:correct_view1}.

\begin{table}[h]
  \centering
    \begin{tabular}{|l|l|l|}
    \hline
    \textbf{Exact query value (hidden from adversary)} & \multicolumn{2}{|c|}{\textbf{Returned tuples/Adversarial view}}           \\ \hline
    ~                  & \textbf{Sensitive data }       & \textbf{Non-sensitive data}  \\ \hline\hline
    $v_1$ & $E_1$,$E_3$ & $v_1$,$v_2$ \\\hline

    \end{tabular}
    \caption{Queries and returned tuples/adversarial view after executing a query for $v_1$, by following Algorithm~\ref{alg:bin_retrieval}.}
    \label{tab:correct_view1}
\end{table}

In this example, we show that the probability of finding the cleartext value of an encrypted representation, say $E_i$, $1\leq i \leq 4$, remains identical before and after a query. In order to show that when a query comes for $2\times \sqrt{n}$ values by following QB, where $n$ is the number of values in the non-sensitive relation, $\sqrt{n}$ values are asked for the sensitive relation and $\sqrt{n}$ values are asked for the non-sensitive relation, we need to figure out:

\begin{enumerate}[noitemsep,leftmargin=0.01in]
  \item All possible allocations of the non-sensitive $\sqrt{n}$ values, say $v_1, v_2, \ldots, v_{\sqrt{n}}$, to $\sqrt{n}$ encrypted sensitive values, say $E_1, E_2, \ldots, E_{\sqrt{n}}$. Here, we use the term \emph{allocation} to show the fact that the encrypted representation of $E_i$ has the cleartext value $v_i$.

      In our example of four values, we find allocations of four non-sensitive values $v_1$, $v_2$, $v_3$, $v_4$ to encrypted representation $E_1$, $E_2$, $E_3$, $E_4$.

  \item All possible allocations of $\sqrt{n}$ non-sensitive values, except one non-sensitive value, say $v_i$, that is allocated to an encrypted sensitive value, say $E_i$, to the remaining encrypted sensitive values.

      In the case of four values and above-mentioned queries, we find allocations of the non-sensitive values $v_2$, $v_3$, $v_4$ to the encrypted sensitive values $E_2$, $E_3$, $E_4$ while assuming that the encrypted representation of $v_1$ is $E_1$.
\end{enumerate}

The ratio of the above two provides the probability of finding a cleartext value corresponding to its encrypted value after the query execution.

When the query arrives for $\langle E_1,E_3,v_1,v_2\rangle$, the adversary gets the fact that the cleartext representation of $E_1$ and $E_3$ cannot be $v_1$ and $v_2$ or $v_3$ and $v_4$. If this will happen, then there is no way to associate a sensitive bin with each non-sensitive bin. Now, if the adversary considers the cleartext representation of $E_1$ is $v_1$, then the adversary has the following four possible allocations of the values $v_1$, $v_2$, $v_3$, $v_4$ to $E_1$, $E_2$, $E_3$, $E_4$:
\begin{center}
$\langle v_1,v_2,v_3,v_4\rangle$,
$\langle v_1,v_2,v_4,v_3\rangle$,

$\langle v_1,v_3,v_4,v_2\rangle$,
$\langle v_1,v_4,v_3,v_2\rangle$.
\end{center}

However, the allocations $\langle v_1,v_3,v_2,v_4\rangle$ and $\langle v_1,v_4,v_2,v_3\rangle$ to $E_1$, $E_2$, $E_3$, and $E_4$ cannot exist. Since the adversary is not aware of the exact cleartext value of $E_1$, the adversary also considers the cleartext representation of $E_1$ is $v_2$. This results in four more possible allocations of the values to $E_1$, $E_2$, $E_3$, and $E_4$, as follows:
\begin{center}
$\langle v_2,v_1,v_3,v_4\rangle$,
$\langle v_2,v_1,v_4,v_3\rangle$,

$\langle v_2,v_3,v_4,v_1\rangle$,
$\langle v_2,v_4,v_3,v_1\rangle$.
\end{center}

However, $\langle v_2,v_3,v_1,v_4\rangle$ and $\langle v_2,v_4,v_1,v_3\rangle$ cannot exist. Similarly, assuming the cleartext representation of $E_1$  is $v_3$ or $v_4$,
we get the following 8 more possible allocations of the values to $E_1$, $E_2$, $E_3$, and $E_4$:
\begin{center}
$\langle v_3,v_1,v_2,v_4\rangle$,
$\langle v_3,v_2,v_1,v_4\rangle$,

$\langle v_3,v_4,v_1,v_2\rangle$,
$\langle v_3,v_4,v_2,v_1\rangle$,

$\langle v_4,v_1,v_2,v_3\rangle$,
$\langle v_4,v_2,v_1,v_3\rangle$,

$\langle v_4,v_3,v_1,v_2\rangle$,
$\langle v_4,v_3,v_2,v_1\rangle$.
\end{center}

Here, the following four allocations of the values to encrypted representation cannot exist:
\begin{center}
$\langle v_3,v_1,v_4,v_2\rangle$, $\langle v_3,v_2,v_4,v_1\rangle$,

$\langle v_4,v_1,v_3,v_2\rangle$,  $\langle v_4,v_2,v_3,v_1\rangle$.
\end{center}

Thus, the retrieval of the four tuples containing one of the following: $\langle E_1,E_3,v_1,v_2\rangle$, results in 16 possible allocations of the values $v_1$, $v_2$, $v_3$, and $v_4$ to $E_1$, $E_2$, $E_3$, and $E_4$, of which only four possible allocations have $v_1$ as the cleartext representation of $E_1$. This results in the probability of finding $E_1=v_1$ is 1/4. A similar argument also holds for other encrypted values. Hence, an initial probability of associating a sensitive value with a non-sensitive value remains identical after executing a query.

Thus, we can conclude the following:
\begin{enumerate}[noitemsep,leftmargin=0.01in]
  \item All possible allocations of $\sqrt{n}$ non-sensitive values, except one non-sensitive value, say $v_1$, that we allocate to an encrypted sensitive value, say $E_1$, to the remaining encrypted sensitive values is $(n-1)! - x$, where $n$ is the number of values in the non-sensitive relation and $x$ is the number of allocations of values $v_2, v_3, \ldots, v_{\sqrt{n}}$ to $E_2, E_3, \ldots, E_{\sqrt{n}}$ that cannot exist.

  \item All possible allocations of the non-sensitive $\sqrt{n}$ values, say $v_1, v_2, \ldots, v_{\sqrt{n}}$, to $\sqrt{n}$ encrypted sensitive values, say $E_1, E_2, \ldots, E_{\sqrt{n}}$, is $n\times ((n-1)! - x)$. This is true because  we cannot allocate any combination of the values asked in the query to any encrypted representations that are asked by the query.
\end{enumerate}

Thus, the retrieval of $2\times \sqrt{n}$ values results in $n\times ((n-1)! - x)$ possible allocations of $\sqrt{n}$ non-sensitive values to $\sqrt{n}$ encrypted sensitive values, while $(n-1)! - x$ allocations exist when a queried non-sensitive value is assumed to be the cleartext of a queried encrypted representation. Therefore, the probability of finding the exact allocation of the non-sensitive values to encrypted sensitive value while considering a non-sensitive value is the cleartext of an encrypted value is $\frac{(n-1)! - x}{n\times ((n-1)! - x)}= \frac{1}{n}$.
}

\smallskip\noindent\textbf{Note: Handling adaptive adversaries.} The above-presented approach can handle an honest-but-curious adversary, who cannot execute any query, and the case when only the DB owner executes the queries on the databases. Now, we show how to handle an adaptive adversary that can execute queries on the database based on the result of previously selected queries. Note that an adaptive adversary can use any bin structure to break QB. She may ask some queries on the non-sensitive data and some queries on the sensitive data. Her objective is to find a value that is common in sensitive and non-sensitive datasets.

We explain with the help of an example that shows how an adaptive adversary breaks QB. Consider four sensitive tuples having sensitive value, say $s_1,$ $s_2,$ $\ldots,$ $s_4$, and four non-sensitive tuples having non-sensitive values, say $\mathit{ns}_1,$ $\mathit{ns}_2,$ $\ldots,$ $\mathit{ns}_4$. Suppose that $s_i$ is associated with $\mathit{ns}_i$, and all sensitive tuples are encrypted. A correct bin structure (not considering permuted sensitive values) will be as follows: $\mathit{SB}_1$: $\{s_1,s_3\}$, $\mathit{SB}_0$: $\{s_2,s_4\}$, $\mathit{NSB}_1$: $\{\mathit{ns}_1,\mathit{ns}_2\}$, and $\mathit{NSB}_0$: $\{\mathit{ns}_3,\mathit{ns}_4\}$.

Now, first see how an adaptive adversary can break QB, with the help of two queries: Consider the first query for $\mathit{ns}_1$. The adversary can ask the query for $\mathit{ns}_1$, $\mathit{ns}_2$, $\mathit{s}_1$, and $\mathit{s}_2$. The adversary will learn that the first and second encrypted tuples are returned. However, she cannot know which of the tuple has an encrypted representation of $s_1$.

Another query is for $\mathit{ns}_3$, and  she asks for $\mathit{ns}_1$, $\mathit{ns}_3$, $\mathit{s}_1$, and $\mathit{s}_3$. The adversary will learn that the first and third encrypted tuples are returned. However, now, she will know that the first encrypted tuple has the encrypted representation is $s_1$, because it was retrieved in the first query for $\mathit{ns}_1$ as well as in the second query.\footnote{Of course, if the encrypted relation does not have any tuple having $s_1$, then the adversary can learn that $\mathit{ns}_1$ is not associated with any tuple. However, this can be prevented trivially by outsourcing fake tuples having $s_1$.} Thus, by observing access-patterns, the adversary can know which two tuples are associated.

To protect this attack, we need to use a cryptographic technique, \textit{e}.\textit{g}., ORAM or secret-sharing that hides access-patterns at the sensitive data. When using access-patterns-hiding cryptographic techniques, the adversary will learn only the fact that two tuples are returned in response to any query. But it will not lead to any inference attacks. It is important to recall that access-patterns-hiding cryptographic techniques are prone to output size attacks. Thus, when mixing these techniques with QB makes them secure against output-size attacks. Note that we cannot use SGX-based solutions at the encrypted data when dealing with an adaptive adversary, because the adversary can observe access-patterns due to cache-lines and branch shadowing~\cite{DBLP:conf/eurosec/GotzfriedESM17,DBLP:conf/ccs/WangCPZWBTG17}.

{
\smallskip\noindent\textbf{Note: Security offered by existing cryptographic techniques vs QB.} Papers such as~\cite{DBLP:conf/ndss/IslamKK12,DBLP:conf/ccs/NaveedKW15,DBLP:conf/ccs/CashGPR15,DBLP:conf/ccs/KellarisKNO16,DBLP:conf/sp/GrubbsSB0R17} have illustrated that formal security guarantees (\textit{e}.\textit{g}., as often shown in papers such as property preserving encryption~\cite{DBLP:conf/sigmod/AgrawalKSX04,DBLP:conf/crypto/BellareBO07,DBLP:journals/cacm/PopaRZB12} and symmetric searchable encryption~\cite{DBLP:journals/jcs/CurtmolaGKO11}) does not prevent leakage through inferences. For instance, Naveed et al.~\cite{DBLP:conf/ccs/NaveedKW15} showed that a cryptographically secured database that is also using an order-preserving cryptographic technique (\textit{e}.\textit{g}., order-preserving encryption (OPE)) may reveal the entire data when mixed with publicly known databases. Note that in our setting, the proposed technique, where the non-sensitive data resides in cleartext, would offer almost no security without query binning. In particular, if the cryptographic technique used to store sensitive data reveals access-patterns, then the adversary will learn about which ciphertext corresponds to which keyword simply by observing the queries on cleartext. Such inferences are prevented by query binning. Also, note that unlike the security properties of searchable encryption techniques (\textit{e}.\textit{g}., OPE, deterministic encryption, and symmetric searchable encryption), which formalize security as indistinguishability  from chosen keyword attack (IND-CKA1)~\cite{DBLP:journals/jcs/CurtmolaGKO11} other than what can be inferred from the permitted leakages,
our scheme does not lead to any leakage due to the joint processing of sensitive and non-sensitive datasets. Thus, QB is safe from inference attacks, and using QB in conjunction with any cryptographic technique does not lead to any additional leakages.
}

{
\subsection{A Simple Extension of the Base Case}
\label{subsec:A Simple Extension of Query Bucketization}
Algorithm~\ref{alg:bin_creation} creates bins when the number of non-sensitive data values\footnote{Recall that we considered the case of $|S|\leq |\mathit{NS}|$.} is not a prime number, by finding the two approximately square factors. However, Algorithm~\ref{alg:bin_creation} may exhibit a relatively higher \emph{cost} (\textit{i}.\textit{e}., the number of the retrieved tuple) when the sum of the approximately square factors is high.

\parskip 0pt
\setlength{\parindent}{15pt}

For example, if there are 41 sensitive data values and 82 non-sensitive data values, then Algorithm~\ref{alg:bin_creation} creates 2 non-sensitive bins having 41 values in each and 41 sensitive bins having exactly one value in each (Line~\ref{ln:number_of_buckets} of Algorithm~\ref{alg:bin_creation}). Consequently, answering a query results in retrieval of 42 tuples. (We may also create two sensitive bins and 41 non-sensitive bins containing exactly two non-sensitive values in each, resulting in retrieval of 23 tuples.) However, the cost can be further reduced by a significant amount, which is explained below.

\smallskip
\noindent\textnormal{\textbf{Example 5: (An example of QB extension --- Algorithm~\ref{alg:bin_extension}).}} Consider again the example of 41 sensitive and 82 non-sensitive values. In this case, 81 is the closest square number to 82. Here, Algorithm~\ref{alg:bin_extension}, described next, creates 9 non-sensitive bins and 9 sensitive bins. By Lines~\ref{ln:sesnitive_allocate} and~\ref{ln:assign_value_to_NS_bucket} of Algorithm~\ref{alg:bin_creation}, sensitive values and associated non-sensitive values are allocated, resulting in that a sensitive bin holds at most 5 values and a non-sensitive bin holds at most 10 values. Thus, at most 15 tuples are retrieved to answer a query.

\DontPrintSemicolon
\LinesNotNumbered
\begin{algorithm}[!t]
{
\textbf{Inputs:} $|\mathit{NS}|$, $|S|$.

\textbf{Outputs:} $\mathit{SB}$, $\mathit{NSB}$

\nl{\bf Function $\mathit{bin\_extension(S,NS)}$} \nllabel{ln:function_extension}
\Begin{

\nl Permute all sensitive values \nllabel{ln:permute}

\nl $x, y \leftarrow \mathit{approx\_sq\_factors(|NS|)}$: $x \geq y$; $\mathit{cost_d} \leftarrow x + y$ \nllabel{ln:largest_divisors_extension}


\nl $z \leftarrow \mathit{closest\_SquareNum}(|\mathit{NS}|)$, $\mathit{cost_{sn}} \leftarrow 2(z/ \sqrt{z})$ \nllabel{ln:find_sn}


\nl \If{$(\mathit{cost_{sn}}+ \lceil (|\mathit{NS}|-z)/\sqrt{z}\rceil < \mathit{cost_d})$}{\nllabel{ln:cost_check}

\nl Execute Algorithm~\ref{alg:bin_creation}$(S,z)$ and add $(\mathit{NS}-z)/\sqrt{z}$ number of the \emph{remaining} non-sensitive values in each non-sensitive bins \nllabel{ln:algo_2_on_sn}}

\nl \lElse{Execute Algorithm~\ref{alg:bin_creation}$(S,\mathit{NS})$ \nllabel{ln:resend_shares_for_single_fetch_single}}
}

\caption{An extension to the bin-creation Algorithm~\ref{alg:bin_creation} for the base case, $|S|<|\mathit{NS}|$.}
\label{alg:bin_extension}
}
\end{algorithm}
\setlength{\textfloatsep}{0pt}

\smallskip
\noindent\textbf{Algorithm~\ref{alg:bin_extension} description.} An extension to the bin-creation Algorithm~\ref{alg:bin_creation} is provided in Algorithm~\ref{alg:bin_extension} that handles the case when the number of non-sensitive values ($|S|<|\mathit{NS}|$) is close to a square number.\footnote{The case of $|S| > |\mathit{NS}|$ can be handled by applying Algorithm~\ref{alg:bin_extension} in a reverse way.} Algorithm~\ref{alg:bin_extension} first finds two approximately square factors of non-sensitive values and the cost; Line~\ref{ln:largest_divisors_extension}. Algorithm~\ref{alg:bin_extension} also finds a square number, say $z$, closest to the non-sensitive values and the cost; Line~\ref{ln:find_sn}. Now, Algorithm~\ref{alg:bin_extension} creates bins using a method that results in fewer retrieved tuples (Line~\ref{ln:cost_check}). When Algorithm~\ref{alg:bin_extension} creates bins using the square number closest to the non-sensitive values (Line~\ref{ln:algo_2_on_sn}), the \emph{remaining} non-sensitive values (\textit{i}.\textit{e}., $|\mathit{NS}|-z^2$) can be handled by assigning an equal number of the remaining non-sensitive values in the bins. Note that the sensitive and associated non-sensitive values are assigned to bins in an identical manner as in Algorithm~\ref{alg:bin_creation} (Lines~\ref{ln:sesnitive_allocate}-\ref{ln:assign_remaining_NS}).



}

\subsection{General Case: Multiple Values with Multiple Tuples}
\label{subsec:Multiple Records with Multiple Values}
In this section, we will generalize Algorithms~\ref{alg:bin_creation}-\ref{alg:bin_extension} to consider a case
when different data values have different numbers of associated tuples. First, we will show that sensitive values with different numbers of tuples may provide enough information to the adversary leading to the size, frequency-count attacks, and may disclose some information about the sensitive data. Hence, in the case of multiple values with multiple tuples, Algorithms~\ref{alg:bin_creation}-\ref{alg:bin_extension} cannot be directly implemented. We, thus, develop a strategy to overcome such a situation.

\parskip 0pt
\setlength{\parindent}{15pt}

\smallskip
\noindent
\textbf{Size attack scenario in the base QB.} Consider an assignment of 10 sensitive and 10 non-sensitive values to bins using Algorithm~\ref{alg:bin_creation}; see Figure~\ref{fig:qb}. Assume that a sensitive value, say $s_1$, has 1000 sensitive tuples and an associated non-sensitive value, say $\mathit{ns}_1$, has 2000 tuples, while all the other values have only one tuple each. Further, assume that \emph{each data value represents the salary of employees}.

In this example, consider a query execution for a value, say $\mathit{ns}_1$. The DB owner retrieves tuples from two bins: $\mathit{SB}_1$ (containing encrypted tuples of values $s_1$ and $s_6$) and $\mathit{NSB}_0$ (containing tuples of values $\mathit{ns}_1, \mathit{ns}_2,\mathit{ns}_3,\mathit{ns}_5,\mathit{ns}_{11}$); see Figure~\ref{fig:qb}. Obviously, the number of retrieved tuples satisfying the values of the bins $\mathit{SB}_1$ and $\mathit{NSB}_0$ will be highest (\textit{i}.\textit{e}., 3005) as compared to the number of tuples retrieved based on any two other bins. Thus, the retrieval of the two bins $\mathit{SB}_1$ and $\mathit{NSB}_0$ provides enough information to the adversary to determine which one is the sensitive bin associated with the bin holding the value $\mathit{ns}_1$. Moreover, after observing many queries and having background knowledge, the adversary may estimate that 1000 people in the sensitive relation earn a salary equal to the value $\mathit{ns}_1$.

Thus, in the case of different sensitive values having different numbers of tuples, Algorithm~\ref{alg:bin_creation} cannot satisfy the \emph{second condition of partitioned data security} (\textit{i}.\textit{e}., the adversary is able to distinguish two sensitive values based on the number of retrieved tuples, which was not possible before the query execution, and concludes that a sensitive value ($s_1$ in the above example) has more tuples than any other sensitive value) though preserving all surviving matches, and holding Theorems~\ref{th:algorithm correctness} and~\ref{th:Preserve the perfect data security} to be true.

\emph{In order for the second condition of partitioned data security to hold (and for the scheme to be resilient to the size and frequency-count attacks, as illustrated above), sensitive bins need to hold identical numbers of tuples}. A trivial way of doing this is to outsource some encrypted fake tuples such that the number of tuples in each sensitive bin will be identical. However, we need to be careful; otherwise, adding fake tuples in each sensitive bin may increase the \emph{cost}, if all the heavy-hitter sensitive values are allocated to a single bin. This fact will be clear in the following example.

\begin{figure}[h]
\begin{center}
  \begin{minipage}{.48\linewidth}
  \centering
  \includegraphics[scale=0.7]{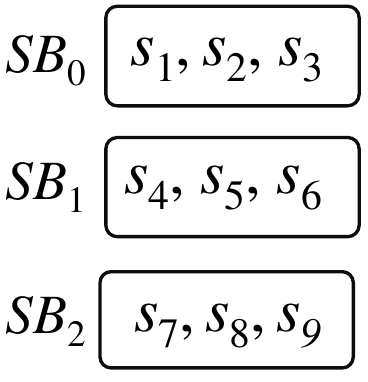}
  \subcaption{The first way.}
  \label{fig:fig_allocation_problem1}
  \end{minipage}
  \begin{minipage}{.49\linewidth}
  \centering
  \includegraphics[scale=0.7]{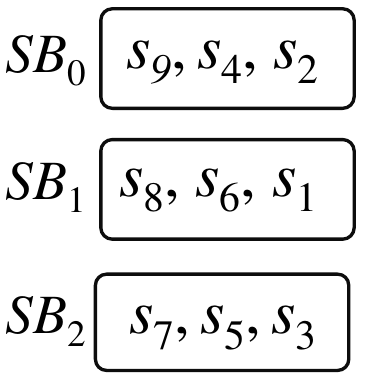}
  \subcaption{The second way.}
  \label{fig:fig_allocation_problem2}
  \end{minipage}
\end{center}
\caption{An assignment of 9 sensitive values to 3 bins.}
\label{fig:allocation_problem}
\end{figure}
\noindent\textnormal{\textbf{Example 6: (Illustrating ways to assign \emph{sensitive} values to bins to minimize the addition of fake tuples).}} Consider 9 sensitive values, say $s_1, s_2, \ldots, s_9$, having 10, 20, 30, 40, 50, 60, 70, 80, and 90 tuples, respectively.\footnote{We assume that there are 9 non-sensitive values, and computed that we need 3 sensitive and 3 non-sensitive bins.} There are multiple ways of assigning these values to three bins so that we need to add a minimum number of fake tuples to each bin. Figure~\ref{fig:allocation_problem} shows two different ways to assign these values to bins. Figure~\ref{fig:fig_allocation_problem2} shows the best way -- to minimize the addition of fake encrypted tuples; hence minimizing the cost. However, bins in Figure~\ref{fig:fig_allocation_problem1} require us to add 180 and 90 fake encrypted tuples to the bins $\mathit{SB}_0$ and $\mathit{SB}_1$, respectively.

Note that there is no need to add any fake tuple if the non-sensitive values have identical numbers of tuples. In that case, the adversary cannot deduce which sensitive bin contains sensitive tuples associated with a non-sensitive value. However, it is obvious that any fake non-sensitive tuple cannot be added in clear-text.

Before describing how to add fake encrypted tuples to bins, we show that a partitioning of sensitive values over $\mathit{SB}$ bins may lead to identical numbers of tuples in each bin, where a bin is not required to hold at most $y$ values, is not a communication-efficient solution. For example, consider 9 sensitive values, where a value, say $s_1$, has 100 tuples and all the other values, say $s_2,s_3,\ldots, s_9$, have 25 tuples each. In this case, we may get bins as shown in Figure~\ref{fig:qb heavy hitter}. Note that the bins $\mathit{SB}_1$ and $\mathit{SB}_2$ are associated with all the three non-sensitive bins while the bin $\mathit{SB}_0$ is associated with only $\mathit{NSB}_0$ (thus, the given bins do not prevent the surviving matches). In order to associate each sensitive bin with each non-sensitive bin (and hence, preventing all the surviving matches), we need to ask fake queries for bins $\langle \mathit{SB}_0, \mathit{NSB}_1\rangle$ and $\langle \mathit{SB}_0, \mathit{NSB}_2\rangle$.

\begin{figure}[h]
  \centering
  \includegraphics[scale=0.7]{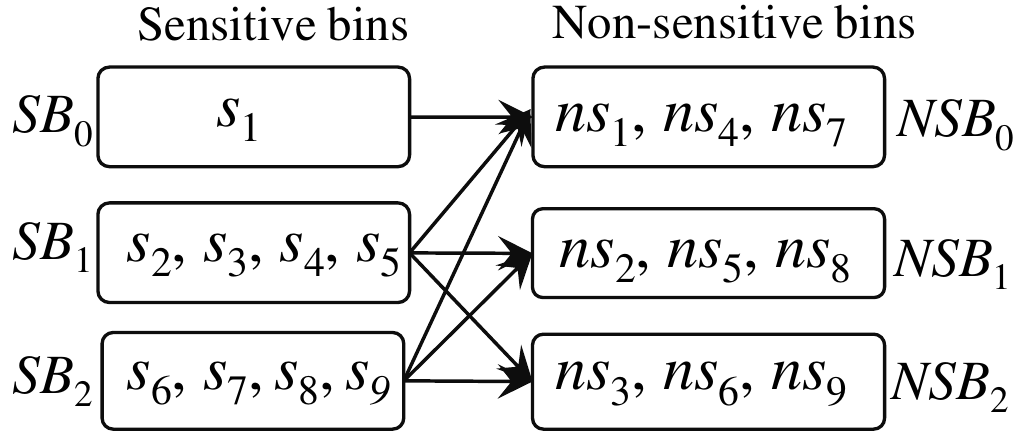}
  \caption{An assignment of a heavy-hitter value but dropping surviving matches.}
\label{fig:qb heavy hitter}
\end{figure}

\smallskip
\noindent
\textbf{Adding fake encrypted tuples.} As an assumption, we know the number of sensitive bins, say $\mathit{SB}$, using Algorithm~\ref{alg:bin_creation} or~\ref{alg:bin_extension}. Here, our objective is to assign sensitive values to bins such that each bin holds identical numbers of tuples while minimizing the number of fake tuples in each bin. To do this, the strategy is given below:
\begin{enumerate}[noitemsep,leftmargin=0.01in]
  \item Sort all the values in a decreasing order of the number of tuples.
  \item Select $\mathit{SB}$ largest values and allocate one in each bin.
  \item Select the next value and find a bin that is containing the fewest number of tuples. If the bin is holding less than $y$ values, then add the value to the bin; otherwise, select another bin with the fewest number of tuples. Repeat this step, for allocating all the values to sensitive bins.
  \item Add fake tuples' values to the bins so that each bin contains identical numbers of tuples.
  \item Allocate non-sensitive values as per Algorithm~\ref{alg:bin_creation} (Lines~\ref{ln:assign_value_to_NS_bucket} and~\ref{ln:assign_remaining_NS}).
\end{enumerate}


{
\section{Other Operations}
\label{sec:Other Operations}

\subsection{Join Queries}
\label{subsec:Join Queries}
Let $R$ be a parent relation that is partitioned into a sensitive relation $R_s$ and a non-sensitive relation $R_{\mathit{ns}}$. Let $S$ be a child relation that is partitioned into a sensitive relation $S_s$ and a non-sensitive relation $S_{\mathit{ns}}$. {  \emph{We assume that a tuple of the relation $R_s$ cannot have any tuple in the child table $S_{\mathit{ns}}$}. In order words, a sensitive tuple with a join key, say $k_i$, of the parent table $R_s$ cannot have a non-sensitive tuple with the joining key $k_i$ in the non-sensitive child table $S_{\mathit{ns}}$.
However, a non-sensitive tuple with a join key, say $k_j$, of the parent table $R_\mathit{ns}$ can have a sensitive tuple with the joining key $k_j$ in the sensitive child table $S_s$. Thus, in the partitioned computing model, the primary-key-to-foreign-key join of $R$ and $S$ is computed as follows:
$$R\bowtie S = (R_s \bowtie S_s) \cup (R_{\mathit{ns}} \bowtie S_{\mathit{ns}}) \cup (R_{\mathit{ns}} \bowtie S_s)$$

Note that our objective is not to build a secure cryptographic technique for joining the sensitive relations. Thus, we use any existing cryptographic technique, \textit{e}.\textit{g}., CryptDB~\cite{DBLP:journals/cacm/PopaRZB12}, SGX-based Opaque~\cite{opaque},~\cite{DBLP:conf/icdt/ArasuK14},~\cite{DBLP:journals/tods/PangD14}, or~\cite{DBLP:conf/dbsec/DolevL016} to join sensitive relations. In addition, our objectives in joining two relations are:
\begin{enumerate}[noitemsep,nolistsep]
  \item Hide which sensitive tuples (of the relation $S_s$) join with a non-sensitive tuple (of the relation $R_{\mathit{ns}}$). For example, we need to hide that $t_2$ of Tables~\ref{fig:s s sensitive Project relation} should join with $r_2$ of Table~\ref{fig:r ns non-sensitive emp}.
  \item Hide which are the encrypted tuples of the output of $(R_s \bowtie S_s) \cup (R_{\mathit{ns}} \bowtie S_s)$ associated with a non-sensitive tuple of $R_{\mathit{ns}} \bowtie S_{\mathit{ns}}$. For example, we need to hide that $r_2$ of Tables~\ref{fig:r ns non-sensitive emp} should join with $t_5$ of Table~\ref{fig:s ns Project relation}.
\end{enumerate}
}

\begin{table*}[!h]
{
\begin{center}
\begin{minipage}[t]{.49\linewidth}
    \centering
    \begin{tabular}{|l|l|l|} \hline
            &  \textbf{EID}  & \textbf{Name}   \\\hline\hline
           $r_1$ & E101 & Adam   \\\hline
           $r_2$ & E102 & Bob  \\\hline
           $r_3$ & E103 & John   \\\hline
    \end{tabular}
    \subcaption{A relation $R =$ \texttt{Employee} relation.}
    \label{fig:employee relation with sensitive and non-sensitive tuples}
\end{minipage}
\begin{minipage}[t]{.49\linewidth}
    \centering
    \begin{tabular}{|l|l|l|l|} \hline
         & \textbf{EeID}  & \textbf{Project Name} \\ \hline\hline
        $t_1$ & E101 & Security  \\ \hline
        $t_2$ & E102 & Design   \\ \hline
        $t_3$ & E103 & Code     \\ \hline
        $t_4$ & E103 & Sale   \\ \hline
        $t_5$ & E102 & Sale   \\ \hline
    \end{tabular}
    \subcaption{A relation $S=$ \texttt{Project} relation.}
    \label{fig:Project relation with sensitive and non-sensitive tuples}
\end{minipage}
\end{center}
\caption{Two relations with their sensitive and non-sensitive tuples.}
\label{fig:Two relations with their sensitive and non-sensitive tuples}
}
\end{table*}

\begin{table*}[!h]
{
\begin{center}
\begin{minipage}[t]{.2\linewidth}
    \centering
    \begin{tabular}{|l|l|l|} \hline
            &  \textbf{EID}  & \textbf{Name}   \\\hline\hline
           $r_1$ & E101 & Adam   \\\hline
    \end{tabular}
    \subcaption{$R_s$.}
    \label{fig:rs sensitive emp}
\end{minipage}
\begin{minipage}[t]{.20\linewidth}
    \centering
    \begin{tabular}{|l|l|l|} \hline
            &  \textbf{EID}  & \textbf{Name}   \\\hline\hline
           $r_2$ & E102 & Bob  \\\hline
           $r_3$ & E103 & John   \\\hline
    \end{tabular}
    \subcaption{$R_{\mathit{ns}}$.}
    \label{fig:r ns non-sensitive emp}
\end{minipage}
\begin{minipage}[t]{.28\linewidth}
    \centering
    \begin{tabular}{|l|l|l|l|} \hline
         & \textbf{EeID}  & \textbf{Project Name} \\ \hline\hline
        $t_1$ & E101 & Security  \\ \hline
        $t_2$ & E102 & Design   \\ \hline
    \end{tabular}
    \subcaption{$S_s$.}
    \label{fig:s s sensitive Project relation}
\end{minipage}
\begin{minipage}[t]{.254\linewidth}
    \centering
    \begin{tabular}{|l|l|l|l|} \hline
         & \textbf{EeID}  & \textbf{Project Name} \\ \hline\hline
        $t_3$ & E103 & Code     \\ \hline
        $t_4$ & E103 & Sale   \\ \hline
        $t_5$ & E102 & Sale   \\ \hline
    \end{tabular}
    \subcaption{$S_{\mathit{ns}}$.}
    \label{fig:s ns Project relation}
\end{minipage}
\end{center}
\caption{Sensitive and non-sensitive relations created from two relations of Table~\ref{fig:Two relations with their sensitive and non-sensitive tuples}.}
\label{fig:Sensitive and non-sensitive relations created from two relations of Table}
}
\end{table*}

\begin{table*}[!h]
{
\begin{center}
\begin{minipage}[t]{.49\linewidth}
    \centering
    \begin{tabular}{|l|l|l|} \hline
            &  \textbf{EID}  & \textbf{Name}   \\\hline\hline
           $r_1$ & E101 & Adam   \\\hline
           $r_2$ & E102 & Bob  \\\hline
    \end{tabular}
    \subcaption{$R_{\mathit{ps}}$.}
    \label{fig:r ps}
\end{minipage}
\begin{minipage}[t]{.49\linewidth}
    \centering
    \begin{tabular}{|l|l|l|} \hline
            &  \textbf{EID}  & \textbf{Name}   \\\hline\hline
           $r_2$ & E102 & Bob  \\\hline
           $r_3$ & E103 & John   \\\hline
    \end{tabular}
    \subcaption{$R_{\mathit{ns}}$ same as Table~\ref{fig:r ns non-sensitive emp}.}
    \label{fig:r ns}
\end{minipage}
\end{center}
\caption{Sensitive relation with pseudosensitive tuples and non-sensitive relation, created from $R_s =$ \texttt{Employee} relation of Table~\ref{fig:employee relation with sensitive and non-sensitive tuples}.}
\label{fig:pseudoSensitive and non-sensitive relations created from two relations of Table}
}
\end{table*}


\smallskip
\noindent\textbf{The DB owner-side.}
In order to join, the relations $R_{\mathit{ns}}$ and $S_s$, we follow the approach given in~\cite{TR} that pre-computes all the tuples of $R_{\mathit{ns}}$ that join with $S_s$. We call all such tuples of $R_{\mathit{ns}}$ as pseudo-sensitive tuples. In~\cite{TR}, the authors found that the size of pseudo-sensitive data does not need to consider the entire $R_{\mathit{ns}}$ as sensitive. Particularly, at 10\% of sensitivity level, pseudo-sensitive data is only a fraction (25\%) of the entire database.

In our join strategy (see Algorithm~\ref{alg:join_processing}), before outsourcing the relations $R$ and $S$, the DB owner finds pseudo-sensitive tuples of $R_{\mathit{ns}}$ and keeps them with sensitive tuples of $R_s$, resulting in a new relation, denoted by $R_{\mathit{ps}}$, containing sensitive and pseudo-sensitive tuples (Line~\ref{ln:gen_pseudo_sensitive} of Algorithm~\ref{alg:join_processing}). Now, the DB owner outsources (\textit{i}) encrypted relations $R_{\mathit{ps}}$ and $S_s$, and (\textit{ii}) cleartext relations $R_{\mathit{ns}}$, $S_{\mathit{ns}}$ (Line~\ref{ln:outsource_join_algo} of Algorithm~\ref{alg:join_processing}). Additionally, the DB owner maintains the information for bin-creation (Algorithm~\ref{alg:bin_creation}), which can be used to retrieve tuples after join. Thus, in our case, the join of $R$ and $S$ is converted into the following join:
$$R\bowtie S = (R_{\mathit{ns}} \bowtie S_{\mathit{ns}}) \cup (R_{\mathit{ps}} \bowtie S_s)$$

\smallskip
\noindent\textbf{The cloud-side.}
We use any cryptographic technique for $R_{\mathit{ps}} \bowtie S_s$, and of course, join of the relations $R_{\mathit{ns}}$ and $S_{\mathit{ns}}$ is carried out in the cleartext.

\smallskip
\noindent\textbf{Note: non-foreign-key joins.} The above strategy can also be extended to non-foreign-key joins by encrypting pseudo-sensitive tuples of $S_{\mathit{ns}}$ with $S_s$. However, in this case, we need to avoid join of pseudo-sensitive tuples of $R_{\mathit{ns}}$ and $S_{\mathit{ns}}$ in the encrypted domain, since these tuples will also join in cleartext. It can be done if the DB owner can add an attribute to each sensitive relation to mark such pseudo-sensitive tuples.
}

\DontPrintSemicolon
\LinesNotNumbered \begin{algorithm}[!t]
{
\textbf{Inputs:} Two relations: $R(\mathit{key}, A_1, A_2, \cdots A_m)$ and $S(\mathit{key}, B_1,B_2, \cdots B_{m^{\prime}}) $

\textbf{Outputs:} $R \bowtie S$

\underline{\textbf{DB owner}} \\

\nl Create $R_s$, $R_{ns}$, $S_s$, and $S_{ns}$\nllabel{ln:partition_relation_join_algo}

\nl $\mathit{pseudo\_sensitive\_key}[] \leftarrow \{key \in R_{\mathit{ns}} \mid \Pi_{key} (R_{\mathit{ns}}) \cap  \Pi_{\mathit{key}} (S_{s}) \neq \emptyset \}$ {\scriptsize {\scriptsize \tcp{Retrieve all the keys in $R_{\mathit{ns}}$ that joins with the relation $S_s$}}}  \nllabel{ln:gen_pseudo_sensitive_keys}

\nl $R_{\mathit{ps}} \leftarrow R_{s} \cup (\sigma_{\mathit{key} \in \mathit{pseudo\_sensitive\_key}[]} (R_{ns})$) {\scriptsize  \tcp{Retrieving tuples from $R_{\mathit{ns}}$ based on the the keys present in $\mathit{pseudo\_sensitive\_key[]}$ and merging them with the relation $R_s$}} \nllabel{ln:gen_pseudo_sensitive}

\nl Encrypt $R_{ps}$ and $S_{s}$

\nl Outsource encrypted $R_{\mathit{ps}}$, encrypted $S_{s}$, cleartext $R_{\mathit{ns}}$, and cleartext $S_{\mathit{ns}}$ to cloud \nllabel{ln:outsource_join_algo}

\underline{\textbf{Cloud}} \\

\nl $\mathit{join}_{\mathit{sensitive\_output}} \leftarrow R_{\mathit{ps}} \bowtie S_{s}$, $\mathit{join}_{\mathit{non\_sensitive\_output}} \leftarrow R_{\mathit{ns}} \bowtie S_{\mathit{ns}}$  \nllabel{ln:join_output}

\underline{\textbf{DB owner}} \\
\nl \lIf{$\sigma_{A_i=s_j}(\mathit{join}_{\mathit{sensitive\_output}}, \mathit{join}_{\mathit{non\_sensitive\_output}})$}
{
     \text{Execute Algorithm~\ref{alg:bin_retrieval}  {\scriptsize \tcp{If the user is interested to fetch a tuple having $s_j$ in attribute $A_i$}}}
}
\caption{Algorithm for execution join queries.}
\label{alg:join_processing}
}
\end{algorithm}
\setlength{\textfloatsep}{0pt}

{
\subsection{Range Queries}
\label{subsec:range Queries}
Let $A$ be an attribute on which we want to execute a range query. For answering a range query, we convert it into the selection query, which can be executed using QB. However, a careless execution of QB for answering a range query, which is converted into selection queries, may result in retrieval of either entire sensitive or non-sensitive data. For example, consider 16 sensitive values, say $s_1, s_2, \ldots s_{16}$, and their associated non-sensitive values, say $\mathit{ns}_1,\mathit{ns}_2,\ldots,\mathit{ns}_{16}$, where the sensitive value $s_i$ is associated with the non-sensitive value $\mathit{ns}_i$. Figure~\ref{fig:allocating 16 values for range query} shows a way to assign these values to bins.

\begin{figure}[!h]
  \centering
  \includegraphics[scale=.6]{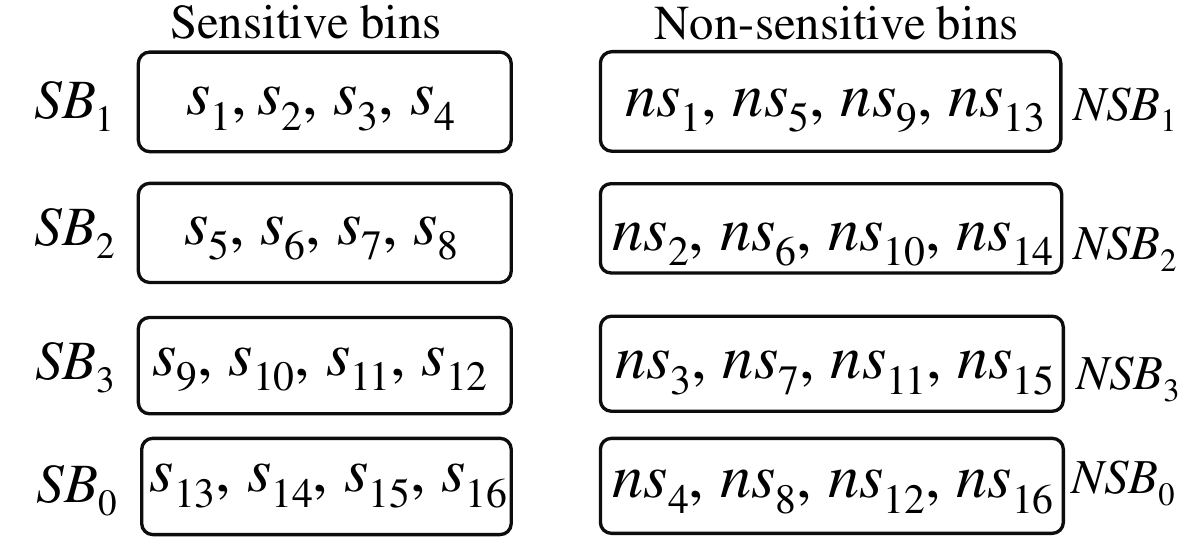}
  \caption{A way to allocate 16 sensitive and non-sensitive values to bins by following Algorithm~\ref{alg:bin_creation}.}
  \label{fig:allocating 16 values for range query}
  \FFF\FFF
\end{figure}

Consider a range query for values $s_1$ to $s_4$. Answering this range query using QB will result in retrieval of the entire non-sensitive data and the bin $\mathit{SB}_1$. Our objective is to create bins in a way that results in a few tuple retrieval.

We describe a procedure for the case $|S|\leq |\mathit{NS}|$, as the restriction is followed by Algorithm~\ref{alg:bin_creation} in \S\ref{subsec:The Base Case}. We use the example of 16 sensitive and 16 non-sensitive values of the attribute $A$. In order to answer range queries, the DB owner builds a full binary tree on the unique values of the attribute $A$ of the non-sensitive relation and traverses the tree to find a node that covers the range. Thus, the DB owner retrieves tuples satisfying a larger range query that also covers the desired range query. Note that many papers~\cite{DBLP:conf/icde/HacigumusMI02,DBLP:journals/pvldb/LiLWB14,DBLP:journals/ton/LiLWB16,DBLP:journals/tods/DemertzisPPDGP18} used the same approach of fetching a large range value to satisfy the desired range value, and hence, preventing exact range values to be revealed to the adversary.

\medskip
\noindent\textbf{Full binary tree and bin creation.} The DB owner first builds a full binary tree\footnote{The DB owner may also build a $k$-ary tree, where each node (except leaf nodes) contains $k$ child nodes.} for the values of the attribute $A$ of the non-sensitive relation; see Figure~\ref{fig:full binary tree} for 16 non-sensitive values.
For each level of the tree, the DB owner applies Algorithm~\ref{alg:bin_creation} that takes nodes of the level as inputs. In particular, for the leaf nodes, \textit{i}.\textit{e}., level 0, Algorithm~\ref{alg:bin_creation} takes 16 non-sensitive values, and produces 4 sensitive and 4 non-sensitive bins, by following Lines~\ref{ln:largest_divisors}-\ref{ln:assign_remaining_NS} of Algorithm~\ref{alg:bin_creation}.

\begin{figure}[!h]
  \centering
  \includegraphics[scale=.45]{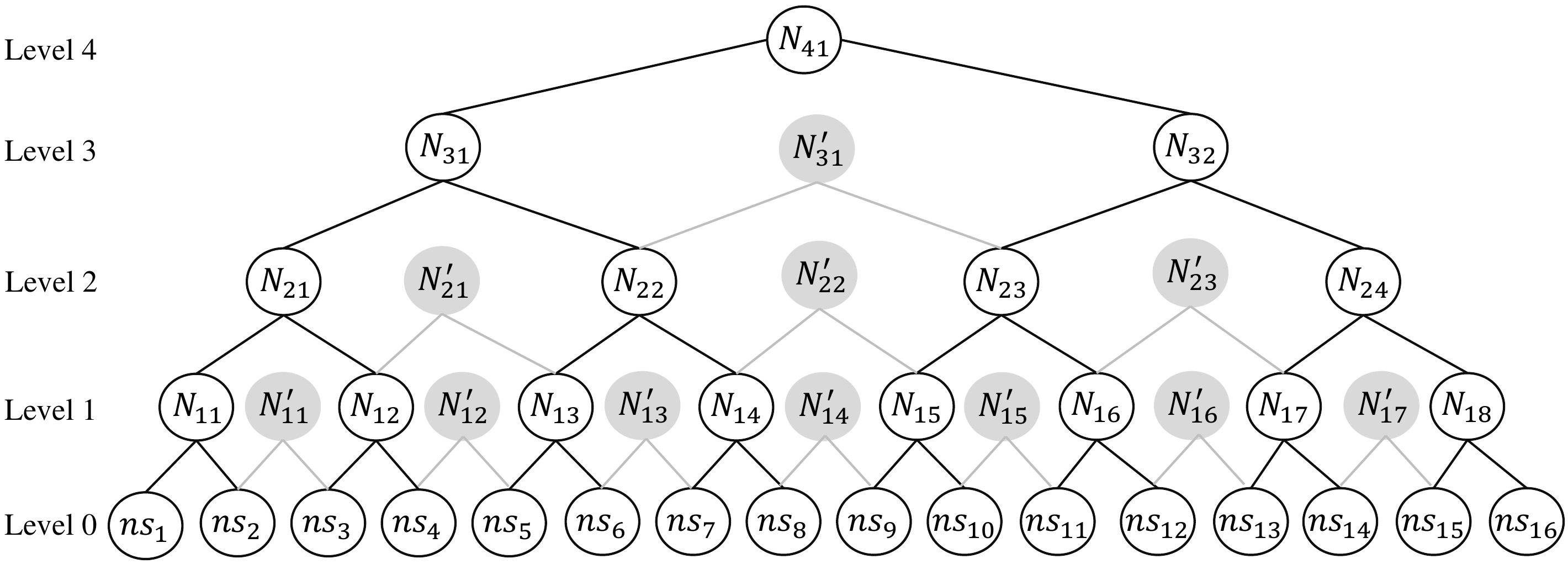}
  \caption{A full binary tree with some additional nodes for 16 non-sensitive values.}
  \label{fig:full binary tree}
\end{figure}

At the level 1, Algorithm~\ref{alg:bin_creation} takes 8 inputs that represent the nodes ($N_{11},N_{12},\ldots,N_{18}$; see white nodes in Figure~\ref{fig:full binary tree}) at the level 1, and each input value of the level 1 holds two non-sensitive values, which are child nodes of a level 1's node. For example, the node $N_{11}$ holds two values $\mathit{ns}_1,\mathit{ns}_2$. For the 8 values, Algorithm~\ref{alg:bin_creation} provides two non-sensitive bins (each is containing 8 values) and four sensitive bins (each is containing 4 values). Let $\mathit{NSB}_{ij}$ be the $j^{\mathit{th}}$ non-sensitive bin at the $i^{\mathit{th}}$ level, and let $\mathit{SB}_{ij}$ be the $j^{\mathit{th}}$ sensitive bin at the $i^{\mathit{th}}$ level. Thus, Algorithm~\ref{alg:bin_creation} produces the following bins:

\begin{center}
$\mathit{NSB}_{10}$ containing $\langle N_{11},N_{12},\ldots,N_{14}\rangle$,

$\mathit{NSB}_{11}$ containing $\langle N_{15},N_{16},\ldots,N_{18}\rangle$,

$\mathit{SB}_{10}$ containing $\langle s_1,s_2,s_9,s_{10}\rangle$,

$\mathit{SB}_{11}$ containing $\langle s_3,s_4,s_{11},s_{12}\rangle$,

$\mathit{SB}_{12}$ containing $\langle s_5,s_6,s_{13},s_{14}\rangle$,

$\mathit{SB}_{13}$ containing $\langle s_7,s_8,s_{15},s_{16}\rangle$.
\end{center}

At level 2, Algorithm~\ref{alg:bin_creation} takes 4 inputs that represent the nodes ($N_{21},N_{22},N_{23},N_{24}$; see white nodes in Figure~\ref{fig:full binary tree}) at the level 2, and each input value of the level 2 holds four non-sensitive values, which are grandchild nodes of a level 2's node. For the 4 input values, Algorithm~\ref{alg:bin_creation} provides two non-sensitive bins (each is containing 8 values) and two sensitive bins (each is containing 8 values), as follows:

\begin{center}
$\mathit{NSB}_{20}$ containing $\langle N_{21},N_{22}\rangle$,

$\mathit{NSB}_{21}$ containing $\langle N_{23},N_{24}\rangle$,

$\mathit{SB}_{20}$ containing $s_5,s_6,s_7,s_8,s_{9},s_{10},s_{11},s_{12}$,

$\mathit{SB}_{21}$ containing $s_1,s_2,s_3,s_4,s_{13},s_{14},s_{15},s_{16}$.
\end{center}
The DB owner follows the same procedure for the higher nodes, except the root node and child nodes of the root node.

Further, at each level except the root node, the child nodes of the root node, and the leaf nodes, the DB owner creates additional nodes (see gray-colored nodes in Figure~\ref{fig:full binary tree}) that become parent nodes of the lower level's two adjacent nodes that do not have a common parent node. The algorithm given in~\cite{DBLP:journals/tods/DemertzisPPDGP18} also uses these additional nodes for answering a range query. For example, at the level 2 in Figure~\ref{fig:full binary tree}, the DB owner creates 7 such nodes. Let $\mathit{NSB}_{ij}^{\prime}$ be the $j^{\mathit{th}}$ non-sensitive bin at the $i^{\mathit{th}}$ level for these additional nodes, and let $\mathit{SB}_{ij}^{\prime}$ be the $j^{\mathit{th}}$ sensitive bin at the $i^{\mathit{th}}$ level for these additional nodes. Algorithm~\ref{alg:bin_creation} takes these 7 inputs and produces 2 non-sensitive bins (each is containing 8 values) and 4 sensitive bins (each is containing 8 values), as follows:\footnote{Note that the bins $\mathit{NSB}_{11}^{\prime}$ and $\mathit{SB}_{13}^{\prime}$ will ask to fetch two fake tuples each to maintain an identical-sized bin.}

\begin{center}
$\mathit{NSB}_{10}^{\prime}$ containing $\langle N_{11}^{\prime},N_{12}^{\prime},\ldots,N_{14}^{\prime}\rangle$,

$\mathit{NSB}_{11}^{\prime}$ containing $\langle N_{15}^{\prime},N_{16}^{\prime},N_{17}^{\prime},\textnormal{ 2 fake tuples}\rangle$,

$\mathit{SB}_{10}^{\prime}$ containing $\langle s_2,s_3,s_{10},s_{11}\rangle$,

$\mathit{SB}_{11}^{\prime}$ containing $\langle s_4,s_5,s_{12},s_{13}\rangle$,

$\mathit{SB}_{12}^{\prime}$ containing $\langle s_6,s_7,s_{14},s_{15}\rangle$,

$\mathit{SB}_{13}^{\prime}$ containing $\langle s_8,s_9,\textnormal{ 2 fake tuples}\rangle$.
\end{center}


\medskip
\noindent
\textbf{Note.}  {The self-explainable pseudocode in Algorithm~\ref{alg:bin_creation_range}, given in Appendix~\ref{app_sec:Pseudocode}, describes all the above steps of binary-tree creation and bin-creation for range queries.}

\medskip
\noindent\textbf{Bin retrieval and answering range queries.} We provide two approaches: best-match method and least-match method, for retrieving the bins in answering a range query.

\smallskip\noindent\emph{Best-match method}. This method traverses the tree in a bottom-up fashion and finds a node that covers the entire range. Then, it retrieves a non-sensitive bin corresponding to this node and a sensitive bin. 
For example, if the query is for values $\mathit{ns}_1$ to $\mathit{ns}_4$, then by traversing the tree (see Figure~\ref{fig:full binary tree}) in a bottom-up fashion, the DB owner retrieves a non-sensitive bin corresponding to the level 2, since the node $N_{21}$ covers the entire range. Thus, the DB owner retrieves the bins $\mathit{NSB}_{20}$  and $\mathit{SB}_{21}$. \emph{ {The self-explainable pseudocode in Algorithm~\ref{alg:bin_retrieve_range_best_match}, given in Appendix~\ref{app_sec:Pseudocode}, describes the best-match method for answering range queries.}}

\smallskip\noindent\emph{Least-match method}.  Assume a query is for values $\mathit{ns}_8$ to $\mathit{ns}_{12}$. The best-match method will find only the root node that satisfies this query, and hence, it will result in the retrieval of entire non-sensitive or sensitive relation. Thus, we propose a different method that breaks the range query into many sub-range queries and finds a minimal set of nodes that cover the range.
For example, the node $N_{23}$ satisfies the query for value $\mathit{ns}_9$ to $\mathit{ns}_{12}$, and the leaf node having the value 8 satisfies the query for the value $\mathit{ns}_8$. Thus, the DB owner retrieves the bins $\mathit{NSB}_{21}$  and $\mathit{SB}_{20}$ to satisfy the query for the value $\mathit{ns}_9$ to $\mathit{ns}_{12}$, and a sensitive bin and a non-sensitive bin to satisfy the query for the value $\mathit{ns}_8$.

\smallskip\noindent\textbf{Aside: using additional nodes for answering a range query by following Algorithm~\ref{alg:bin_retrieval}.} Assume a query is for values $\mathit{ns}_4$ to $\mathit{ns}_7$. The best-match method finds only the root node that satisfies the query, and hence, results in retrieval of the entire non-sensitive or sensitive relation. In contrast, the least-match method will break the query into sub-range queries, such as (\textit{Q1}) a query for $\mathit{ns}_4$, a query for $\mathit{ns}_5,\mathit{ns}_6$, and a query for $\mathit{ns}_7$, or (\textit{Q2}) four selection queries one for each value.

The first query (\textit{Q1}) will find the node $N_{13}$ that covers the values $\mathit{ns}_5,\mathit{ns}_6$, and two leaf nodes one for $\mathit{ns}_4$ and another for $\mathit{ns}_7$. This will result in retrieval of 28 tuples, such as one sensitive bin and non-sensitive bin for $s_4$ (containing 4 tuples in each; see Figure~\ref{fig:allocating 16 values for range query}), one sensitive bin and non-sensitive bin for $s_7$ (containing 4 tuples in each; see Figure~\ref{fig:allocating 16 values for range query}), and the bin $\mathit{NSB}_{10}$ (containing 8 tuples) and $\mathit{SB}_{12}$ (containing 4 tuples) for answering the query for a range $\mathit{ns}_5$ to $\mathit{ns}_6$. However, the second query (\textit{Q2}) will be worse in terms of retrieving the tuples. It will result in the retrieval of the entire non-sensitive data (see Figure~\ref{fig:allocating 16 values for range query}).

In order to reduce the number of retrieved tuples, the DB owner can use the bins for the additional nodes. In particular, the DB owner finds that the nodes $N^{\prime}_{12}$ and $N^{\prime}_{13}$ that satisfy the value $\mathit{ns}_4$-$\mathit{ns}_5$ and $\mathit{ns}_6$-$\mathit{ns}_7$, respectively. Thus, the bins $\mathit{NSB}_{10}^{\prime}$ (containing 8 tuples), $\mathit{SB}_{11}^{\prime}$ (containing 4 tuples), $\mathit{SB}_{12}^{\prime}$ (containing 4 tuples) can fulfill the query, and will result in retrieval of 16 tuples.

Note that by using the bins for the additional nodes, one can answer queries for two adjacent nodes that do not share a common parent in the original full binary tree, for example, values 8 and 9.


\subsection{Insert Operation and Re-binning}
\label{subsec:Insert Operation and Re-binning}
QB does not allow outsourcing new tuples immediately as the new tuples arrive at the DB owner. The DB owner collects enough number of tuples before outsourcing them, while the DB owner can either update the existing bins (by increasing an identical size of each bin) or create all the new bins.

Particularly, the DB owner waits for new tuples until the DB owner collects new sensitive and non-sensitive values equals to the number of existing sensitive and non-sensitive bins, such that each bin receives a new value.\footnote{In case, if the DB owner cannot collect new values equal to the number of bins, the DB owner may outsource fake values.} Let $p$ and $q$ be the number of existing sensitive and non-sensitive bins, respectively. Note that when collecting $p$ sensitive and $q$ non-sensitive values, the DB owner does not outsource these values if they will not become a part of each existing sensitive or non-sensitive bin. Note that if the new values become a part of only one sensitive and one non-sensitive bin, it reveals an association of values. {Further, note that in outsourcing new tuples, the DB owner sends appropriately encrypted sensitive data and cleartext non-sensitive data to the cloud. Thus, the cloud does not have access to any sensitive data in cleartext, and hence, cannot launch an attack, if it is an honest-but-curious adversary, \textit{i}.\textit{e}., a passive attacker, as we mentioned in \S\ref{subsec:Adversarial Model}.}

However, the insertion of more values in existing bins incurs the overhead,  as will be shown in Experiment 6 in~\S\ref{sec:Experimental Validation}. Hence, Algorithm~\ref{alg:bin_creation} is re-executed when the overhead crosses a user-defined threshold.

Now, we describe a procedure for outsourcing new tuples while using the existing sensitive and non-sensitive bins. Let $s_i$ and $\mathit{ns}_j$ be the value of new sensitive and non-sensitive tuples, respectively. When inserting new tuples, the value $s_i$ or $\mathit{ns}_j$ may exist in the outsourced data, and based on the existence of the values we classify them into four groups, as follows: (\textit{i}) \emph{old sensitive value} (old-S): the value $s_i$ exists in the outsourced sensitive data, (\textit{ii}) \emph{new sensitive value} (new-S): the value $s_i$ does not exist in the outsourced sensitive data, (\textit{iii}) \emph{old non-sensitive value} (old-NS): the value $\mathit{ns}_j$ exists in the outsourced non-sensitive data, and (\textit{iv}) \emph{new non-sensitive value} (new-NS): the value $\mathit{ns}_j$ does not exist in the outsourced non-sensitive data.

Based on the above-mentioned four types of values, the following four possible insert scenarios are allowed while using QB.
\begin{enumerate}[noitemsep,leftmargin=0.01in]
  \item \emph{Inserting old-S and old-NS}. This scenario is trivial to handle and does not require any update to the existing bins. The DB owner outsources the encrypted sensitive tuples and non-sensitive data in cleartext.

  \item \emph{Inserting new-S and new-NS}. The DB owner increases the size of each bin by one and inserts the values into existing bins.

  \item \emph{Inserting old-S and new-NS}. Inserting tuples of $s_i$ does not require an update to the sensitive bins. The DB owner checks whether the value $\mathit{ns}_j$ has an associated sensitive value or not in the outsourced data, by following Line 6-7 of Algorithm~\ref{alg:bin_retrieval}. If the value $\mathit{ns}_j$ has an associated sensitive value, say $s_k$, then the DB owner updates the non-sensitive bin associated with a sensitive bin holding $s_k$ with the value $\mathit{ns}_j$, according to Line 6 of Algorithm~\ref{alg:bin_creation}. If the value $\mathit{ns}_j$ has no associated sensitive value, then the DB owner randomly inserts each non-sensitive value, one per non-sensitive bin.

  \item \emph{Inserting new-S and old-NS}. This case is just the opposite of the previous case.
\end{enumerate}
Note that all the four scenarios may require to outsource some fake tuples to have identical-sized sensitive bins. The update/delete operation can also be done as an insert operation, where some additional tuples are outsourced to notify the non-existence of tuples.

\subsection{Conjunctive Queries}
\label{subsec:Conjunctive Queries}
As defined, QB only works for selection queries with a single attribute in the search clause. Conjunctive queries that contain several such conjuncts can also be supported in several ways. First, note that QB can be applied to multiple attributes, say $A$ and $B$, in a relation. During query processing, if a query refers to both attributes $A$ and $B$, we can select the more selective index and execute QB on it without inference attacks. Using QB on both attributes simultaneously, however, unless done carefully, can lead to leakage.
An approach to apply QB is to consider attributes that
appear commonly together in queries as a single (paired) attribute. Thus, values of this
paired attribute would be attribute value pairs on which QB can be applied. In general, the relation scheme will need to be partitioned into attribute subsets on which QB can be applied. During query execution, the query processing algorithm will choose the corresponding attribute subset that is most beneficial (will result in the least overhead) to execute the query. One such solution is to create partitions of singleton attributes, but, then, conjunctive queries will run on a single attribute and reduce to the first solution.

\subsection{Handling Workload-skew Attack}
\label{subsec:Handling Workload-skew Attack}

{
The query execution and the corresponding accessed tuples (in the absence of an access-pattern-hiding scheme) allows an adversary to deduce the frequency of queries, without knowing the cleartext value of the query keyword. Such information coupled with background knowledge may reveal additional information to the adversary. For example, in Table~\ref{fig:employee relation}, if many queries access tuples $t_2$ and $t_4$, then the adversary may deduce that these two tuples may have identical values in one or more attributes. Furthermore, if the adversary has background knowledge that \texttt{John} is the company's CEO (for which people ask many queries), then the adversary may deduce that $t_2$ and $t_4$ belong to \texttt{John}. We call such an attack as \emph{workload-skew attack}. We illustrate the workload-skew based attack (see Figure~\ref{fig:workload_problem1}) and also our approach for addressing it for the base case of QB. In our proposed solution, we consider that a query predicate is either highly frequent or infrequent. Our solution can be  extended to the general case of different frequency query predicates by creating groups of query keywords based on their query frequencies and allocating such groups to non-sensitive bins such that the number of values in each bin is equal.}

\parskip 0pt
\setlength{\parindent}{15pt}


\begin{figure}[h]
\begin{center}
  \begin{minipage}{.49\linewidth}
  \centering
  \includegraphics[scale=0.7]{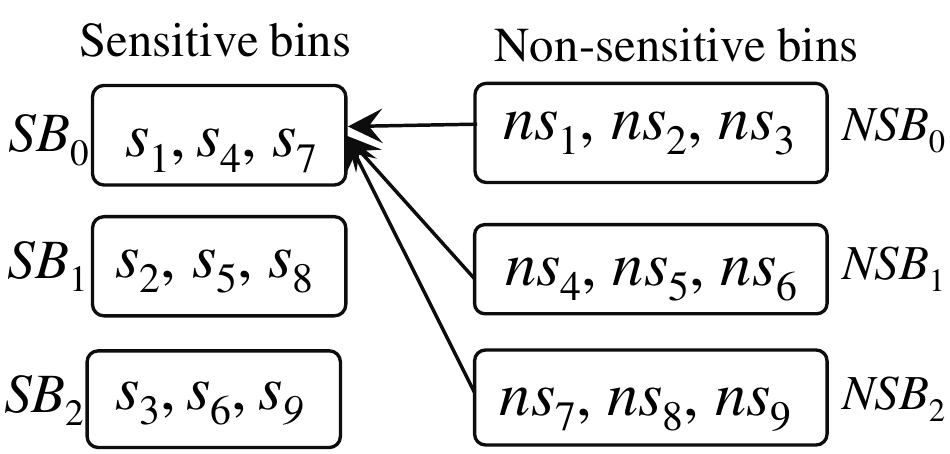}
  \subcaption{The workload-skew attack.}
  \label{fig:workload_problem1}
  \end{minipage}
  \begin{minipage}{.49\linewidth}
  \centering
  \includegraphics[scale=0.7]{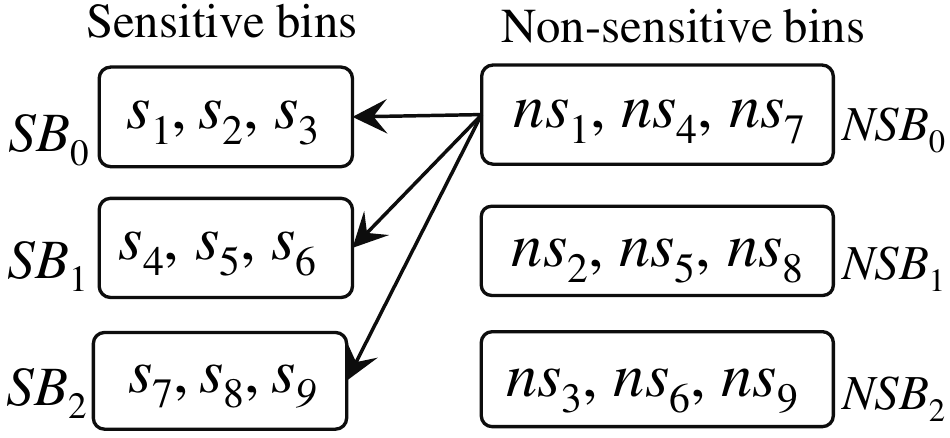}
  \subcaption{A solution.}
  \label{fig:workload_problem2}
  \end{minipage}
\end{center}
\caption{The workload-skew attack and solution under QB, where $\mathit{ns}_1$, $\mathit{ns}_4$, and $\mathit{ns}_7$ are frequent predicates.}
\label{fig:workload_problem}
\end{figure}

Figure~\ref{fig:workload_problem1} shows bins created by Algorithm~\ref{alg:bin_creation} for 9 sensitive values and their associated 9 non-sensitive values. Consider the values $\mathit{ns}_1$, $\mathit{ns}_4$, and $\mathit{ns}_7$ occur most frequently in the query workload. Hence, in this example, the adversary can trivially figure out by observing the sensitive tuple retrieval that only the bin $\mathit{SB}_0$ has the associated sensitive values with $\mathit{ns}_1$, $\mathit{ns}_4$, and $\mathit{ns}_7$. The reason is that these four bins are retrieved more frequently compared to any other bin. Thus, the adversary can determine that the encrypted values $s_1$, $s_4$, and $s_7$ are associated with either $\mathit{ns}_1$, $\mathit{ns}_4$, or $\mathit{ns}_7$. This is more information than what the adversary had prior to the query execution since each sensitive value, \textit{e}.\textit{g}., $s_1$, could be any of the 9 non-sensitive values. However, it is hard for the adversary to find out which sensitive value out of the three sensitive values of the bin $\mathit{SB}_0$ is exactly associated with $\mathit{ns}_1$, $\mathit{ns}_4$, or $\mathit{ns}_7$.\footnote{We are not assuming that a sensitive bin is not associated with each non-sensitive bin. But, because of heavy-hitter queries, the other bins are retrieved less frequently than the bins having frequent selection predicates.} In order to prevent the workload-skew attack, we need to allocate sensitive values carefully, thereby the sensitive values associated with frequent selection predicates are distributed over all the bins.

{
Below, we provide a strategy for handling the workload-skew attack in QB.
The idea is to find groups of either $x (=|\mathit{NSB}|)$ non-sensitive query keywords or $y=|\mathit{SB}|$ sensitive query keywords that are queried almost equally, and then follow the given steps to create bins appropriately.  {The self-explainable pseudocode in Algorithm~\ref{alg:bin_creation_workload}, given in Appendix~\ref{app_sec:Pseudocode}, and the following steps describe bin-creation method for the case of $x$ frequent non-sensitive keywords. The similar method can be used for bin creation in the case of $y$ frequent sensitive keywords.}


\smallskip
\noindent\textbf{Step 1: Bin-creation to deal with workload-skew attack.} In order to deal with workload-skew attack, we, first, need to modify Algorithm~\ref{alg:bin_creation} for bin-creation. The modified bin-creation algorithm (Algorithm~\ref{alg:bin_creation_workload}, given in Appendix~\ref{app_sec:Pseudocode}) contains the following steps:

\begin{enumerate}[noitemsep,leftmargin=0.01in]
\item \emph{Create bins}. Find two largest divisors, say $x\geq y$, of $|\mathit{NS}|$, create $\mathit{NSB}=\lceil |\mathit{NS}|/x\rceil$ non-sensitive bins, and $x$ sensitive bins (Line~\ref{ln:largest_divisors} of Algorithm~\ref{alg:bin_creation}).

\item \emph{Assign non-sensitive values}. Create groups, each of size $x$, of the frequent non-sensitive predicates, resulting in $u\leq \mathit{NSB}$ groups. Assign the $i^{\mathit{th}}$ group to the $i^{\mathit{th}}$ non-sensitive bin. Now, assign all the remaining non-sensitive values, if any, as follows: Find those non-sensitive bins that are not full, \textit{i}.\textit{e}., having less than $x$ non-sensitive values. Then, assign the remaining non-sensitive values to all such non-sensitive bins, such that each non-sensitive bin should contain at most $x$ non-sensitive values.

\item \emph{Assign sensitive values}. Assign the sensitive values associated with a non-sensitive value, say $\mathit{ns}_j = \mathit{NSB}_z[j]$, where $0\leq j\leq x-1$, to the $j^{\mathit{th}}$ sensitive bin at the $z^{\mathit{th}}$ position.
\end{enumerate}

By following the above steps, Figure~\ref{fig:workload_problem2} shows 3 sensitive bins in the case of $\mathit{ns}_1$, $\mathit{ns}_4$, and $\mathit{ns}_7$ as the frequent query predicates. Note that the execution of QB using this strategy insists on retrieving all the sensitive bins for answering frequent predicates. Thus, the adversary cannot determine which bin has a sensitive value associated with the values $\mathit{ns}_1$, $\mathit{ns}_4$, or $\mathit{ns}_7$.\footnote{If there are less then $x$ --- which is the size of a non-sensitive bin --- frequent predicates in a non-sensitive bin, then we need to send fake queries for infrequent keyword of the bin, as frequent as frequent predicates, leading to retrieval of each sensitive bin. It will hide that the how many keywords are frequent and infrequent in the bin.}
}

\smallskip
\noindent\textbf{Step 2: Bin-retrieval --- answering queries.} To retrieve tuples satisfying a query keyword, we use Algorithm~\ref{alg:bin_retrieval}.

}

\subsection{Enhancing Security-Levels of Indexable Techniques}
\label{subsec:Enhancing security-levels of indexable techniques}
We show how QB can be integrated with an indexable cryptographic technique, namely Arx~\cite{arx-popa-2017} that uses a non-deterministic encryption mechanism. In Arx, the DB owner stores each domain value $v$ and the frequency of $v$ in the database. The technique encrypts the $i^{\mathit{th}}$ occurrence of $v$ as a concatenated string $\langle v,i\rangle$ thereby ensuring that no two occurrences of $v$ result in an identical ciphertext. Such a ciphertext representation can then be indexed on the cloud-side. During retrieval, the user keeps track of the histogram of occurrences for each value and generates appropriate ciphertexts that can be used to query the index on the cloud. It is not difficult to see that Arx, by itself, is susceptible to the size, frequency-count, workload-skew, and access-pattern attacks. The query processing using Arx as efficient as cleartext version due to using an index.

The use of QB with Arx makes it secure against output-size, frequency-count, and workload-skew attacks. Of course, QB takes more time as compared to Arx, since the time of $|\mathit{SB}|$ searches cannot be absorbed in a single index scan unless all $|\mathit{SB}|$ values lie in a single node of the index. In the worst case, we traverse the index at most $|\mathit{SB}|$ times, unlike Arx, which traverses the index only once for a single selection query. 
It, however, significantly enhances the security of Arx by preventing output size, frequency count, and workload-skew attacks. However, QB does not protect access-patterns being revealed which could be prevented using ORAM. Determining whether coupling ORAM with Arx mixed with QB or using a more secure cryptographic solution, \textit{e}.\textit{g}., secret-sharing, which uses a linear scan 
to prevent access-patterns, with QB, more efficient (while QB with both the solutions strengthen the underlying cryptographic technique) is an open question.

{

\section{Experimental Evaluation}
\label{sec:Experimental Validation}
This section presents experimental evaluations of \textsc{Panda}.
As we mentioned that \textsc{Panda} does not need any specific cryptographic technique for encrypted the data, we used \textsc{Panda} with two cryptographic techniques/systems, such as SGX-based Opaque~\cite{opaque} and multi-party computations (MPC)-based Jana~\cite{jana}. We installed Jana on a machine of 3.5GHz, Intel Xeon 8-core processor, 64GB RAM, 3TB disk and Opaque on a machine 3.5GHz, Intel i7 4-core processor, 32GB RAM with SGX support, 350GB disk.\footnote{Please note that we selected two different machines, because our intention is not to compare Jana and Opaque.} We used Order and LineItem tables of TPC-H benchmark. The DB owner stores sensitive and non-sensitive bins, whose size was propositional to the domain size of the searchable attributes and independent of the database size. For example, for LineItem table, metadata for attributes OrderKey, PartKey, and SuppKey were $\approx$3MB, $\approx$1MB and $\approx$0.1MB, respectively, while the size of LineItem table having 12M rows was $\approx$1.5GB.

In the following, we show our experimental results: (\textit{i}) We evaluate \textsc{Panda}'s QB mixed with Jana on 1M LineItem for selection and range queries.\footnote{Data insertion time was significant in Jana, thus, we used only 1M rows. Further, the current Jana version does not support joins in MPC.} (\textit{ii}) We evaluate \textsc{Panda}'s QB mixed with Opaque on 6M and 12M LineItem for selection, range, and join queries.

\begin{table}[!h]
  \centering
    \begin{tabular}{|l|l|l|l|l|l|l|}
    \hline
    \textbf{Technique} & \textbf{20\%} & \textbf{40\%} & \textbf{60\%} & \textbf{80\%}\\ \hline\hline

    MPC-based Jana~\cite{jana} (1M) & 168 & 318 & 481 & 661  \\ \hline

    SGX-based Opaque~\cite{opaque} (6M) & 26 & 42 & 59 & 78 \\ \hline
    \end{tabular}
    \caption{Exp 1: Time (in seconds) for executing a selection query, when mixing QB with Opaque and Jana at different levels of sensitivity.}
    \label{tab:opaque mpc}
\end{table}

\smallskip
\noindent\textbf{Exp 1: Selection query execution.} Table~\ref{tab:opaque mpc} shows the time taken when using QB with Opaque and Jana at different levels of sensitivity. Without using QB for answering a selection query, Opaque~\cite{opaque} took 89 seconds on 6M rows of LineItem table and Jana~\cite{jana} took 797 seconds on 1M rows of LineItem table.\footnote{Note that in the conference version of this paper, the execution time of Jana was higher than the time taken in Table~\ref{tab:opaque mpc}. The reason is that in this paper, we used a newer version of Jana.} Note that the time to execute the same query on cleartext data was only 0.0002 seconds. QB improves not only the performance of Opaque and Jana, but also makes them to work securely on partitioned data and resilient to output-size attack. 

\smallskip	
\noindent\textbf{Exp 2: Range query execution.} We executed a range query on OrderKey column of LineItem table. Opaque and Jana scan the entire data for answering any query. Thus, range query execution time was not impacted by the size of a range. We selected a range of length 400. Table~\ref{tab:range_query_exe_time} shows the time taken by a range query on different sensitivity datasets using QB mixed with Jana and Opaque. Note that the time taken by the same range query without using QB on 1M rows was 841s using Jana, and was 124s on 6M rows and 288s on 12M rows using Opaque. It shows that though fetching more rows according to QB, it does not incur overheads in terms of the computation time.

\begin{table}[!h]
\centering
\begin{tabular}{|l|l|l|l|l|l|l|}\hline
\textbf{Technique} & \textbf{20\%} & \textbf{40\%} & \textbf{60\%} & \textbf{80\%}\\ \hline\hline

MPC-based Jana~\cite{jana} (1M) & 170 & 319 & 483 & 688  \\ \hline

SGX-based Opaque~\cite{opaque} (6M)& 25 & 50 &  74 & 98  \\ \hline

SGX-based Opaque~\cite{opaque} (12M)& 49 & 98 & 146 & 216 \\ \hline

\end{tabular}
\caption{Exp 2: Time (in seconds) for executing a range query, when mixing QB with Opaque and Jana at different levels of sensitivity.}
\label{tab:range_query_exe_time}
\end{table}

\smallskip	
\noindent\textbf{Exp 3: Join query execution.} We executed a join query mixed with a selection predicate covering 400 OrderKey of Order and LineItem tables. We used Order table of 1.5M rows with 6M rows of LineItem table as smaller datasets, and Order table of 3M rows with 12M rows of LineItem table as larger datasets. Table~\ref{tab:join_query_exe_time} shows the time taken by a join query on different sensitivity datasets using QB mixed with Opaque. Note that the time taken by the same join query without using QB was 154s on the smaller dataset and 364s on the larger dataset using Opaque. We also tried to execute a join query without selection; however, Opaque does not support neither amount of rows.

\begin{table}[!h]
\centering
\begin{tabular}{|l|l|l|l|l|l|l|}\hline
\textbf{Technique} & \textbf{20\%} & \textbf{40\%} & \textbf{60\%} & \textbf{80\%}\\ \hline\hline

SGX-based Opaque~\cite{opaque} (6M) & 51 & 77 & 102 & 129 \\ \hline

SGX-based Opaque~\cite{opaque} (12M) & 102 & 155& 207 & 284 \\ \hline

\end{tabular}
\caption{Exp 3: Time (in seconds) for executing a join query, when mixing QB with Opaque at different levels of sensitivity.}
\label{tab:join_query_exe_time}
\end{table}

\smallskip	
\noindent\textbf{Exp 4: Impact of communication cost.}
Since QB fetches more data than the desired data, it may affect the overall performance. We fetched the maximum number of rows in the case of range queries. Particularly, in the case of 80\% of 12M LineItem table, we fetched $\approx$70,000 rows, whose size was $\approx$14MB.  When using slow (100MB/s), medium (500MB/s), and fast (1GB/s) speed of data transmission, the data transmission time was negligible.

%
%
%
%

\smallskip
\noindent\textbf{Exp 5: Impact of bin size.} Table~\ref{tab:Impact of bin-size} plots an average time for a selection query using QB with a different bin size, which is in turn governed by the values of $|\mathit{SB}|$ and $|\mathit{NSB}|$, respectively. We plot the effect of $||\mathit{SB}|-|\mathit{NSB}||$ on retrieval time and find that the minimum time is achieved when $|\mathit{SB}|\approx|\mathit{NSB}|$. 

\begin{table}[!h]
\centering
\begin{tabular}{|l|l|l|l|l|l|l|}\hline
\textbf{$||\mathit{SB}|-|\mathit{NSB}||$} & \textbf{400} & \textbf{500} & \textbf{750} & \textbf{1,000}  \\\hline\hline

Time to execute a selection query on 20\% dataset using Opaque & 25 & 28 & 31 & 34 \\ \hline
\end{tabular}
\caption{Exp 5: Impact of bin-size.}
\label{tab:Impact of bin-size}
\end{table}

\smallskip
\noindent\textbf{Exp 6: Impact of insert operation.} In the experiment, insertions were processed (as per the method, given in \S\ref{subsec:Insert Operation and Re-binning}) in batches of 10,000 and after each batch, selection queries were executed to determine the overhead due to insertion. Finally, after 7 batches of insertion, Algorithm~\ref{alg:bin_creation} was re-executed to recreate bins. Table~\ref{tab:Impact of insert} confirms that the query cost increases but only marginally in the presence of insertion and (as shown by the last column) reduces by re-binning. In Table~\ref{tab:Impact of insert}, the first row shows the size of increasing data after each 10,000 rows' insertion and the second row shows the time in executing a selection on the dataset.

\begin{table}[!h]
\centering
\begin{tabular}{|p{5cm}|l|l|l|l|l|l|l|l|l|}\hline
\textbf{\#inserted rows}  & \textbf{10,000} & \textbf{20,000} & \textbf{30,000} & \textbf{40,000} & \textbf{50,000} & \textbf{60,000} & \textbf{70,000} & \textbf{Re-bin}\\\hline\hline

Time to execute a selection query on 20\% dataset using Opaque & 27 & 28 & 29 & 30 & 31 & 32 & 34 & 30 \\ \hline
\end{tabular}
\caption{Exp 6: Impact of insert.}
\label{tab:Impact of insert}
\end{table}

\smallskip
\noindent\textbf{Exp 7. Number of fake tuples.} Table~\ref{tab:Number of fake tuple inserted due to QB} summarizes the number of fake tuples added for TPC-H LineItem data at different levels of sensitivity. The reason of decreasing fake tuples when increasing sensitivity is that more real tuples take place of the fake tuples. In general, the addition of fake tuples will adversely affect QB, especially, if data is skewed. However, as shown in Tables~\ref{tab:opaque mpc},~\ref{tab:range_query_exe_time}, and~\ref{tab:join_query_exe_time}, QB remains significantly better compared to fully cryptographic approaches at all levels of sensitivity despite fake tuples being added.

\begin{table}[h]
  \centering
  \begin{tabular}{|l|l|l|l|l|l|}
  \hline
  \textbf{LineItem entire size} & \textbf{1\%} & \textbf{5\%} & \textbf{20\%} & \textbf{40\%} & \textbf{60\%}  \\\hline\hline
  6M & 34244 & 34048  & 29568 & 24024 & 22736 \\  \hline
  \end{tabular}
\caption{Exp 7: Number of fake tuple inserted due to QB.}
\label{tab:Number of fake tuple inserted due to QB}
\end{table}

}

\section{Conclusion}
\label{sec:Conclusion}
This paper proposes a prototype, \textsc{Panda}, and its query processing technique, query binning (QB), that serves as a meta approach on top of existing cryptographic techniques to support secure selection, range, and join queries, when a relation is partitioned into cryptographically secure sensitive and cleartext non-sensitive sub-relations. Further, we develop a new notion of \emph{partitioned data security} that restricts exposing sensitive information due to the joint processing of the sensitive and non-sensitive relations. 
Besides improving efficiency, while supporting partitioned security, interestingly, \textsc{Panda} enhances the security of the underlying cryptographic technique by preventing size, frequency-count, and workload-skew attacks. As a result, combining \textsc{Panda}'s QB with efficient but non-secure cloud-side indexable cryptographic approaches can result in an efficient and significantly more secure search. Furthermore, existing indexable/non-indexable cryptographic techniques that prevent access-patterns can also benefit from the added security that \textsc{Panda} offers.  {In future, one may extend the proposed technique for answering queries in different situations, such as the cases of different relations that are encrypted using different cryptographic techniques and the case of a single relation that is vertically partitioned into multiple relations that are encrypted using different cryptographic techniques.}


\appendix

\section{Pseudocode}
\label{app_sec:Pseudocode}

\DontPrintSemicolon
\LinesNotNumbered
\begin{algorithm}[!t]
{
\textbf{Inputs:} $S$ and $\mathit{NS}$

\textbf{Outputs:} $\mathit{SB}[\ast,\ast,\ast]$: Sensitive bins for each level of the tree.
$\mathit{NSB}[\ast,\ast,\ast]$: non-sensitive bins for each level of the tree. In both $\mathit{SB}[\ast,\ast,\ast]$ and $\mathit{NSB}[\ast,\ast,\ast]$, the first index refers to the level of tree, and  the second index refers to the bin identity.

\textbf{Variable Initialization:} An array $\mathit{unique\_ns}[]\leftarrow \mathit{allocate\_distinct\_nonsensitive\_value(\mathit{NS})}$,

$\mathit{array\_tree\_node}[]$: An array to store all the nodes at a particular level of the binary tree.

\textbf{Tree Node Initialization:}
 $\mathrm{TreeNode}(\mathit{covered\_val[]}, \mathrm{TreeNode} \: \mathit{left\_child}, \mathrm{TreeNode} \: \mathit{right\_child})$ {\scriptsize \tcp{$\mathrm{TreeNode}$ is an object class representing a node of a binary tree, where $\mathit{covered\_val[]}$ refers to an array of values covered by the tree node, $\mathit{left\_child}$ refers to the pointer to the child tree node on the left, and $\mathit{right\_child}$ refers to the pointer to the child tree node on the right side of the tree node.}}

\nl{\bf Function $\mathit{create\_binary\_tree(\mathit{unique\_ns}[])}$} \nllabel{ln:function_create_bin_binary_tree}
\Begin{

\nl \If{$|\mathit{unique\_ns}[]| == 1$}{
{\scriptsize \tcp{This `if' statement is the terminal condition of the recursive function execution and the condition for edge case where the number of elements in $\mathit{unique\_ns}[]$ is 1 }}

\nl $\mathrm{TreeNode} \: \mathit{tn} \leftarrow \mathit{allocate}(\mathrm{TreeNode}())$

\nl $\mathit{tn.covered\_val}[0] \leftarrow {\mathit{unique\_ns}}[0]$

\nl $\Return \: tn$}

\nl $\mathit{length}\leftarrow |\mathit{unique\_ns}[]|$ {\scriptsize \tcp{Allocating the count of unique non-sensitive values to a temporary variable $\mathit{length}$.}}

\nl $\mathrm{TreeNode} \: \mathit{root} \leftarrow  \mathit{allocate}(\mathrm{TreeNode}())${{\scriptsize \tcp{Allocate an empty tree node.}}}

\nl \lFor{$i \in (1,\mathit{length})$}{$\mathit{root.covered\_val}[i] \leftarrow {\mathit{unique\_ns}}[i]$} {\scriptsize \tcp{For a tree node, this statement stores all the values covered by the sub-tree rooted at this node}}

\nl $\mathit{root.left\_child} \leftarrow \mathit{create\_binary\_tree}(\mathit{unique\_ns}[1,\mathit{length}/2])$ {\scriptsize \tcp{Recursively calling $\mathit{create\_binary\_tree}()$ function with left half of $\mathit{unique\_ns}[]$ array.}}

\nl $\mathit{root.right\_child} \leftarrow \mathit{create\_binary\_tree}(\mathit{unique\_ns}[\mathit{length}/2+1, \mathit{length}])$ {{\scriptsize \tcp{Recursively calling  $\mathit{create\_binary\_tree}()$ function with right half of $\mathit{unique\_ns}[]$ array.}}}

\nl $\Return \mathit{root}$
}

\nl{\bf Function $\mathit{create\_sensitive\_bin}(\mathrm{TreeNode} \: \mathit{root}, S, \mathit{NS})$ {\scriptsize \tcp{Allocating sensitive and non-sensitive values to bins.}}}
\Begin{

\nl $max\_level \leftarrow \mathit{height}(root)$ {\scriptsize {\tcp{Finding the height of the tree created by function $\mathit{create\_binary\_tree}()$ }}}  \nllabel{ln:find_tree_height_range}

\nl \For{$i \in (1, max\_level -2)$}{
{\scriptsize {\tcp{This `for loop' is used to create bins for each level of the tree, except the root node and child nodes of the root node. This `for loop' starts from the leaf level of the tree.}}}

\nl $\mathit{array\_tree\_node}[] \leftarrow \mathit{retrieveNodes(root,i)}$; {\scriptsize \tcp{
$\mathit{retrieveNodes(root,i)}$ is a function that takes the `root' node of the tree and the level $i$ of  the tree, and then, returns all nodes at the $i^{\mathit{th}}$ level of the tree.}}

\nl $x,y \leftarrow \mathit{approx\_sq\_factors}(|\mathit{array\_tree\_node}[]|)$; $x\geq y$

\nl $|\mathit{NSB}| \leftarrow x$, $\mathit{NSB} \leftarrow \lceil |\mathit{array\_tree\_node}[]|/x\rceil$, $\mathit{SB} \leftarrow x$, $|\mathit{SB}| \leftarrow y$
{\scriptsize \tcp{Sensitive and non-sensitive bin creation for the tree nodes of the $i^{\mathit{th}}$-level of the tree}}

\nl \lFor{$(j,k) \in (1,\mathit{NSB}), (1,|\mathit{NSB}|)$}{$\mathit{NSB}[i][j][k]\leftarrow \mathit{array\_tree\_node}[(j-1)*|NSB|+k]$\nllabel{ln:non-sesnitive_allocate_range}}
{\scriptsize \tcp{Allocating non-sensitive values to bins created for nodes of the $i^{\mathit{th}}$-level of the tree.}}

\nl \lFor{$(j,k)\in (1,\mathit{NSB}),(1,|\mathit{NSB}|)$}{$\mathit{SB}[i][k][j]\leftarrow \mathit{allocateS(\mathit{NSB}[i][j][k])}$ \nllabel{ln:assign_value_to_NS_bucket_range}}
{\scriptsize \tcp{Allocating sensitive values to bins created for nodes of the $i^{\mathit{th}}$-level of the tree.}}

\nl \lFor{$j\in (1,\mathit{SB})$}{$\mathit{SB}[j][\ast]\leftarrow$ fill the bin if empty with the size limit to $y$  \nllabel{ln:assign_remaining_S_range}}
{\scriptsize \tcp{Filling the sensitive bins for the $i^{\mathit{th}}$ level with the sensitive values that are non-associated with any non-sensitive value.}}

\nl \Return $\mathit{SB}[i][\ast][\ast]$ and $\mathit{NSB}[i][\ast][\ast]$
}
}
\caption{Bin-creation algorithm for range queries.}
\label{alg:bin_creation_range}
}
\end{algorithm}
\setlength{\textfloatsep}{0pt}

\DontPrintSemicolon
\LinesNotNumbered
\begin{algorithm}[!t]
{
\textbf{Inputs:}
[$\alpha, \beta$]: The range of query values, where $\beta>\alpha$ and $\alpha,\beta \in \mathit{NS}$.
$\mathit{Tree}$: A binary tree created by Algorithm~\ref{alg:bin_creation_range}.

\textbf{Outputs:} $\mathit{SB}$: A sensitive bin; $\mathit{NSB}$: A non-sensitive bin.

\textbf{Variable Initialization:} $\mathit{iterations}\leftarrow \mathit{height}(\mathit{Tree})$. $\mathit{found} \leftarrow \mathit{false}$.

\nl{\bf Function $\mathit{retrieve\_bins\_binary\_tree(\alpha, \beta)}$} \nllabel{ln:function_create_bin_tree}
\Begin{

\nl \For{$i \in (2 , \mathit{iterations})$ \scriptsize{\tcp{Traversing the tree from the parent nodes of the leaf node to the root node in a bottom-up fashion.}}}{

\nl $\mathit{array\_tree\_node}[] \leftarrow \mathit{retrieveNodes(Tree.root,i)}${\scriptsize \tcp{
$\mathit{retrieveNodes(root,i)}$ is a function that takes the `root' node of the tree and the level $i$ of  the tree, and then, returns all nodes at the $i^{\mathit{th}}$ level of the tree.}}

\nl \For{$j \in (1,|\mathit{array\_tree\_node}[]|$}{
{\scriptsize \tcp{This `for loop' iterates over the nodes of $\mathit{array\_tree\_node[]}$ array and determines if any node covers the range of [$\alpha,\beta$] completely.}}

\nl $\mathit{len\_covered} \leftarrow |\mathit{array\_tree\_node[j].covered\_val[]}|$

\nl  \lIf{($\mathit{array\_tree\_node[j].covered\_val[0]} \leq \alpha \wedge  \mathit{array\_tree\_node[j].covered\_val[len\_covered] } \geq \beta$) {\scriptsize \tcp{If the range  [$\alpha$, $\beta$] is covered completely by $\mathit{covered\_val}[]$ array of the $j^{\mathit{th}}$ node at  level $i$ of the tree, which is stored in $\mathit{array\_tree\_node[j]}$.}}}
 {

 \nl $\mathit{NSB.append(array\_tree\_node[j].covered\_val[\ast])}$ {\scriptsize \tcp{All the values covered by the node $\mathit{array\_tree\_node[j]}$ are added to the non-sensitive bin $\mathit{NSB}.$}}

\nl  $\mathit{found} \leftarrow true$; $\mathit{level} \leftarrow i$; $\mathit{bin\_id} \leftarrow j$;
 $Break$
}
}
\nl \lIf{$\mathit{found} = true$}{$Break$}
}
\nl \Return $\mathit{SB[level,bin\_id,\ast]}$ and $\mathit{NSB}$
}
\caption{Bin retrieval algorithm for range queries (best-match method) for non-sensitive values.}
\label{alg:bin_retrieve_range_best_match}
}
\end{algorithm}
\setlength{\textfloatsep}{0pt}

\DontPrintSemicolon
\LinesNotNumbered
\begin{algorithm}[!t]
{
\textbf{Inputs:} $\mathit{NS}$: non-sensitive data, $S$: sensitive data.

\textbf{Outputs:} $\mathit{SB}$: sensitive bins; $\mathit{NSB}$: non-sensitive bins

\textbf{Variable:}
$\mathit{frequent\_non\_sensitive\{\}}$: A set holding frequent non-sensitive keywords

\nl{\bf Function $\mathit{create\_bins\_under\_workload(S,NS)}$}\nllabel{ln:function_create_bucket_workload}
\Begin{
    \nl $x, y \leftarrow \mathit{approx\_sq\_factors(|NS|)}: x \geq y$ {\scriptsize \tcp{Finding approximately square factors of the count of unique non-sensitive values.}} \nllabel{ln:largest_divisors_workload}

    \nl $|\mathit{NSB}| \leftarrow x$, $\mathit{NSB} \leftarrow \lceil |\mathit{NS}|/x\rceil$, $\mathit{SB} \leftarrow x$, $|\mathit{SB}| \leftarrow y$ {\scriptsize \tcp{Creating sensitive and non-sensitive bins.}} \nllabel{ln:number_of_buckets_workload}

    \nl $\mathit{frequent\_non\_sensitive\{\}} \leftarrow \mathit{find\_frequent(NS)}$
    {\scriptsize \tcp{Find and allocate frequent non-sensitive values to the set $\mathit{frequent\_non\_sensitive\{\}}$.}}
    \nllabel{ln:find_frequent_ns}

    \nl  \For{$(i,j) \in (1,\mathit{NSB}),(1,|\mathit{frequent\_non\_sensitive}\{\}|)$}{

    $\mathit{NSB}[i][j]\leftarrow \mathit{frequent\_non\_sensitive[(i-1)*|\mathit{NSB}|+j]}$
    {\scriptsize \tcp{Allocating frequent non-sensitive query keywords to non-sensitive bins.}}
    \nllabel{ln:non_sensitive_allocate_workload}
      }

    \nl $\mathit{remaining\_NS}\{\} \leftarrow \mathit{NS} \setminus
    \{v \mid v \in \mathit{frequent\_non\_sensitive\{\}\}} ${\scriptsize \tcp{The set $\mathit{remaining\_NS}\{\}$ contains all the infrequent non-sensitive values.}}\nllabel{ln:find_remaining_non-sensitive_workload}

    \nl
    \lFor{$i \in (1,\mathit{NSB})$}{$\mathit{NSB[i][\ast]}\leftarrow$ fill the bin if empty with the size limit to $x$ by taking values from $\mathit{remaining\_NS}\{\}$ until $\mathit{remaining\_NS}\{\}=\emptyset$}

\nl \lFor{$(i,j)\in (1,\mathit{NSB}),(1,|\mathit{NSB}|)$}{
$\mathit{SB}[j][i]\leftarrow \mathit{allocateS(\mathit{NSB}[i][j])}$} {\scriptsize \tcp{Assigning sensitive values associated with non-sensitive values.}}
\nllabel{ln:assign_value_to_S_bucket_workload}

\nl $\mathit{remaining\_S}\{\} \leftarrow S\setminus \{v \mid v \in \mathit{SB[\ast,\ast]}\}$ {\scriptsize \tcp{The set $\mathit{remaining\_S}\{\}$ contains all the sensitive values that are not associated with any non-sensitive values.}} \nllabel{ln:find_remaining_sensitive_workload}

\nl \lFor{$i\in (1,\mathit{SB})$}{$\mathit{SB}[i,\ast]\leftarrow$ fill the bin if empty with the size limit to $y$ by taking values from $\mathit{remaining\_S}\{\}$ until $\mathit{remaining\_S}\{\}=\emptyset$ \nllabel{ln:assign_remaining_S_workload}}

\nl \Return $\mathit{SB}$ and $\mathit{NSB}$
}
\caption{Bin-creation algorithm to deal with the workload-skew attack.}
\label{alg:bin_creation_workload}
}
\end{algorithm}
\setlength{\textfloatsep}{0pt}

\end{document}